\documentclass{article}
\usepackage[english]{babel}
\usepackage[english]{babel}
\usepackage{lineno} 
\usepackage{graphicx,multicol}
\usepackage{epic,eepic,epsfig}
\usepackage{amssymb}
\usepackage{amsmath,amsthm}
\usepackage{color}
\usepackage{tikz}
\newcommand{\jbj}[1]{{\color{red}#1}}

\usepackage{eso-pic} 




\setlength{\topmargin}{0cm}

\setlength{\headheight}{0.1cm}

\setlength{\headsep}{0.1cm}

\setlength{\textheight}{23.4cm}

\setlength{\oddsidemargin}{0.1cm}

\setlength{\evensidemargin}{0.1cm}

\setlength{\textwidth}{15.5cm}


\newcommand{\dXY}[2]{d(#1,#2)}  

\newcommand{\ssX}{\hspace{1.2cm}}

\newcommand{\induce}[2]{\mbox{$ #1 \langle #2 \rangle$}}

\newcommand{\dom}{\mbox{$\rightarrow$}}

\newcommand{\sdom}{\mbox{$\mapsto$}}

\newcommand{\2}{\vspace{2mm}}

\theoremstyle{plain}
\newtheorem{theorem}{Theorem}
\newtheorem{proposition}[theorem]{Proposition}
\newtheorem{claim}{Claim}[theorem]

\newtheorem{corollary}[theorem]{Corollary}
\newtheorem{lemma}[theorem]{Lemma}

\theoremstyle{definition}
\newtheorem{remark}[theorem]{Remark}

\newtheorem{conjecture}[theorem]{Conjecture}

\newenvironment{subproof}{\par\noindent {\it Proof}.\ }{\hfill$\lozenge$\par\vspace{11pt}}

\DeclareMathOperator{\ind}{ind}

\begin{document}
\bibliographystyle{plain}

\title{Spanning eulerian subdigraphs in semicomplete digraphs\thanks{Research supported by the Danish research council under grant number DFF 7014-00037B and DISCO project, PICS, CNRS.}}
\author{J. Bang-Jensen\thanks{Department of Mathematics and Computer Science, University of Southern Denmark, Odense DK-5230, Denmark (email:jbj@imada.sdu.dk). } \and Fr\'ed\'eric Havet\thanks{CNRS, Universit\'e C\^ote d'Azur, I3S and INRIA, Sophia Antipolis, France (email: frederic.havet@inria.fr)
} \and Anders Yeo\thanks{Department of Mathematics and Computer Science, University of Southern Denmark, Odense DK-5230, Denmark (email:yeo@imada.sdu.dk).
}}

\maketitle

\begin{abstract}
A digraph is {\bf eulerian} if it is connected and every vertex has its in-degree equal to its out-degree. 
 Having a spanning eulerian subdigraph is thus a weakening of having a hamiltonian cycle.
 In this paper, we first characterize the pairs $(D,a)$ of a semicomplete digraph $D$ and an arc $a$ such that $D$ has a spanning eulerian subdigraph containing $a$. In particular, we show that if $D$ is $2$-arc-strong, then every arc is contained in a spanning eulerian subdigraph.
 We then characterize the pairs $(D,a)$  of a semicomplete digraph $D$ and an arc $a$ such that $D$ has a spanning eulerian subdigraph avoiding $a$. In particular, we prove that every $2$-arc-strong semicomplete digraph has a spanning eulerian subdigraph avoiding any prescribed arc. We also prove the existence of a (minimum) function $f(k)$ such that every $f(k)$-arc-strong semicomplete digraph contains a spanning eulerian subdigraph avoiding any prescribed set of $k$ arcs: we prove $f(k)\leq (k+1)^2/4 +1$, 
conjecture $f(k)=k+1$ and establish this conjecture for $k\leq 3$ and when the $k$ arcs that we delete form a forest of stars.
 
 A digraph  $D$ is {\bf eulerian-connected} if for any two distinct vertices $x,y$, the digraph $D$ has a spanning $(x,y)$-trail. We prove that every $2$-arc-strong semicomplete digraph is eulerian-connected.
 
 All our results may be seen as arc analogues of well-known results on hamiltonian cycles in semicomplete digraphs.

\noindent{}{\bf Keywords:} Arc-connectivity, Eulerian subdigraph, Tournament, Semicomplete digraph, polynomial algorithm.
\end{abstract}

\section{Introduction}

A digraph is {\bf semicomplete} if it has no pair of non-adjacent vertices. A {\bf tournament} is a semicomplete digraph without directed cycles of length 2. Two of the classical results on digraphs are  Camion's Theorem and Red\'ei's theorem (both were originally formulated only for tournaments but they  easily extend to semicomplete digraphs). 

\begin{theorem}[Camion~\cite{Cami59}]\label{thm:camion}
Every strong semicomplete digraph has a hamiltonian cycle.
\end{theorem}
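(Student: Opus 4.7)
The plan is to proceed by contradiction with a standard longest-cycle argument. Let $D$ be strong and semicomplete, and let $C = v_1 v_2 \cdots v_k v_1$ be a directed cycle of maximum length; assume $k < |V(D)|$ and set $R = V(D) \setminus V(C)$.

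I would first establish that no vertex $u \in R$ can have both an in-neighbor and an out-neighbor on $C$: otherwise, walking around $C$ locates consecutive vertices $v_\ell, v_{\ell+1}$ with $v_\ell \to u \to v_{\ell+1}$, and substituting $v_\ell u v_{\ell+1}$ for the arc $v_\ell v_{\ell+1}$ produces a cycle of length $k+1$, contradicting the maximality of $C$. Since $D$ is semicomplete, every $u \in R$ is adjacent to every vertex of $C$, so $R$ partitions as $R = A \cup B$, where $A$ is the set of vertices of $R$ dominating $C$ entirely and $B$ the set dominated by $C$ entirely. Strong connectivity forces both $A$ and $B$ to be non-empty, since arcs must go from $R$ into $C$ and from $C$ into $R$.

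The next step is to show that for every $a \in A$ and every $b \in B$, the arc between them must go from $a$ to $b$. Indeed, if $b \to a$ existed, then $v_1 v_2 \cdots v_k b a v_1$ would be a cycle of length $k+2$ (using $v_k \to b$ and $a \to v_1$), again contradicting maximality. Consequently no vertex of $B$ has an out-neighbor in $A$ nor, by the definition of $B$, in $V(C)$; so $B$ is a non-empty proper subset of $V(D)$ with no arc leaving it, contradicting strong connectivity of $D$.

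The only step that requires care is the initial no-insertion argument, where one must verify that the presence of both an in-neighbor and an out-neighbor of $u$ on $C$ forces the \emph{consecutive} configuration on $C$; this follows by a one-line walk around the cycle, and is the same standard observation that underlies most hamiltonicity proofs in tournaments. Modulo this, the whole argument is three short cycle-extension tricks plus one global connectivity check.
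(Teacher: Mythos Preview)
Your proof is correct and is the standard longest-cycle argument for Camion's theorem. Note, however, that the paper does not actually prove Theorem~\ref{thm:camion}: it is stated there as a classical result with a citation to Camion's original paper, and is used throughout as a black box. So there is no ``paper's own proof'' to compare against.

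One tiny omission: you implicitly assume a cycle exists to start the argument. This is immediate (in a strong digraph on at least two vertices any vertex lies on a cycle), but strictly speaking it should be said. Everything else---the insertion step, the partition $R = A \cup B$, the $a \to b$ orientation forced by the $(k+2)$-cycle, and the final contradiction via $B$ having no outgoing arc---is clean and complete.
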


\begin{theorem}[R\'edei~\cite{Rede34}]\label{thm:redei}
  Every semicomplete digraph has a hamiltonian path.
  \end{theorem}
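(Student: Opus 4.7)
The plan is to proceed by induction on the number of vertices $n$ of the semicomplete digraph $D$. The base cases $n \le 2$ are immediate: a single vertex is trivially a hamiltonian path, and any two vertices are joined by at least one arc since $D$ is semicomplete, which is itself a hamiltonian path.

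For the inductive step, assume the statement holds for all semicomplete digraphs on fewer than $n$ vertices, and let $D$ be semicomplete on $n \ge 3$ vertices. I would pick an arbitrary vertex $v$; since $D - v$ is still semicomplete, the induction hypothesis yields a hamiltonian path $P = v_1 v_2 \cdots v_{n-1}$ of $D - v$. The remaining task is to splice $v$ into $P$. I would do this by a short case analysis: if $v \to v_1$ is an arc, prepend $v$ to $P$. Otherwise, by semicompleteness, $v_1 \to v$ is an arc. Let $i$ be the smallest index in $\{2, \ldots, n-1\}$ with $v \to v_i$, if any exists; the minimality of $i$ forces $v \not\to v_{i-1}$, and semicompleteness then yields $v_{i-1} \to v$, so $v_1 \cdots v_{i-1} v v_i \cdots v_{n-1}$ is a hamiltonian path of $D$. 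If no such $i$ exists, then $v_{n-1} \to v$ and appending $v$ at the end produces the desired hamiltonian path.

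There is essentially no difficult step in this argument. The only genuine use of the semicomplete hypothesis is in converting the absence of the arc $v \to v_{i-1}$ into the presence of the arc $v_{i-1} \to v$, which is precisely what allows $v$ to be inserted between $v_{i-1}$ and $v_i$ in $P$. Since both the induction hypothesis and each case produce a hamiltonian path in constant extra work, the argument also gives a straightforward polynomial-time algorithm.
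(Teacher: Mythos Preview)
Your proof is correct and is the standard inductive argument for R\'edei's theorem. Note, however, that the paper does not supply its own proof of this statement: Theorem~\ref{thm:redei} is merely quoted as a classical result with a citation to R\'edei's original paper, so there is nothing in the paper to compare your argument against. Your insertion argument (prepend, append, or splice $v$ between $v_{i-1}$ and $v_i$ using the minimal index $i$ with $v\to v_i$) is exactly the textbook proof, and the case analysis is handled cleanly.
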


  Thomassen~\cite{thomassenJCT28} proved the following. (It was originally formulated only for tournaments but the proof works for semicomplete digraphs as it easily follows from Theorem~\ref{thm:cond}.) 
\begin{theorem}[Thomassen~\cite{thomassenJCT28}]\label{thm:arc-in-ham}
In a $3$-strong semicomplete digraph, every arc is contained in a  hamiltonian cycle.
\end{theorem}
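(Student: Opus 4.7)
The plan is to derive Theorem~\ref{thm:arc-in-ham} from Theorem~\ref{thm:cond} (as hinted in the excerpt) via the reduction ``Hamilton cycle through $xy$'' $\iff$ ``Hamilton $(y,x)$-path in $D$''. Indeed, given any Hamilton $(y,x)$-path, appending the arc $xy$ produces the required Hamilton cycle; conversely, any Hamilton cycle of $D$ containing $xy$ is the closure of a Hamilton $(y,x)$-path obtained by removing $xy$. So I would reformulate the goal as: produce a Hamilton $(y,x)$-path in $D$.

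To produce this path via Theorem~\ref{thm:cond}, I would verify that its hypotheses (presumably a sufficient structural/connectivity condition for a semicomplete digraph to contain a Hamilton path between two prescribed vertices) hold for $(D, y, x)$. Under the assumption that $D$ is $3$-strong, one has in particular that $D - \{x, y\}$ is $1$-strong semicomplete, hence Hamiltonian by Theorem~\ref{thm:camion}, and both $y$ and $x$ have at least two out- and in-neighbours in $V(D) \setminus \{x, y\}$, respectively. These observations are typically what is needed to exclude the degenerate configurations permitted by condition-theorems of Thomassen/Bang-Jensen type.

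If one prefers a hands-on argument without invoking Theorem~\ref{thm:cond}, the combinatorial idea is as follows: take a Hamilton cycle $C = v_1 v_2 \ldots v_{n-2} v_1$ of $D - \{x, y\}$ (by Camion, Theorem~\ref{thm:camion}) and seek an index $i$ such that $y v_i \in A(D)$ and $v_{i-1} x \in A(D)$; then $y\, v_i\, v_{i+1} \ldots v_{i-1}\, x$ is the desired Hamilton $(y,x)$-path. The main obstacle is establishing the existence of such an $i$. Setting $Y = \{i : y v_i \in A\}$ and $X = \{i : v_i x \in A\}$, one has $|Y|, |X| \geq 2$ from the $3$-strong connectivity, and the required $i$ exists iff $Y \cap (X + 1) \neq \emptyset$ (indices modulo $n-2$). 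A failure of this condition pins down a very constrained adjacency pattern between $\{x,y\}$ and the cycle $C$; in that situation, one would modify $C$ by swapping a segment so as to reshuffle the relative cyclic positions of $X$ and $Y$, using that the $3$-strong connectivity of $D$ provides enough ``extra'' arcs inside $D-\{x,y\}$ to make such a swap possible. Carrying out this case analysis cleanly, and verifying that no exceptional configuration escapes the argument, is the technical crux.
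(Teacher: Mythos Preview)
Your reduction is correct and matches the paper's route: a Hamilton cycle through the arc $xy$ is the same as a Hamilton $(y,x)$-path, and the paper indeed derives Theorem~\ref{thm:arc-in-ham} from Theorem~\ref{thm:cond} exactly this way.

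The gap is that you never verify the actual hypotheses of Theorem~\ref{thm:cond}. That theorem requires $D$ to be $2$-strong \emph{and} to contain three internally disjoint $(y,x)$-paths, each of length greater than~$1$. The conditions you speculate about ($D-\{x,y\}$ strong, degree bounds on $x$ and $y$) are not these hypotheses, so checking them would not complete the argument. What one actually checks is this: since $D$ is $3$-strong, Menger's theorem yields three internally disjoint $(y,x)$-paths; in a tournament the arc $yx$ does not exist (because $xy$ does), so all three paths automatically have length at least~$2$, and Theorem~\ref{thm:cond} applies directly. For a semicomplete digraph in which both $xy$ and $yx$ are present, one of the three Menger paths may be the single arc $yx$ and a short additional argument is needed; the paper does not spell this out, but this is the only point where any work beyond invoking Menger is required.

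Your hands-on alternative is only a sketch: you correctly isolate the obstruction (the cyclic intersection $Y\cap (X+1)$ may be empty) but explicitly leave the case analysis---``the technical crux''---undone, so as written it is not a proof.
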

The $3$-strong assumption in this theorem best possible:  Thomassen~\cite{thomassenJCT28} described an infinite class of $2$-strong tournaments containing an arc which is not in any hamiltonian cycle. It is easy to modify his example to show that there is no $k$ such that every $k$-arc-strong tournament has a hamiltonian cycle containing any given arc.
No characterization of the set of arcs which belong to a hamiltonian cycle in a semicomplete digraph (or a tournament) is known. \\

A natural question is whether the $3$-strong assumption of Theorem~\ref{thm:arc-in-ham} can be relaxed if instead of a hamiltonian cycle, we only require a spanning eulerian subdigraph.
In this paper we answer this question by proving  the following analogue to Theorem~\ref{thm:arc-in-ham}.
\begin{theorem}
  \label{thm:2as-always-good}
  Let $D=(V,A)$ be a $2$-arc-strong semicomplete digraph. For every arc $a\in A$ there exists a spanning eulerian subdigraph of $D$ containing $a$.
  \end{theorem}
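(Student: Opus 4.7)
My plan is based on the equivalence that $D$ has a spanning eulerian subdigraph containing $a = xy$ if and only if $D - a$ admits a spanning $(y,x)$-trail. For the forward direction, any eulerian circuit of a spanning eulerian subdigraph through $a$, traversed starting immediately after $a$, gives a spanning $(y,x)$-trail in $D - a$. For the reverse direction, appending $a$ to such a trail produces a spanning closed trail of $D$, whose arc set is a spanning eulerian subdigraph containing $a$. It thus suffices to construct a spanning $(y,x)$-trail in $D - a$.

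A sufficient condition for such a trail is the existence of a hamiltonian cycle $C$ of $D - a$ together with a $(y,x)$-trail $P$ in the arc-deleted digraph $D - C - \{a\}$. Indeed, $H = C \cup \{a\} \cup P$ is a spanning (because of $C$) connected (again because of $C$) subdigraph of $D$ in which every vertex is balanced: the imbalances at $x$ and $y$ caused by adding $a$ to $C$ are exactly cancelled by the endpoints of $P$, while each vertex internal to $P$ gains equal in- and out-degree. Camion's theorem (Theorem~\ref{thm:camion}) provides the hamiltonian cycle $C$ since $D - a$ is strong and semicomplete (strong because $D$ is $2$-arc-strong). The remaining task is to guarantee the existence of $P$, i.e., a $(y,x)$-path in $D - C - \{a\}$, which amounts to showing that $y$ reaches $x$ in that arc-deleted digraph.

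The main obstacle is that $P$ may fail to exist for a given $C$. In that case there is a vertex set $S$ with $y \in S$, $x \notin S$, and every arc of $D$ from $S$ to $V \setminus S$ lies in $C$ (note that $a$ enters $S$, so it is not in this cut). By $2$-arc-strongness of $D$ there are at least two such outward arcs, both of which must then belong to $C$. I would handle this by modifying $C$: since, by Menger's theorem, $D$ has two arc-disjoint $(y,x)$-paths, neither of which uses $a$ (as $a$ is oriented from $x$ to $y$), a local exchange between $C$ and the second such path should produce a new hamiltonian cycle $C'$ of $D - a$ using strictly fewer arcs of the critical cut. Iterating this modification, or handling all critical cuts simultaneously via a global choice of $C'$, would then yield a hamiltonian cycle for which the desired trail $P$ exists. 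Carrying out these arc exchanges rigorously within the semicomplete structure, while preserving the hamiltonicity of the modified cycle, will be the technical heart of the argument.
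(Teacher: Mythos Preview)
Your reduction---that $D$ has a spanning eulerian subdigraph containing $a=xy$ iff $D\setminus a$ has a spanning $(y,x)$-trail---is correct and is exactly the reduction the paper uses. But the proposal has two genuine problems.

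First, the appeal to Camion's theorem is wrong: $D\setminus a$ need not be semicomplete. If $yx\notin A$ (which is permitted, since semicompleteness of $D$ only guarantees at least one arc between $x$ and $y$), then $x$ and $y$ become non-adjacent in $D\setminus a$, and Theorem~\ref{thm:camion} does not apply. This particular slip is easy to patch---take a hamiltonian cycle $C$ of $D$ itself; if $a\in A(C)$ then $C\setminus a$ is already the desired spanning $(y,x)$-trail---so one may assume $a\notin A(C)$ and $C\subseteq D\setminus a$.

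Second, and more seriously, the heart of your argument is absent. You need a $(y,x)$-path in $D\setminus (A(C)\cup\{a\})$, and when the cut obstruction $(S,\bar S)$ appears you propose to modify $C$ by ``a local exchange between $C$ and the second such path''. No such exchange is exhibited, and it is far from clear that one can always swap arcs between a hamiltonian cycle and a Menger path so as to (i) keep the cycle hamiltonian and (ii) strictly decrease the number of $C$-arcs crossing the bad cut. The two arc-disjoint $(y,x)$-paths guaranteed by Menger may well share all of their $(S,\bar S)$-crossing arcs with $C$, so there is nothing obvious to exchange. You yourself flag this as ``the technical heart of the argument''; as written it is a hope, not a proof.

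The paper takes a different route. It proves the stronger Theorem~\ref{thm:spanning-trail} (for any strong semicomplete $D$ with two arc-disjoint $(x,y)$-paths there is a spanning $(x,y)$-trail in $D\setminus\{yx\}$) directly, by induction on $|V(D)|$. The inductive step uses a strengthening of Camion (Lemma~\ref{lem:camion-extended}) to produce a cycle through all vertices off a chosen short path together with one anchor vertex on it, and then analyses the cut $(X,Y)$ that arises when this fails; the induction is applied to $D\langle Y\rangle$ with one auxiliary arc. Theorem~\ref{thm:2as-always-good} then follows in one line: $2$-arc-strongness and Menger give two arc-disjoint $(y,x)$-paths, and Theorem~\ref{thm:spanning-trail} supplies the spanning $(y,x)$-trail in $D\setminus\{xy\}$.
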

 In addition (and contrary to the lack of a known characterization for hamiltonian cycles mentioned above), in Section~\ref{sec:charac}, we characterize the pairs $(D,a)$ such that $D$ is a strong semicomplete digraph containing the arc $a$ and no spanning eulerian subdigraph of $D$ contains the arc $a$.

In Section~\ref{sec:avoid}, we also study spanning eulerian subdigraphs of a semicomplete digraph avoiding a prescribed set of arcs.
Fraisse and Thomassen~\cite{fraisseGC3} proved the following result on hamiltonian cycles avoiding a set of prescribed arcs. For a strengthening of this result, see \cite{bangCPC6}. The connectivity requirement of Theorem~\ref{HCavoidkarcs} is best possible as there are $k$-strong tournaments with vertices of out-degree exactly $k$.

\begin{theorem}[Fraisse and Thomassen~\cite{fraisseGC3}]\label{HCavoidkarcs}
Every $(k+1)$-strong tournament contains a hamiltonian cycle avoiding any prescribed set of $k$ arcs.
\end{theorem}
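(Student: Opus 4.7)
The approach I would take is induction on $k$, with Camion's theorem (Theorem~\ref{thm:camion}) providing the base case $k=0$: every strong tournament has a hamiltonian cycle, and the empty set of arcs is trivially avoided.

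For the inductive step, given a $(k+1)$-strong tournament $T$ with a prescribed set $F=\{a_1,\dots,a_k\}$, I would pick any arc $a\in F$ and apply the induction hypothesis to $T$ (which is in particular $k$-strong) with the smaller set $F\setminus\{a\}$. This yields a hamiltonian cycle $C$ of $T$ avoiding every arc of $F$ except possibly $a$. If $a\notin C$ we are done; otherwise $a=(u,v)$ lies on $C$, and writing $C=uvx_1\cdots x_{n-3}u$ one may view $P=vx_1\cdots x_{n-3}u$ as a hamiltonian $(v,u)$-path in $T\setminus F$. The problem then reduces to producing any other hamiltonian cycle of $T\setminus F$, since all conflicts with $F$ except possibly at the arc $a$ have already been removed.

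To perform this modification, the plan is a local reroute that swaps a short segment of $C$ for an alternative in $T\setminus F$. The $(k+1)$-strongness is used via Menger's theorem: even after deleting the $k$ arcs of $F$, the digraph $T\setminus F$ retains enough connectivity to provide internally disjoint alternative routes around the arc $a$. Combined with the tournament structure, which supplies a chord between every pair of non-consecutive vertices of $C$, I would iterate an extension-rotation procedure along $P$, searching for an arc $(x_i,v)\notin F$ together with an arc $(u,x_{i+1})\notin F$ (or an analogous pair of arcs enabling a segment reversal) and splicing $P$ into a new hamiltonian cycle of $T\setminus F$.

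The main obstacle is verifying that such a splice can always be performed without reintroducing any of the avoided arcs $a_1,\dots,a_{k-1}$: each elementary swap risks using one of those forbidden arcs, so a careful counting or Menger-style argument is needed to guarantee that at least one valid swap remains. This is precisely where the full strength of the $(k+1)$-strong hypothesis is consumed, and it is consistent with the tightness of the bound noted just before the statement, namely that a $k$-strong tournament with a vertex of out-degree exactly $k$ can have all its out-arcs prescribed as forbidden, leaving no way to close a hamiltonian cycle through that vertex.
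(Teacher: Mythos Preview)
The paper does not prove this theorem at all: it is quoted as a known result of Fraisse and Thomassen with a citation to~\cite{fraisseGC3}, and is only used as background and motivation. There is therefore no ``paper's own proof'' to compare against.

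As to the correctness of your sketch, the crucial step is missing. Everything up to the point where you hold a hamiltonian cycle $C$ of $T$ avoiding $F\setminus\{a\}$ but containing $a=uv$ is fine and easy; the whole difficulty is precisely the reroute you describe only in outline. Your proposed swap ``find $x_iv\notin F$ and $ux_{i+1}\notin F$'' does not in general yield a hamiltonian cycle: with $C=uvx_1\cdots x_{n-3}u$, those two arcs give the cycle $ux_{i+1}\cdots x_{n-3}u$ together with the separate cycle $vx_1\cdots x_iv$, not a single hamiltonian cycle (and in a directed setting you cannot reverse the segment $vx_1\cdots x_i$). More generally, an extension--rotation scheme in tournaments requires a directed rotation lemma, and the counting you allude to (``at least one valid swap remains'') is exactly the nontrivial content of the Fraisse--Thomassen proof; invoking Menger on $T\setminus F$ only gives $1$-strongness, which by itself does not control the forbidden arcs during the rerouting. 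As you yourself note, ``the main obstacle is verifying that such a splice can always be performed'' --- that obstacle is the theorem, and your proposal does not overcome it.
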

This theorem does not extend to semicomplete digraphs. Indeed the 2-strong semicomplete digraph obtained from a $4$-cycle by adding a 2-cycle between each of the two pairs of non-adjacent vertices has a unique hamiltonian cycle, and thus no arc of this cycle cannot be avoided.
Observe however that Theorem~\ref{thm:arc-in-ham} implies that every $3$-strong tournament contains a hamiltonian cycle avoiding any prescribed arc.
Improving a previous bound by Bang-Jensen and Thomassen, Guo~\cite{guoDAM79b} proved that every $(3k+1)$-strong semicomplete digraph contains a spanning $(k+1)$-strong tournament. 
Together with Theorem \ref{HCavoidkarcs}, this implies that every $(3k+1)$-strong  semicomplete digraph contains a hamiltonian cycle avoiding any prescribed set
 of $k$ arcs. We conjecture that a much lower connectivity suffices.

\begin{conjecture}\label{conj:avoid}
Let $k$ be a non-negative integer.
Every $(k+2)$-strong semicomplete digraph contains a hamiltonian cycle avoiding any prescribed set of $k$ arcs.
\end{conjecture}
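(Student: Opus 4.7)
The plan is by induction on $k$, with base case $k=0$ being Camion's theorem (Theorem~\ref{thm:camion}). The case $k=1$ is also easy: since $D$ is $(k+2)$-strong and thus $3$-arc-strong, $D-F$ is still $2$-arc-strong, hence strong and semicomplete, so Camion applies again. New difficulties appear only for $k \geq 2$: if $F$ contains both arcs of some $2$-cycle $\{uv,vu\}$, then $D-F$ is no longer semicomplete, and any desired hamiltonian cycle must place $u$ and $v$ non-consecutively.

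My first line of attack is a reduction to Theorem~\ref{HCavoidkarcs}. Let $F_0 \subseteq F$ be the union of $2$-cycles of $D$ entirely inside $F$, and $F_1 = F \setminus F_0$. For each $2$-cycle of $D$ not entirely in $F$, delete one of its two arcs (never one of $F_1$); the resulting digraph is a spanning tournament $T$ of $D-F_0$ containing $F_1$. If the deletions can be arranged so that $T$ is $(k-|F_0|+1)$-strong, then Theorem~\ref{HCavoidkarcs} applied to $T$ and $F_1$ produces a hamiltonian cycle of $D-F_0$ avoiding $F_1$, i.e., a hamiltonian cycle of $D$ avoiding $F$. This reduces the problem to an orientation question: can every $(k+2)$-strong semicomplete digraph be oriented to a spanning $(k+1)$-strong tournament that retains a prescribed set $F_1$ of at most $k$ arcs (and such that the endpoints of each $2$-cycle in $F_0$ end up non-adjacent in the eventual cycle)?

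The main obstacle is precisely this orientation step: Guo's theorem~\cite{guoDAM79b} only guarantees a spanning $(k+1)$-strong tournament from $(3k+1)$-strong, a gap that is already conjectured by Bang-Jensen to be closable but is itself a major open problem independent of ours. A second line of attack I would pursue in parallel is a direct cycle-exchange argument in the spirit of Thomassen's proof of Theorem~\ref{thm:arc-in-ham} and the Fraisse-Thomassen proof of Theorem~\ref{HCavoidkarcs}: start from a hamiltonian cycle $C$ (Camion), and for each $a = uv \in F \cap C$, reroute $C$ through one of the $k+2$ internally-disjoint $(u,v)$-paths in $D-a$ (guaranteed by Menger and $(k+2)$-strong), chosen to avoid $F \setminus \{a\}$ and reinsertable into $C$ through an exchange argument on the remaining vertices. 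The delicate step is to carry out successive reroutings without reintroducing previously-avoided arcs, and to handle a $2$-cycle in $F$ by forcing $u$ and $v$ into non-consecutive positions on $C$. To calibrate both approaches, I would first test them on $k = 2$ with $F$ a $2$-cycle, since this is the smallest case exhibiting the genuinely new obstruction beyond the tournament setting of Theorem~\ref{HCavoidkarcs}.
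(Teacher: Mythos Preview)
This statement is a \emph{conjecture}: the paper does not prove it, and offers only the case $k\le 1$ (the case $k=0$ is Camion's theorem; for $k=1$ the paper invokes the Bang-Jensen--Jord\'an result \cite{bangDM310} that every $3$-strong semicomplete digraph contains a spanning $2$-strong tournament, and then applies Theorem~\ref{HCavoidkarcs}). Your proposal is therefore an outline of possible attacks on an open problem, and you yourself flag the main obstacles; that is appropriate. There is, however, a genuine error in what you claim to be easy.

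Your argument for $k=1$ is wrong. You assert that $D-F$ is ``strong and semicomplete, so Camion applies again.'' But if $F=\{uv\}$ and $vu\notin A(D)$, then $u$ and $v$ are non-adjacent in $D-F$, so $D-F$ is \emph{not} semicomplete and Camion does not apply. The difficulty you locate only at $k\ge 2$ (loss of semicompleteness) already appears at $k=1$. This is exactly why the paper's route for $k=1$ goes through the spanning $2$-strong tournament: one first passes to a tournament inside $D$, and only then deletes the arc. Your later reduction has the same issue: when $F_0\neq\emptyset$ the digraph $D-F_0$ has non-adjacent pairs, so what you call ``a spanning tournament $T$ of $D-F_0$'' is not a tournament at all, and Theorem~\ref{HCavoidkarcs} cannot be invoked on it.

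Your identification of the orientation bottleneck is accurate and matches the paper's discussion: closing the gap between Guo's $(3k+1)$-strong and the desired $(k+2)$-strong for a spanning $(k+1)$-strong tournament is precisely what would settle the conjecture via Theorem~\ref{HCavoidkarcs}, and that orientation question is itself open. The rerouting/exchange line you sketch is plausible as a strategy but, as you note, the bookkeeping when several arcs of $F$ interact (especially a full $2$-cycle in $F$) is the crux, and nothing in your outline indicates how to control it. In short: the plan is reasonable as a research programme, but it is not a proof, and the ``easy'' case $k=1$ needs the Bang-Jensen--Jord\'an input rather than a direct appeal to Camion.
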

Bang-Jensen and Jord\'an \cite{bangDM310} proved that every 3-strong semicomplete digraph contains a spanning 2-strong tournament. Combining this with Theorem \ref{HCavoidkarcs} shows that Conjecture~\ref{conj:avoid} holds for $k=1$.

\medskip

As an analogue to Theorem \ref{HCavoidkarcs},  we prove  that there is a function $f(k)$ such that every $f(k)$-arc-strong semicomplete digraph contains a spanning eulerian subdigraph avoiding any prescribed set of $k$ arcs. In Theorem~\ref{thm:f(k)}, we show that $f(k)\leq  (k+1)^2/4 +1$. This upper bound is certainly not tight. 
Since there are $k$-arc-strong semicomplete digraphs in which one or more vertices have out-degree $k$, we have $f(k)\geq k+1$. We conjecture that $f(k)=k+1$.
\begin{conjecture}\label{Euleravoidkarcs}
For every non-negative integer  $k$, every $(k+1)$-arc-strong semicomplete digraph $D$ has a spanning eulerian subdigraph that avoids any prescribed set of $k$ arcs.
 \end{conjecture}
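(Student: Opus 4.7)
The plan is to proceed by induction on $k$. As base cases, take $k = 0$ (Camion's Theorem~\ref{thm:camion} gives a hamiltonian cycle in a $1$-arc-strong, hence strong, semicomplete digraph, and such a cycle is a spanning eulerian subdigraph), and $k = 1$ (which is the 2-arc-strong avoidance result established in Section~\ref{sec:avoid}). For the inductive step, let $D$ be $(k+1)$-arc-strong semicomplete and $F = \{a_1, \ldots, a_k\}$ the arcs to be avoided.

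Call an arc $a = uv \in F$ \emph{reducible} if its reverse $vu$ also lies in $A(D)$. If some $a \in F$ is reducible, then $D - a$ is still semicomplete (since $u$ and $v$ remain adjacent via $vu$) and is $k$-arc-strong (removing a single arc drops arc-strong connectivity by at most one). Applying the induction hypothesis to $D - a$ with the $(k-1)$-arc set $F \setminus \{a\}$ yields a spanning eulerian subdigraph of $(D - a) - (F \setminus \{a\}) = D - F$, which is the required subdigraph in $D$. Hence we may assume every arc of $F$ is \emph{irreducible}: for each $uv \in F$, we have $vu \notin A(D)$. In this case, $F$ consists entirely of ``tournament arcs'' of $D$, and deletion of $F$ genuinely destroys semicompleteness.

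For the irreducible case, I would invoke Edmonds' branching theorem: since $D$ is $(k+1)$-arc-strong, for any chosen root $r$ it has $k+1$ arc-disjoint out-branchings and $k+1$ arc-disjoint in-branchings rooted at $r$. Because $|F| = k$ and each arc belongs to at most one branching in each family, at least one out-branching $B^+$ and one in-branching $B^-$ each avoid $F$. Their union is a spanning connected subdigraph of $D - F$; let $\beta(v) = d^+_{B^+ \cup B^-}(v) - d^-_{B^+ \cup B^-}(v)$ be the imbalance at $v$. The remaining task is to find a $\{0,1\}$-valued circulation on $A(D - F) \setminus (B^+ \cup B^-)$ whose vertex divergence equals $-\beta$, so that the symmetric difference with $B^+ \cup B^-$ is a spanning eulerian subdigraph of $D - F$.

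The main obstacle is this balancing step. The Hoffman-type feasibility conditions translate into cut inequalities in $D - F - (B^+ \cup B^-)$ that are not immediately implied by the $(k+1)$-arc-strong hypothesis, because both the deletion of $F$ and the deletion of the branchings consume arc-connectivity; in the tight regime $\lambda(D) = k+1$ almost no slack remains. A promising refinement is to select $B^+$ and $B^-$ not arbitrarily but so as to minimize the total imbalance, using the freedom provided by Edmonds' theorem and the semicompleteness of $D$ to make $\beta$ small on every cut. Alternatively, reformulating the problem via matroid intersection (the spanning connected subdigraphs and the balanced subdigraphs each admit matroidal descriptions) combined with the submodularity of arc-connectivity might yield the tight bound $f(k) = k+1$; this is precisely where the existing case analyses succeed only for $k \leq 3$ or when $F$ has the restricted shape of a forest of stars.
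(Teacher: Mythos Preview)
The statement you are attempting to prove is a \emph{conjecture} in the paper, not a theorem: the authors establish it only for $k\leq 3$ (Corollary~\ref{cor:k<=2}, Theorem~\ref{thm:k=3}) and for the special case where the deleted arcs form a forest of stars (Theorem~\ref{thm:star-set}), and otherwise prove only the weaker bound $f(k)\leq (k+1)^2/4+1$ (Theorem~\ref{thm:f(k)}). There is therefore no proof in the paper to compare your approach against; the general statement is open.

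Your reducible-arc reduction is correct but does no real work: in a tournament no arc has its reverse present, so every arc of $F$ is irreducible and the inductive step never fires. The whole difficulty sits in the irreducible case, which you explicitly leave incomplete. In that case your branching-plus-balancing framework has a genuine gap. Terminologically, what you need on $A(D-F)\setminus A(B^+\cup B^-)$ is a $\{0,1\}$-valued $b$-flow with prescribed divergence $-\beta$, not a circulation, and you want the \emph{union} with $B^+\cup B^-$, not the symmetric difference (removing arcs of the branchings can destroy connectedness). More substantively, the Hoffman feasibility condition for such a $b$-flow asks that across every cut $(S,\bar S)$ the remaining arcs suffice to carry the required net divergence; but on a tight cut of size exactly $k+1$ in $D$, the arcs of $F$ together with those used by $B^+$ and $B^-$ can exhaust the cut while $\sum_{v\in S}\beta(v)$ is still nonzero. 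Neither ``choose the branchings to minimise $\beta$'' nor the matroid-intersection idea is developed into an argument that semicompleteness forces these cut inequalities; this is exactly the barrier at which the paper's component-merging techniques (Lemma~\ref{lem:merge}, Lemma~\ref{lem:merge-star}) stall beyond star-sets and $k\leq 3$, and your outline does not circumvent it.
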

Observe that Camion's Theorem implies this conjecture when $k=0$, that is $f(0)=1$.
In Corollary~\ref{cor:k<=2},  we prove Conjecture~\ref{Euleravoidkarcs} for $k\leq 2$ and in Theorem \ref{thm:k=3}, we prove it for $k=3$.
Hence $f(1)=2$, $f(2)=3$ and $f(3)=4$.

In Section~\ref{sec:unavoid}, we characterize the pairs $(D,a)$ such that $D=(V,A)$ is a strong semicomplete digraph, $a\in A$ and every spanning eulerian subdigraph of $D$ contains the arc $a$ (Theorem~\ref{thm:charac-unavoid}).

\medskip

A digraph $D$ is (strongly) {\bf hamiltonian-connected} if for any pair of distinct vertices $x,y$, $D$ has a hamiltonian path from $x$ to $y$.
Thomassen~\cite{thomassenJCT28} proved the following. (Again it was originally formulated only for tournaments but the proof works for semicomplete digraphs as it easily follows from Theorem~\ref{thm:cond}.)

\begin{theorem}[Thomassen~\cite{thomassenJCT28}]\label{thm:ham-conn}
Every $4$-strong semicomplete digraph is hamiltonian-connected. 
\end{theorem}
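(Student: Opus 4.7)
The plan is to derive Theorem~\ref{thm:ham-conn} from Theorem~\ref{thm:arc-in-ham} by a simple arc-augmentation: turning a request for a Hamiltonian $(x,y)$-path into a request for a Hamiltonian cycle through a prescribed arc.

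Given distinct vertices $x, y$ in a $4$-strong semicomplete digraph $D$, I would form an auxiliary digraph $D'$ as follows: if $yx \in A(D)$, let $D' = D$; otherwise, let $D' = D + yx$. In either case $D'$ is semicomplete, because if $xy$ was already present in $A(D)$, the added arc $yx$ merely creates a $2$-cycle between $x$ and $y$, which is allowed in a semicomplete digraph. Since adding an arc cannot decrease vertex-connectivity, $D'$ is at least $4$-strong, and in particular $3$-strong.

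I would then apply Theorem~\ref{thm:arc-in-ham} to $D'$ with the arc $yx$, obtaining a Hamiltonian cycle $C$ of $D'$ that contains $yx$. Deleting $yx$ from $C$ leaves a Hamiltonian path $P$ from $x$ to $y$ in $D'$. Because $C$ is a simple directed cycle that already uses $yx$, it cannot simultaneously use $xy$ (otherwise $y$ would be visited twice), so every arc of $P$ belongs to $A(D') \setminus \{yx\} \subseteq A(D)$. Hence $P$ is the desired Hamiltonian $(x,y)$-path of $D$.

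The main (non-)obstacle: the reduction is essentially mechanical once Theorem~\ref{thm:arc-in-ham} is in hand; the only points to check are that $D'$ remains semicomplete and $3$-strong after the optional addition of $yx$, both of which are immediate from monotonicity of connectivity under arc additions and from the fact that semicomplete digraphs permit $2$-cycles. This matches the paper's remark that Theorem~\ref{thm:ham-conn} ``easily follows from Theorem~\ref{thm:cond}'', the structural condition that underlies Theorem~\ref{thm:arc-in-ham} itself.
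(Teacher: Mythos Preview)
Your reduction is correct: adding the arc $yx$ (if absent) keeps $D'$ semicomplete and does not lower the vertex-connectivity, so Theorem~\ref{thm:arc-in-ham} applies and the Hamiltonian cycle through $yx$ yields the desired Hamiltonian $(x,y)$-path in $D$. (The aside about $C$ not using $xy$ is unnecessary: the only arc possibly outside $A(D)$ is $yx$, and that is exactly the one you delete.)

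The route the paper has in mind is slightly more direct: apply Theorem~\ref{thm:cond} to $D$ itself. Since $D$ is $4$-strong it is in particular $2$-strong, and by Menger's theorem there are four internally disjoint $(x,y)$-paths, at most one of which can be the single arc $xy$; hence at least three have length greater than~$1$, and Theorem~\ref{thm:cond} gives the Hamiltonian $(x,y)$-path immediately. Your approach instead passes through Theorem~\ref{thm:arc-in-ham}, which is itself a consequence of Theorem~\ref{thm:cond}, so you are invoking the same structural tool one step removed. The trade-off is minor: your argument avoids quoting Menger explicitly but relies on a statement (Theorem~\ref{thm:arc-in-ham}) whose own derivation needs it; the paper's route is a one-line application of Theorem~\ref{thm:cond} with no auxiliary digraph.
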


The $4$-strong assumption in this theorem best possible:  Thomassen~\cite{thomassenJCT28} described infinitely many $3$-strong tournaments that are not hamiltonian-connected.
Again, it is natural to ask whether the connectivity assumption of Theorem~\ref{thm:ham-conn} can be relaxed if instead of hamiltonian-connected, we only require the digraph to eulerian-connected.
A digraph $D$ is {\bf eulerian-connected} if for any two vertices $x,y$, the digraph $D$ has a spanning $(x,y)$-trail.
We prove that every $2$-arc-strong semicomplete digraph is eulerian-connected.

\begin{theorem}\label{thm:2-arc}
Every $2$-arc-strong semicomplete digraph is eulerian-connected.
\end{theorem}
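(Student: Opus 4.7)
The plan is to reduce the statement to Theorem~\ref{thm:2as-always-good} via the standard device: given a pair $(x,y)$, turn the existence of a spanning $(x,y)$-trail in $D$ into the existence of a spanning eulerian subdigraph (of a possibly slightly enlarged digraph) that contains a specific arc from $y$ to $x$.

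Let $D=(V,A)$ be $2$-arc-strong and semicomplete, and fix distinct $x,y\in V$. First, I would pass to an auxiliary digraph $D'$: if $yx\in A$, take $D'=D$; otherwise take $D'=D+yx$, which remains semicomplete (a missing arc is filled in, so no parallel pair is created) and remains $2$-arc-strong, since adding an arc never decreases arc-connectivity. In either case $D'$ is a $2$-arc-strong semicomplete digraph containing the arc $yx$, and $A(D')\setminus\{yx\}\subseteq A$.

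Next, I would apply Theorem~\ref{thm:2as-always-good} to $D'$ with the prescribed arc $yx$, obtaining a spanning eulerian subdigraph $H$ of $D'$ that contains $yx$. Since $H$ is connected and satisfies $d^+_H(v)=d^-_H(v)$ for every vertex $v$, it admits a closed eulerian trail; starting that closed trail with the arc $yx$ and then discarding that first arc leaves an open trail from $x$ to $y$ using every remaining arc of $H$ exactly once. All its arcs lie in $A$, it spans $V$ because $H$ does and the only vertices whose incidences were affected by removing $yx$ are $x$ and $y$, which appear as the endpoints of the new trail. Hence it is a spanning $(x,y)$-trail of $D$.

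Since $x$ and $y$ were arbitrary, $D$ is eulerian-connected. Essentially all of the substance is carried by Theorem~\ref{thm:2as-always-good}; the only mild subtlety is the separation into the two cases $yx\in A$ and $yx\notin A$, together with checking that the auxiliary digraph $D'$ is still semicomplete and $2$-arc-strong in the second case, both of which are immediate.
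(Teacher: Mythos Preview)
Your argument is correct. The reduction to Theorem~\ref{thm:2as-always-good} via the auxiliary digraph $D'=D\cup\{yx\}$ works exactly as you say: $D'$ stays semicomplete and $2$-arc-strong, the spanning eulerian subdigraph $H$ containing $yx$ yields a spanning $(x,y)$-trail of $D$ once $yx$ is removed, and the spanning property survives because every vertex of $H$ still occurs on the open trail.

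The paper's route is marginally more direct: it derives Theorem~\ref{thm:2-arc} straight from Theorem~\ref{thm:spanning-trail} (two arc-disjoint $(x,y)$-paths exist by Menger, hence a spanning $(x,y)$-trail in $D\setminus\{yx\}$ exists), without passing through Theorem~\ref{thm:2as-always-good} or the auxiliary digraph. Since Theorem~\ref{thm:2as-always-good} is itself an immediate corollary of Theorem~\ref{thm:spanning-trail}, your detour is harmless and the two arguments are essentially the same in substance; the paper just skips the ``add $yx$, then remove it'' bookkeeping.
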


 This theorem can been seen as an analogue of Theorem~\ref{thm:ham-conn}.
The $2$-arc-strong condition is best possible. In Proposition~\ref{prop:exple}, we describe strong tournaments with arbitrarily large in- and out-degrees in which there is an arc contained in no spanning eulerian subdigraph.

To prove Theorems~\ref{thm:arc-in-ham} and~\ref{thm:ham-conn}. Thomassen~\cite{thomassenJCT28} gave the following sufficient condition for a semicomplete digraph  to contain a hamiltonian $(x,y)$-path, which implies both results immediately.

\begin{theorem}[Thomassen~\cite{thomassenJCT28}]\label{thm:cond}
Let $T$ be a $2$-strong semicomplete digraph,  and let $x$ and $y$ be two distinct vertices of $T$. If there are three internally disjoint $(x,y)$-paths of length greater than $1$, then there is a hamiltonian $(x,y)$-path in $D$.
\end{theorem}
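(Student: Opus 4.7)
The plan is to proceed by induction on $|V(T)|$. The hypothesis of three internally disjoint $(x,y)$-paths of length at least $2$ forces $|V(T)| \ge 5$, and the smallest base cases can be dispatched by inspection: when $T$ has very few vertices outside $\{x,y\}$, the three paths $P_1,P_2,P_3$ cover almost everything, and semicompleteness together with the $2$-strong hypothesis lets us weave any leftover vertex into the union of the three paths.

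For the inductive step, I would look for an internal vertex $v$ of one of the three paths $P_1,P_2,P_3$ such that $T-v$ still satisfies the hypotheses, i.e.\ $T-v$ is $2$-strong and contains three internally disjoint $(x,y)$-paths of length $\ge 2$. A natural candidate is an internal vertex of a longest $P_i$ that is not part of any $2$-separator; when $T-v$ fails to be $2$-strong, the paired cut-vertex imposes strong structural constraints on $v$'s position, and a Menger/rerouting argument using semicompleteness between the internal vertices of $P_i$ and $V(T)\setminus V(P_i)$ recovers an alternative third internally disjoint $(x,y)$-path of length at least $2$ in $T-v$. Assuming such a $v$ exists, induction yields a hamiltonian $(x,y)$-path $P = p_0 p_1 \cdots p_n$ of $T-v$ with $p_0=x$ and $p_n=y$, and it remains to absorb $v$.

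Because $T$ is semicomplete, $v$ is adjacent in at least one direction to every $p_i$. If for some $i$ both $p_iv$ and $vp_{i+1}$ are arcs, we insert $v$ between $p_i$ and $p_{i+1}$ and are done. Otherwise the out-neighbours of $v$ on $P$ form a prefix and the in-neighbours a suffix, separated at some index $k$. In this barrier situation we exploit the three internally disjoint paths $P_1,P_2,P_3$: one of them supplies a short detour through $v$ that enters $P$ at some $p_j$ with $j \le k$ and leaves it at some $p_{j'}$ with $j' > k$, allowing the segment $p_j\cdots p_{j'}$ to be replaced by a detour that absorbs $v$ and the bypassed segment using $2$-strong connectivity. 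The orientations across the barrier together with the fact that $v$ has at least two in-neighbours and two out-neighbours (inherited from the three path-ends at $x$ and $y$) make this rerouting always possible.

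The main obstacle, I expect, is double. First, one must guarantee that an internal vertex $v$ can always be chosen so that both properties survive deletion; the failure cases correspond to very restricted configurations where $T$ essentially equals $P_1 \cup P_2 \cup P_3$ plus a handful of vertices, and must be handled by a direct construction of the hamiltonian $(x,y)$-path. Second, the reinsertion argument in the barrier case requires a careful case analysis of where the three paths meet $P$ near the switching index $k$. Both difficulties are intricate but tractable because the $2$-strong hypothesis combined with semicompleteness leaves little room for genuinely obstructive configurations.
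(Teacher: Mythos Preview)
The paper does not prove this statement: Theorem~\ref{thm:cond} is quoted from Thomassen~\cite{thomassenJCT28} as a known result and is used as a black box (to deduce Theorems~\ref{thm:arc-in-ham} and~\ref{thm:ham-conn}). There is therefore no proof in the paper to compare your proposal against.

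As to the proposal itself, what you have written is a plausible outline but not a proof, and the two obstacles you flag at the end are exactly where the argument is currently incomplete. First, the existence of an internal vertex $v$ whose deletion preserves \emph{both} $2$-strongness and the three internally disjoint long $(x,y)$-paths is asserted via a ``Menger/rerouting argument using semicompleteness'', but no such argument is supplied; this is the heart of the matter, and it is not at all clear that such a $v$ always exists (indeed, Thomassen's original proof does not proceed by deleting a vertex and inducting on the same hypotheses). Second, in the reinsertion step you claim that one of $P_1,P_2,P_3$ ``supplies a short detour through $v$ that enters $P$ at some $p_j$ with $j\le k$ and leaves it at some $p_{j'}$ with $j'>k$''; but the three paths live in $T$, not in $T-v$, and after deleting $v$ they bear no controlled relationship to the hamiltonian path $P$ of $T-v$ produced by induction, so there is no reason the detour you need should exist. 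Both steps would require substantial additional work to make rigorous, and the phrase ``intricate but tractable'' is not a substitute for that work.
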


To prove our results, we prove a theorem that can be seen as an arc analogue to Theorem~\ref{thm:cond}.
\begin{theorem}\label{thm:spanning-trail}
Let $D$ be a strong semicomplete digraph, and let $x$ and $y$ be two vertices of $D$.
If there are two arc-disjoint $(x,y)$-paths in $D$, then there is a spanning $(x,y)$-trail in $D\setminus \{yx\}$.
\end{theorem}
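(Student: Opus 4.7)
My plan is to prove Theorem~\ref{thm:spanning-trail} by induction on $n = |V(D)|$. For small $n$ (say $n \leq 3$) the base case is verified directly using Camion's theorem together with the two arc-disjoint $(x,y)$-paths hypothesis. For the inductive step, the aim is to find a vertex $v \in V(D) \setminus \{x, y\}$ such that $D - v$ is still strong (semicompleteness is automatic) and still admits two arc-disjoint $(x,y)$-paths; the induction hypothesis then yields a spanning $(x,y)$-trail $T$ in $(D - v) \setminus \{yx\}$, after which we absorb $v$ back into $T$ using arcs of $D$ incident to $v$.

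The absorption step exploits the arcs between $v$ and $V(T) = V(D) \setminus \{v\}$ guaranteed by semicompleteness. I would proceed by two modes. In \emph{insertion mode}, if there exists $w \in V(T)$ with both $vw \in A(D)$ and $wv \in A(D)$, insert the excursion $w \to v \to w$ into $T$ at some occurrence of $w$, producing a spanning $(x,y)$-trail in $D\setminus\{yx\}$. In \emph{replacement mode}, $v$ has no $2$-cycle with any vertex of $V(T)$, so $V(D) \setminus \{v\}$ partitions as $N^+(v) \sqcup N^-(v)$; look for an arc $ab \in A(T)$ with $a \in N^-(v)$ and $b \in N^+(v)$ and replace $ab$ by the $2$-arc path $a \to v \to b$ in $T$.

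The hard part will be the case where both absorption modes fail: no $w \in V(T)$ is in a $2$-cycle with $v$, and every arc $ab \in A(T)$ satisfies $a \in N^+(v)$ or $b \in N^-(v)$. Then $T$, being an $(x, y)$-trail, cannot cross from $N^-(v)$ to $N^+(v)$; examining where $x$ and $y$ lie in the bipartition $N^+(v) \sqcup N^-(v)$ either contradicts the strong connectivity of $D$ (in the sub-cases $x,y$ both in $N^+(v)$ or both in $N^-(v)$, which force $N^-(v)=\emptyset$ or $N^+(v)=\emptyset$) or constrains $T$ enough to enable a direct re-routing using an alternative $N^+(v) \to N^-(v)$ crossing arc of $D$ not used by $T$. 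A secondary difficulty is when no vertex $v$ can be safely removed at all, because every non-$\{x, y\}$ vertex is either a cut vertex or lies on every minimum $(x,y)$-arc-cut; the resulting $D$ is so constrained that the spanning trail can be built directly by combining the two arc-disjoint $(x,y)$-paths with a hamiltonian cycle supplied by Camion's Theorem~\ref{thm:camion}, using semicompleteness to splice the pieces into a single trail from $x$ to $y$.
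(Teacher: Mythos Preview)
Your absorption step can be completed. When both insertion and replacement fail, the case analysis does force $x\in N^+(v)$ and $y\in N^-(v)$ (the other three placements of $(x,y)$ are incompatible with $T$ being spanning while having no arc from $N^-(v)$ to $N^+(v)$, since $N^+(v),N^-(v)\neq\emptyset$ by strongness of $D$). Each of the two arc-disjoint $(x,y)$-paths in $D$ must leave $\{v\}\cup N^+(v)$ by an arc from $N^+(v)$ to $N^-(v)$ (as $v$ has no out-arc into $N^-(v)$), so $D$ has at least two such crossing arcs; since $T$ uses exactly one, a spare arc $ba$ with $b\in N^+(v)$, $a\in N^-(v)$, $ba\notin A(T)$ exists (and $ba\neq yx$ since $y\in N^-(v)$), and attaching the $3$-cycle $a\to v\to b\to a$ to $T$ gives the desired spanning trail.

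The genuine gap is your ``secondary difficulty.'' You assert that when no $v\in V\setminus\{x,y\}$ is removable the digraph is ``so constrained'' that a direct construction works, but you give no argument, and this case is not confined to tiny base cases. For instance, take $V=\{x,a,b,c,y\}$ with arc set $\{xa,xb,ab,by,ya,yx,ac,cy,cx,cb\}$: this is a strong tournament with the two arc-disjoint $(x,y)$-paths $xby$ and $xacy$, yet $D-a$ is not strong (vertex $c$ has in-degree $0$), while $D-b$ and $D-c$ each have $(x,y)$-arc-connectivity $1$. So your induction never fires here. Your proposed fix of ``combining $P_1,P_2$ with a hamiltonian cycle $C$'' does not work as stated: Camion's theorem does not guarantee that $C$ contains $yx$, and $C$ may share arcs with both $P_1$ and $P_2$, so $C\cup P_i$ need not be a trail. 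Something more is needed, and your outline does not supply it.

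For comparison, the paper's proof avoids this obstruction by recursing on a different substructure rather than on $D-v$. It fixes a shortest path $P_1$ among the two, studies $D'\setminus A(P_1)$ with $D'=D\setminus\{yx\}$, and either (when this is strong) obtains the trail from Lemma~\ref{lem:camion-extended}, or finds a partition $(X,Y)$ with a unique crossing arc of $P_1$ and applies induction to $D\langle Y\rangle$ augmented by one new arc; the side $X$ is then re-attached via a hamiltonian path from Corollary~\ref{cor:origin}. No single-vertex deletion is ever required.
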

This theorem directly implies Theorems~\ref{thm:2as-always-good} and \ref{thm:2-arc}.

\section{Terminology}

Notation generally follows \cite{bang2018,bang2009}. 
The digraphs have no parallel arcs and no loops.
We denote the vertex set and arc set of a digraph $D$ by $V(D)$ and
$A(D)$, respectively and write $D=(V,A)$ where $V=V(D)$ and
$A=A(D)$. 
A {\bf non-edge} of a digraph is a pair $\{x,y\}$ such that $x$ and $y$ are not adjacent, that is neither $xy$ nor $yx$ are arcs.
Unless otherwise specified, the numbers $n$ and $m$ will always be used to denote the number of vertices, respectively arcs, in the digraph in question.
  We use the
notation $[k]$ for the set of integers $\{1,2,\ldots{},k\}$. 

Let $D=(V,A)$ be a digraph. The subdigraph {\bf induced} by a set  $X\subseteq V$ in a digraph $D$ is denoted by $\induce{D}{X}$.
If $X$ is a set of vertices we denote by $D-X$ the digraph $\induce{D}{V\setminus X}$, and if $A'$ is a set of arcs in $D$, then we denote by $D\setminus A'$ the digraph we obtain by deleting all arcs in $A'$.

When $xy$ is an arc of $D$ we say that $x$ {\bf dominates} $y$ and write $x\dom y$. 
If $x\dom y$ for all $x\in X$ and all $y\in Y$, then we write $X\dom Y$ and we write $X\sdom Y$ when $X\dom Y$ and there is no arc from $Y$ to $X$. For sake of clarity, we abbreviate $\{x\}\dom Y$ to $x \dom Y$. 
 For a
digraph $D=(V,A)$ the {\bf out-degree}, $d^+_D(x)$  (resp. the {\bf in-degree}, $d^-_D(x)$) of a vertex $x\in V$ is the number of arcs of the kind 
$xy$ (resp. $yx$) in $A$. 
When $X\subseteq V$ we shall also write $d^+_X(v)$ to denote the number
of arcs $vx$ with $x\in X$.
A {\bf sink} in a digraph is a vertex with out-degree $0$ and a {\bf source} is a vertex with in-degree $0$.

A {\bf walk} is an alternating sequence $W=(v_0, a_1, v_1, \dots , a_p, v_p)$ of vertices and 
arcs such that $a_i=v_{i-1}v_i$ for all $1\leq i\leq p$. 
Its {\bf initial vertex}, denoted by  $s(W)$, is $v_0$ and its {\bf terminal vertex}, denoted by $t(W)$, is $v_p$. The $v_i$, $1\leq i\leq p-1$, are the {\bf internal vertices} of $W$.
A walk is completely determined by the sequence of its vertices. Therefore for the sake of simplicity, we use the sequence
$v_0v_1\cdots v_p$ to denote the walk $(v_0, a_1, v_1, \dots , a_p, v_p)$.

A walk $W$ is {\bf closed} if  $s(W)=t(W)$.
A {\bf trail} is a walk in which all arcs are distinct, a {\bf path} is a walk in which all vertices are distinct and a {\bf cycle} is a closed walk in which all vertices are distinct except the initial and terminal vertices. 
Note that, walks, trails, paths and cycles are always directed.

An {\bf $(s,t)$-walk} (resp.  {\bf $(s,t)$-trail}, {\bf $(s,t)$-path} is a walk (resp. trail, path) with initial vertex
$s$ and terminal vertex $t$.  
Observe that if $s\neq t$, then an $(s,t)$-trail can be seen as a connected digraph such that 
$d^+(s) = d^-(s)+1$, $d^-(t) = d^+(t)+1$ and $d^+(v) = d^-(v)$ for all other vertices.
For two sets $X, Y$ of vertices, an $(X,Y)$-path is a path with initial vertex in $X$, terminal vertex in $Y$, and no internal vertices in $X\cup Y$.

Let $P=x_1\cdots x_p$ be a path. For any $1\leq i \leq j\leq p$, we denote by $P[x_i,x_j]$ the path $x_i \cdots x_j$,
by $P[x_i,x_j)$ the path $x_i \cdots  x_{j-1}$, by $P(x_i,x_j]$ the path $x_{i+1} \cdots x_{j}$, and by $P(x_i,x_j)$ the path $x_{i+1} \cdots  x_{j-1}$.
Similarly, if $C$ is a cycle and $x,y$ two vertices of $C$, we denote by $C[x,y]$ the $(x,y)$-path in $C$ if $x\neq y$ and the cycle $C$ if $x=y$. 
Denote by $x^+$ the out-neighbour of $x$ in $C$ and by $y^-$ the in-neighbour of $y$ in $C$, and let
 $C(x,y] = C[x^+,y]$, $C[x,y) = C[x,y^-]$ and $C(x,y) = C[x^+,y^-] $. 

 A vertex $v$ of a digraph $D$ is an {\bf out-generator} (resp. {\bf in-generator}) if $v$ can reach (resp. be reached by) all other vertices by paths.

A digraph $D$ is {\bf eulerian} if it contains a spanning eulerian trail $W$ such that $A(W)=A(D)$, or, equivalently by Euler's theorem, if $D$ is connected and $d^+(v)=d^-(v)$ for all $v\in V(D)$.

The {\bf underlying (multi)graph} of a digraph $D$,
denoted $UG(D)$, is obtained from $D$ by suppressing the orientation
of each arc. 
A digraph $D=(V,A)$ is {\bf
  connected} if $UG(D)$ is a connected graph. 
It is {\bf strong} if it contains an $(s,t)$-path for each ordered pair of distinct vertices $s,t\in V$. 
It is {\bf $\mathbf{k}$-strong} if $D - W$ is strong for every subset $W\subseteq V$ of at most $k-1$ arcs. 
It is {\bf $\mathbf{k}$-arc-strong} if $D\setminus A'$ is strong for every subset $A'\subseteq A$ of at most $k-1$ arcs. 
The largest $k$ such that $D$ is $k$-arc-strong is called the {\bf arc-connectivity} of $D$ and is denoted by $\lambda{}(D)$.
A {\bf cut-arc} in $D$ is an arc $a$ such that $D\setminus a$ is not strong.

An {\bf independent set} in a digraph $D$ is a set of pairwise non-adjacent vertices.
 The {\bf independence number} of $D$, denoted by $\alpha(D)$,  is the maximum size of an independent set in $D$.

\section{Structure of semicomplete digraphs}\label{subsec:structure}

Let $D$ be a digraph.
A {\bf decomposition} of $D$ is a partition $(S_1, \dots , S_p)$, $p\geq 1$, of its vertex set.
The {\bf index} of vertex $v$ in the decomposition, denoted by $\ind(v)$, is the integer $i$ such that $v\in S_i$.
An arc $uv$ is {\bf forward} if $\ind(u) < \ind(v)$, {\bf backward} if $\ind(u) > \ind(v)$, and {\bf flat} if $\ind(u) = \ind(v)$.
For sake of clarity, we often abbreviate $S_{\ind(u)}$ into $S_u$.

A  decomposition $(S_1,\ldots{},S_p)$ is {\bf strong} if $D\langle S_i \rangle$ is strong for all $1\leq i\leq p$.
The following proposition is well-known (just consider an acyclic ordering of the strong components of $D$).
\begin{proposition}\label{prop:noback}
Every digraph has a strong decomposition with no backward arcs.
\end{proposition}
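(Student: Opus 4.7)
The plan is to use the strong-component decomposition of $D$ and a topological ordering of its condensation. First I would let $C_1, C_2, \ldots, C_p$ denote the strong components of $D$, that is, the vertex sets of the maximal strongly connected subdigraphs; these partition $V(D)$. The \emph{condensation} of $D$, obtained by contracting each $C_i$ to a single vertex and keeping (one copy of) each arc between distinct components, is a well-known classical object and is acyclic, because otherwise a cycle in the condensation would let one merge the corresponding $C_i$'s into a single strongly connected subdigraph, contradicting maximality.

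Next, since the condensation is a DAG, it admits a topological ordering; I would relabel the strong components $S_1, \ldots, S_p$ so that for every arc $uv$ with $u \in S_i$, $v \in S_j$ and $i \neq j$, we have $i < j$. Setting the decomposition to be $(S_1, \ldots, S_p)$, each $D\langle S_i \rangle$ is strong by the very definition of a strong component, so this is a strong decomposition. Moreover, any arc $uv$ of $D$ with $\ind(u)\neq\ind(v)$ must satisfy $\ind(u) < \ind(v)$ by the choice of ordering, hence is forward; arcs with $\ind(u)=\ind(v)$ are flat. No arc is backward, as required.

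There is no real obstacle here: the statement reduces to the standard facts that the strong components of a digraph are well-defined and that the condensation is acyclic, together with the existence of a topological sort of a finite DAG. The only thing to state carefully is the relabeling step so that the forward/flat dichotomy is visible from the indices.
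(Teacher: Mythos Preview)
Your proof is correct and is exactly the approach the paper indicates: take the strong components of $D$ and list them in an acyclic ordering of the condensation, so that each $D\langle S_i\rangle$ is strong and every non-flat arc goes forward.
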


A {\bf $1$-decomposition} of a digraph $D$ is a strong decomposition such that every backward arc is a cut-arc and all cut-arcs are either forward or backward.

\begin{proposition}
Every strong digraph admits a $1$-decomposition.
\end{proposition}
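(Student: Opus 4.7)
The plan is to construct a $1$-decomposition explicitly from the cut-arc structure of $D$. Let $A_c$ be the set of cut-arcs of $D$ and set $D' = D\setminus A_c$. Let $C_1,\ldots,C_p$ be the strong components of $D'$, indexed so that in $D'$ no arc goes from $C_j$ to $C_i$ with $j>i$; such an ordering exists by Proposition~\ref{prop:noback} applied to $D'$. I would then declare $S_i:=C_i$ and verify that $(S_1,\ldots,S_p)$ meets the three requirements of a $1$-decomposition: (a) each $D\langle S_i\rangle$ is strong, (b) every backward arc is a cut-arc, and (c) no cut-arc is flat.

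Requirements (a) and (b) drop out immediately from the construction. For (a), each $D\langle S_i\rangle$ contains the strong digraph $D'\langle C_i\rangle$ as a spanning subdigraph and is therefore strong. For (b), a backward arc $uv$ goes from $C_j$ to $C_i$ with $j>i$, and the chosen ordering of the $C_\ell$'s forbids any such arc in $D'$; hence $uv\in A_c$.

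The substantive step is (c). I would prove it by contradiction: suppose some cut-arc $uv$ has both endpoints in the same $C_i$. Since $C_i$ is strong in $D'$ and $uv\notin D'$, the induced subdigraph $D'\langle C_i\rangle\subseteq D\setminus uv$ contains a $(u,v)$-path $Q$. I would now show $D\setminus uv$ is strong by the standard rerouting argument: given any two vertices $x,y$, take an $(x,y)$-path in $D$; if it does not use the arc $uv$, it already lies in $D\setminus uv$, and if it does, replace that single step by $Q$ to produce an $(x,y)$-walk in $D\setminus uv$, from which an $(x,y)$-path can be extracted. Thus $D\setminus uv$ is strong, contradicting $uv$ being a cut-arc, so no cut-arc is flat.

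The only non-trivial obstacle is verifying (c), and that reduces to the observation that a cut-arc whose endpoints lie in a common strong component of $D\setminus A_c$ can always be bypassed using the internal strong connectivity of that component, which is the heart of the argument.
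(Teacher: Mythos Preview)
Your proposal is correct and follows essentially the same route as the paper: delete the cut-arcs, take the strong-component ordering of the resulting digraph, and declare that to be the decomposition. The only difference is that where the paper simply asserts ``this decomposition is clearly a $1$-decomposition'', you actually spell out the verification of (a), (b), (c), with the rerouting argument for (c) being the one nontrivial point; this is a welcome addition rather than a departure.
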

\begin{proof}

Let $D$ be a strong digraph and let $C$ be its set of cut-arcs. If $C=\emptyset$, then the trivial decomposition with only one set $S_1=V(D)$ is a 1-decomposition, so assume that $C\neq\emptyset$. Observe that
$D\setminus C$ is not strong. Thus,  by Proposition~\ref{prop:noback},  $D\setminus C$ has a strong decomposition $(S_1, \dots, S_p)$ with no backward arcs.
This decomposition is clearly a $1$-decomposition of $D$.
\end{proof}

Let $(S_1, \dots , S_p)$ be a decomposition of a digraph.
Two backward arcs $uv$ and $xy$ are {\bf nested} if either $\ind(v) \leq \ind(y) < \ind(x) \leq \ind(u)$ or $\ind(y) \leq \ind(v) < \ind(u) \leq \ind(x)$.
See Figure~\ref{fig:nested}.
\begin{figure}[hbtp]
\begin{center}
\tikzstyle{vertexX}=[circle,draw, top color=gray!5, bottom color=gray!30, minimum size=16pt, scale=0.6, inner sep=0.5pt]
\tikzstyle{vertexY}=[circle,draw, top color=gray!5, bottom color=gray!30, minimum size=20pt, scale=0.7, inner sep=1.5pt]

\begin{tikzpicture}[scale=0.6]
 \draw (4,0.6) node {$S_1$};
 \draw [rounded corners] (3.3,1.2) rectangle (4.7,5.8);
  \draw (5.5,3.5) node {$\cdots$};
 \draw [rounded corners] (6.3,1.2) rectangle (7.7,5.8);
 \node (v1) at (7.0,5.0) [vertexX] {$v_1$};
 \node (v3) at (7.0,2.0) [vertexX] {$v_3$};
  \draw (8.5,3.5) node {$\cdots$};
 \node (v2) at (10.0,3.5) [vertexX] {$v_2$};
 \draw [rounded corners] (9.3,1.2) rectangle (10.7,5.8);
  \draw (11.5,3.5) node {$\cdots$};
 \draw [rounded corners] (12.3,1.2) rectangle (13.7,5.8);
 \node (u3) at (13.0,2.0) [vertexX] {$u_3$};
  \draw (14.5,3.5) node {$\cdots$};
  \draw [rounded corners] (15.3,1.2) rectangle (16.7,5.8);
  \node (u2) at (16.0,3.5) [vertexX] {$u_2$};
  \draw (17.5,3.5) node {$\cdots$};
 \draw [rounded corners] (18.3,1.2) rectangle (19.7,5.8);
 \node (u1) at (19.0,5.0) [vertexX] {$u_1$};
 \draw (20.5,3.5) node {$\cdots$};
 \draw (22,0.6) node {$S_{p}$};
 \draw [rounded corners] (21.3,1.2) rectangle (22.7,5.8);

\draw [->, line width=0.03cm] (u1) to [out=155, in=25] (v1);
\draw [->, line width=0.03cm] (u2) to [out=160, in=20] (v2);
\draw [->, line width=0.03cm] (u3) to [out=160, in=20] (v3);

\end{tikzpicture}
\end{center}
\caption{Illustration of nested backwards arcs. The arcs $u_1v_1$ and $u_2v_2$ are nested;  the arcs $u_1v_1$ and $u_3v_3$ are nested; the arcs $u_2v_2$ and $u_3v_3$ are not nested.}\label{fig:nested}
\label{fig2}
\end{figure}
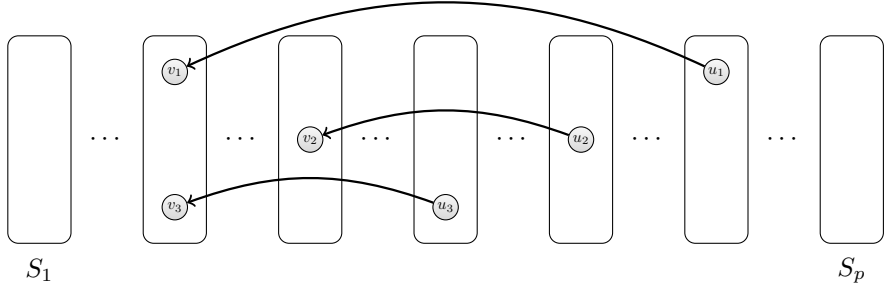

\begin{proposition}\label{prop:nosame}
Let $(S_1, \dots, S_p)$ be a $1$-decomposition of a strong semicomplete digraph $D$.
The following properties hold:
\begin{itemize}
 \item[(i)] If $u_1v_1$ and $u_2v_2$ are two cut-arcs, then $\ind(u_1)\neq \ind(u_2)$ and $\ind(v_1)\neq \ind(v_2)$.
\item[(ii)] There are no nested backward arcs.
\item[(iii)] If $|V(D)|\geq 4$ and $uv$ is a forward cut-arc, then $|S_u|=|S_v|=1$ and $\ind(v) = \ind(u)+1$.
\end{itemize}
\end{proposition}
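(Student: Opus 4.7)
The plan is to analyze the forward cut-arc $uv$ through the partition of $V$ it induces and then use (i). Set $i=\ind(u)<j=\ind(v)$. Since $uv$ is a cut-arc of the semicomplete digraph $D$, removing it partitions $V$ into $(A,B)$ with $u\in A$, $v\in B$, the arc $uv$ being the unique $A\to B$ arc of $D$; semicompleteness then forces $B\dom A\setminus\{u\}$ and $B\setminus\{v\}\dom u$. Using that each $S_l$ is strong and that $uv$ does not lie inside any $S_l$ (as $i\ne j$), one checks that $S_i\subseteq A$ and $S_j\subseteq B$.

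To prove $|S_j|=1$, suppose $v'\in S_j\setminus\{v\}$. The arc $uv'$ cannot exist: otherwise, together with a $(v',v)$-path inside the strong $S_j\subseteq D\setminus\{uv\}$, it would yield a $(u,v)$-path in $D\setminus\{uv\}$, contradicting that $uv$ is a cut-arc. Hence $v'u$ is an arc, which, being backward in the $1$-decomposition, is a cut-arc. A symmetric argument with $u'\in S_i\setminus\{u\}$ would produce a backward cut-arc $vu'$, and then the two cut-arcs $v'u,vu'$ have tails with the same index $j$, contradicting (i); so $|S_i|=1$. Now $|V(D)|\geq 4$ yields a vertex $w\notin S_i\cup S_j$. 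A case analysis on whether $w\in A$ or $w\in B$, combined with the relative positions of $\ind(w)$, $i$, $j$ and with $B\dom A\setminus\{u\}$, $B\setminus\{v\}\dom u$, forces an additional backward cut-arc at $w$ sharing a tail-index or head-index with $v'u$ or $uv$, violating (i). Thus $|S_j|=1$; the same argument with the roles of $i,j$ swapped gives $|S_i|=1$.

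For $\ind(v)=\ind(u)+1$, suppose towards a contradiction that $i<l<j$ for some $l$. The same reasoning shows $|S_l|=1$, and the unique vertex $x_l\in S_l$ satisfies either $x_l\in A$ (in which case $v\to x_l$ is a backward cut-arc) or $x_l\in B$ (in which case $x_l\to u$ is). Applying (i) to the cut-arcs arising from distinct intermediate indices yields that at most one $x_l$ lies in $A$ and at most one in $B$, hence $j\leq i+3$. The cases $j=i+2$ and $j=i+3$ are then eliminated by invoking $|V(D)|\geq 4$ once more and running the same semicompleteness-plus-(i) argument on an additional vertex $z$: one of the arcs forced at $z$ becomes a backward cut-arc whose tail-index or head-index collides with an already-identified cut-arc.

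The main obstacle is organizing this last step cleanly. My intuition from small examples is that the tidiest finish is to focus on the extremal vertex of maximum index in $A\setminus\{u\}$ (or of minimum index in $B\setminus\{v\}$) and observe that every candidate out-neighbor of this vertex is either forbidden (it would yield a second $A\to B$ arc in $D$) or would be a backward cut-arc clashing, via (i), with the already-identified family, leaving this vertex with no out-neighbor and contradicting the strongness of $D$.
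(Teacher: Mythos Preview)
Your $(A,B)$-partition approach is different from the paper's (which never names this partition and argues directly inside the decomposition), and it can be made to work, but the proof as written has a real gap in the case analysis for $|S_j|=1$.

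First a minor omission: before invoking ``$|V(D)|\ge 4$ yields a vertex $w\notin S_i\cup S_j$'' you need $|S_j|\le 2$. This is immediate from (i) (two distinct $v',v''\in S_j\setminus\{v\}$ would give backward cut-arcs $v'u,v''u$ with the same head-index $i$), but it must be said.

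The substantive problem is the sentence ``forces an additional backward cut-arc at $w$ sharing a tail-index or head-index with $v'u$ or $uv$, violating (i)''. This is not true in the subcase $w\in A$ with $\ind(w)>j$: here $vw,v'w$ are forward, and the only candidate backward arc through $w$ and $\{u,v,v'\}$ is $wu$, which (i) \emph{forbids} (it would share head-index $i$ with $v'u$). So no clashing cut-arc is produced; you only learn $uw\in A(D)$, and if $A\setminus\{u\}$ contains further vertices, $w$ may have an out-neighbour there and nothing contradicts anything. The symmetric subcase $w\in B$ with $\ind(w)<i$ has the same defect.

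A clean repair along your lines: from $B\dom A\setminus\{u\}$ and (i) deduce that \emph{every} $a\in A\setminus\{u\}$ has $\ind(a)>j$ (else $va,v'a$ are two backward cut-arcs with tail-index $j$), and from $(B\setminus\{v\})\dom u$ that every $b\in B\setminus\{v,v'\}$ has $\ind(b)<i$. If $A\setminus\{u\}\ne\emptyset$ then $\bigcup_{l>j}S_l\subseteq A$ is nonempty, and strongness forces a backward cut-arc from it into $\bigcup_{l\le j}S_l$; its head lies in $\{u\}$, in $S_j$, or in $\bigcup_{l<i}S_l\subseteq B$, and each possibility contradicts (i) or produces a forbidden $A\to B$ arc. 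Hence $A=\{u\}$; then $|V(D)|\ge 4$ gives $B\setminus\{v,v'\}\ne\emptyset$, and the symmetric argument on an arc entering $\bigcup_{l<i}S_l$ yields the contradiction. Note this already gives $j=i+1$ as a byproduct (no index can lie strictly between $i$ and $j$), but only under the hypothesis $|S_j|>1$; once $|S_i|=|S_j|=1$ you still owe a proof that $j=i+1$, and your sketch for $j\in\{i+2,i+3\}$ is not yet a proof---the paper handles this by a direct analysis of $W=\{w:\ind(u)<\ind(w)<\ind(v)\}$, bounding $|W|\le 2$ and ruling out $|W|\in\{1,2\}$ case by case.

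Finally, your proposal addresses only (iii); parts (i) and (ii) require their own (short) arguments.
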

\begin{proof}

(i) Assume for a contradiction that $\ind(u_1) = \ind(u_2)$. Since $D$ is semicomplete, there is an arc between $v_1$ and $v_2$.
Without loss of generality, we may assume that $v_1v_2$ is an arc. In $D\langle S_{u_1}\rangle = D\langle S_{u_2}\rangle$, there is a $(u_2,u_1)$-path $P$. Note that $P$ avoids $u_2v_2$ because this arc is not flat. But then $P\cup u_1v_1v_2$ is a $(u_2, v_2)$-path in $D\setminus u_2v_2$, contradicting that   $u_2v_2$ is  a cut-arc.

\medskip

(ii) Suppose for a contradiction that $D$ contains two nested arcs $uv$ and $xy$ such that $\ind(v) \leq \ind(y) < \ind(x) \leq \ind(u)$.
By (i), $\ind(v) < \ind(y)$ and $\ind(x) < \ind(u)$. 
Moreover by (i), 
$D$ contains the arcs $vy,xu$. But now $xuvy$ is an $(x,y)$-path in $D\setminus xy$, contradicting the fact that $xy$ is a cut-arc.

\medskip

(iii) Assume $|D|\geq 4$ and let $uv$ be a forward cut-arc.

For any vertex $u'$ in $S_u\setminus \{u\}$, there is a $(u,u')$-path $P$ in $D\langle S_u\rangle$, and so $vu'$ is a backward arc for otherwise  $P\cup u'v$ would be a $(u,v)$-path in $D\setminus uv$. Hence by (i), $|S_u\setminus \{u\}| \leq 1$, so $|S_u|\leq 2$.

Assume for a contradiction that $|S_u|=2$, say $S_u=\{u,u'\}$. 
Let $S=S_{ind(u)+1}\cup\cdots\cup{}S_v$.  If $v$ has an in-neighbour $w$ in $S$ then, by (i), $uw$ is an arc (since $vu'$ is a backward arc, and so $uwv$ is a $(u,v)$-path, a contradiction to the fact that  $uv$ is a cut-arc. Hence, by (i), $S=\{v\}$.
Now since $|V(D)|\geq 4$, either $\ind(u)>1$ or $\ind(v)<p$. By (ii) $vu'$ is the only  arc from $S_v\cup\cdots{}\cup S_p$ to $S_1\cup\cdots\cup{}S_u$, and by (i) the only cut-arc with tail in $S_u$ is $uv$, and the only cut-arc with head in $S_v$ is $uv$. Therefore, if $\ind(u)>1$, there is no arc from $S_u\cup\cdots{}\cup S_p$ to  $S_1\cup\cdots\cup{}S_{\ind(u)-1}$,
and if $\ind(v)<p$, there is no arc  from $S_{\ind(v)+1}\cup\cdots{}\cup S_p$ to $S_1\cup\cdots\cup{}S_v$. This is a contradiction to the fact that $D$ is strong.

Hence $|S_u|=1$. Symmetrically, we obtain $|S_v|=1$.

Let $W=\{w \mid \ind(u) < \ind(w) < \ind(v)\}$, $X=\{x \mid \ind(x)<\ind(u)\}$, and $Y= \{y \mid \ind(v) < \ind(y)\}$.
Observe that for every $w\in W$, either $uw\notin A(D)$ or $wv\notin A(D)$ for otherwise $uwv$ would be a $(u,v)$-path in $D\setminus uv$ 
(contradicting that $uv$ is a cut-arc).
Since $D$ is semicomplete, this implies that one of the two arcs $wu$, $vw$  is a backward arc.
In particular, $|W|\leq 2$ for otherwise either there would be two backward arcs with tail $v$ or two backwards arcs with head $u$, contradicting (i).

Assume for a contradiction that $|W|=2$, say $W=\{w_1,w_2\}$ and $w_1\dom w_2$.
If $uw_1$ is an arc then the fact that $uv$ is a cut-arc would imply that $v$ would have backwards arcs to each of $w_1,w_2$, contradicting~(i). Hence $uw_1$ is not an arc and  $D$ contains the arcs $w_1u$ (as $u w_1 \not\in A(D)$), $uw_2$ (by (i)), $vw_2$ (as $uv$ is a cut-arc) and $w_1v$ (by (i))
and does not contain the arcs $w_2u, vw_1,w_2w_1$. Observe that by (i) $w_1w_2$ is not a cut-arc and so $\ind(w_2)\geq \ind(w_1)$.
Since $D$ is strong, $w_1$ must have an in-neighbour $z$, which must be in $X\cup Y$.
If $X\neq \emptyset$, then there must be an arc from $W\cup Y\cup\{u,v\}$ to $X$. By  (i) the tail of this arc is not in $\{u, w_1\}$ and so this arc and $w_1u$ are two nested backward arcs, a contradiction to (ii).  Similarly, we get a contradiction if $Y\neq \emptyset$.
However $X=\emptyset$ and $Y=\emptyset$ is a contradiction to $z \in X \cup Y$.

Assume for a contradiction that $|W|=1$, say $W=\{w\}$.
Since $uv$ is a cut-arc, then $uwv$ cannot be a path, so either $uw$ or $wv$ is not an arc.

Let us assume that $uw$ is not an arc. Then $wu\in A(D)$ because $D$ is semicomplete.
Thus $X=\emptyset$, for otherwise $wu$ and any arc  from $Y\cup\{u,v,w\}$ to $X$ would be two nested arcs 
(as by (i) it can not leave $\{u\}$), a contradiction to (ii).
Hence $Y\neq \emptyset$, since $|D|\geq 4$. So there must be an arc from $Y$ to $\{u,v,w\}$. By (i), the head of this arc must be $w$.
Let $y$ be its tail. By (i) $vw$ and $yu$ are not backward arcs, so $uywv$ is a $(u,v)$-path in $D\setminus uv$, a contradiction.

Similarly, we get a contradiction if $wv$ is not an arc. Hence $W=\emptyset$, that is $\ind(v) = \ind(u)+1$.
\end{proof}

A {\bf nice decomposition} of a digraph $D$ is a $1$-decomposition such that the set of cut-arcs of $D$ is exactly the set of backward arcs.

\begin{proposition}\label{prop:nice}
Every strong semicomplete digraph of order at least $4$ admits a nice decomposition.
\end{proposition}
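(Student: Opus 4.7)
The plan is to begin with a $1$-decomposition of $D$, which exists by the previous proposition, and then repeatedly \emph{correct} forward cut-arcs until none remain. Given any $1$-decomposition $(S_1,\ldots,S_p)$ with at least one forward cut-arc $uv$, Proposition~\ref{prop:nosame}(iii) guarantees that $S_u=\{u\}$, $S_v=\{v\}$ and $\ind(v)=\ind(u)+1$, so $\{u\}$ and $\{v\}$ are consecutive singleton parts. My intended operation is to interchange these two singletons in the sequence, which immediately turns $uv$ into a backward arc. I then aim to show that this swap yields a new $1$-decomposition with strictly fewer forward cut-arcs, so finite iteration delivers a nice decomposition.

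The main obstacle is to ensure that the swap preserves the $1$-decomposition conditions, and this reduces to a preliminary lemma: \emph{in a strong semicomplete digraph of order at least~$3$, the arcs $uv$ and $vu$ cannot both be cut-arcs}. The argument is by contradiction. If both are cut-arcs, then $u\dom v$ and $v\dom u$ are realised in $D$ only through the direct arcs $uv$ and $vu$. For any third vertex $x$, semicompleteness forces adjacency with both $u$ and $v$, and a case analysis shows that $x$ falls into one of two classes: either $u\dom x$, $v\dom x$ with no arcs from $x$ into $\{u,v\}$ (\emph{sinks}, because an arc $x\dom v$ or $x\dom u$ would create a length-$2$ alternative $(u,v)$- or $(v,u)$-path), or dually $x\dom u$, $x\dom v$ with no arcs from $\{u,v\}$ to $x$ (\emph{sources}). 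Strong connectivity forces both classes to be non-empty and provides a path from any sink $a$ to some source $b$ entirely inside $V(D)\setminus\{u,v\}$. Concatenating the arc $u\dom a$, this internal path, and the arc $b\dom v$ then gives an alternative $(u,v)$-path avoiding $uv$, contradicting $uv$'s cut-arc-ness.

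Given the lemma, in our chosen forward cut-arc $uv$ the arc $vu$ cannot be in $D$: otherwise it would be a backward arc of the original $1$-decomposition (since $\ind(v)>\ind(u)$), hence a cut-arc by condition~(b) of the $1$-decomposition, contradicting the lemma. Interchanging $\{u\}$ and $\{v\}$ therefore changes the direction of only the arc $uv$ in the decomposition: every other arc either has both endpoints outside $\{u,v\}$ (indices unchanged), or has its other endpoint at an index strictly less than $\ind(u)$ or strictly greater than $\ind(v)$ (so the relative order with the swapped endpoint is preserved). The parts themselves are unchanged, no arc becomes flat, $uv$ is a cut-arc, and every previously backward arc remains a backward cut-arc; hence the swap is again a $1$-decomposition, with the number of forward cut-arcs reduced by one. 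Finite iteration thus produces a $1$-decomposition with no forward cut-arcs, i.e.\ the desired nice decomposition.
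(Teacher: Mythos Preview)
Your proof is correct and follows the same approach as the paper: start from a $1$-decomposition and swap the two singleton parts $\{u\},\{v\}$ of each forward cut-arc $uv$ until none remain. Your auxiliary lemma (that $uv$ and $vu$ cannot both be cut-arcs in a strong semicomplete digraph of order at least $3$) is a detail the paper leaves implicit, but it is indeed what guarantees that the swap strictly decreases the number of forward cut-arcs, so your added justification is appropriate.
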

\begin{proof}
Let $D$ be a  strong semicomplete digraph of order at least $4$. 
If $uv$ has a cut-arc, which is forward.
By Proposition~\ref{prop:nosame} (iii), $S_u=\{u\}$, $S_v=\{v\}$, and $\ind(v) = \ind(u) +1$.
Inverting $S_u$ and $S_v$ (that is, considering the decomposition $(S_1, \dots, S_{\ind(u)-1}, \{v\}, \{u\} , S_{\ind(u)+2}, \dots , S_p)$ ), we obtain another $1$-decomposition with one forward cut-arc less.
Doing this for all forward cut-arcs, we obtain a nice decomposition of $D$.
\end{proof}

Given a semicomplete digraph and a nice decomposition of it, the {\bf natural ordering} of its backward arcs is the ordering in decreasing order according to the index of their tail.
Note that this ordering is unique by Proposition~\ref{prop:nosame}~(i).

\begin{proposition}\label{prop:ordering}
Let $D$ be a  strong semicomplete digraph of order at least $4$, let $(S_1, \dots , S_p)$ be a nice decomposition of $D$, and let $(s_1t_1, s_2t_2, \dots , s_rt_r)$ be the natural ordering of the backward arcs. Then
\begin{itemize}
\item[(i)] 
$\ \ind(t_{j+1}) < \ind(t_j)\leq \ind(s_{j+1}) < \ind(s_j)$ for all $1\leq j \leq r-1$ and

$\ind(t_{j+1})\leq \ind(s_{j+2}) < \ind(t_j)$ for all $1\leq j \leq r-2$;

\item[(ii)] $s_1\in S_p$ and $t_r\in S_1$;

\item[(iii)] If  $\ind(t_j) = \ind(s_{j+1})=i$ and $t_j\neq s_{j+1}$, then there are two arc-disjoint $(t_j, s_{j+1})$-paths in $D\langle S_i \rangle$.
 \end{itemize}
\end{proposition}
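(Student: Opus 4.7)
My plan is to prove the three parts in the order (i)-first-chain, (ii), (i)-second-chain, (iii), since each later part uses the earlier ones; the recurring tools are Proposition~\ref{prop:nosame} (distinctness of tail/head indices of cut arcs and non-nestedness of backward arcs) and strong-connectivity arguments via well-chosen cut sets. For the first chain of (i), $\ind(s_{j+1})<\ind(s_j)$ is immediate from the natural ordering plus Proposition~\ref{prop:nosame}~(i), and $\ind(t_{j+1})<\ind(t_j)$ follows from the non-nestedness of $s_jt_j$ and $s_{j+1}t_{j+1}$ (Proposition~\ref{prop:nosame}~(ii)) combined with the already-known strict inequality on tails. For $\ind(t_j)\leq \ind(s_{j+1})$ I argue by contradiction: if $\ind(s_{j+1})<\ind(t_j)$, set $U=\{v:\ind(v)>\ind(s_{j+1})\}$; the only candidates for arcs leaving $U$ are backward arcs with tail in $U$, which by the natural ordering are among $s_1t_1,\dots,s_jt_j$, and each of their heads satisfies $\ind(t_i)\geq\ind(t_j)>\ind(s_{j+1})$ by the preceding step, hence also lies in $U$; so no arc leaves $U$, contradicting strong connectivity. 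Part~(ii) is the same style of cut argument: $U=\{v:\ind(v)>\ind(s_1)\}$ (using maximality of $\ind(s_1)$) yields $s_1\in S_p$, and $U=\{v:\ind(v)<\ind(t_r)\}$ (using the minimality of $\ind(t_r)$ just established) yields $t_r\in S_1$.

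For the second chain of (i), $\ind(t_{j+1})\leq \ind(s_{j+2})$ is the first chain applied at $j+1$. The inequality $\ind(s_{j+2})<\ind(t_j)$ is what I expect to be the main obstacle. If $\ind(t_j)=\ind(s_{j+1})$ it follows at once from $\ind(s_{j+2})<\ind(s_{j+1})$, so I may assume $\ind(t_j)<\ind(s_{j+1})$ (in particular $t_j\neq s_{j+1}$). Supposing for a contradiction $\ind(s_{j+2})\geq \ind(t_j)$, I construct an $(s_{j+1},t_{j+1})$-walk in $D\setminus s_{j+1}t_{j+1}$, contradicting the cut-arc status of $s_{j+1}t_{j+1}$: the walk is $s_{j+1}\to s_j\to t_j\leadsto s_{j+2}\to t_{j+2}\to t_{j+1}$. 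Each of the single arcs $s_{j+1}s_j$, $t_js_{j+2}$ (when $\ind(t_j)<\ind(s_{j+2})$) and $t_{j+2}t_{j+1}$ exists because $D$ is semicomplete and the opposite orientation would be a backward, hence cut, arc sharing a tail or head with $s_jt_j$, $s_{j+2}t_{j+2}$ or $s_{j+1}t_{j+1}$ respectively; by Proposition~\ref{prop:nosame}~(i) such a coincidence of cut arcs forces identification of their endpoints, and in each such degenerate case the walk simply shortens to the appropriate common vertex. When $\ind(t_j)=\ind(s_{j+2})$, the leg $t_j\leadsto s_{j+2}$ runs inside the strong subdigraph $D\langle S_{\ind(t_j)}\rangle$. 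A comparison of indices confirms that no arc used equals $s_{j+1}t_{j+1}$.

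Part~(iii) I prove by the arc version of Menger's theorem: two arc-disjoint $(t_j,s_{j+1})$-paths in $D\langle S_i\rangle$ exist if and only if no single arc of $D\langle S_i\rangle$ separates them. If some $e\in A(D\langle S_i\rangle)$ did separate, I will show that $e$ is actually a cut arc of the whole $D$, contradicting the nice decomposition (all cut arcs are backward, but $e$ is flat). This reduces to proving that every $(t_j,s_{j+1})$-path of $D$ lies inside $S_i$. Starting at $t_j\in S_i$ one cannot exit $S_i$ backwards: the unique backward arc out of $S_i$ is $s_{j+1}t_{j+1}$, whose tail is the terminal vertex. Any exit is therefore forward, to some $S_k$ with $k>i$; to return to $S_i$ the path needs either the backward arc $s_jt_j$ (forbidden since its head $t_j$ is the initial vertex, already visited) or to first descend to an index $<i$. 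But the now-proved (i) forbids any backward arc $s_\ell t_\ell$ with $\ind(s_\ell)>i$ and $\ind(t_\ell)<i$: for $\ell\leq j$ we have $\ind(t_\ell)\geq i$, and for $\ell\geq j+2$ we have $\ind(s_\ell)<i$. Hence the path stays in $S_i$ and must use $e$, so $D\setminus e$ has no $(t_j,s_{j+1})$-path, whence $e$ is a cut arc of $D$, the desired contradiction.
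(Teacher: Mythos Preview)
Your proof follows essentially the same approach as the paper's: the cut-set arguments for the first chain of (i) and for (ii), the $(s_{j+1},t_{j+1})$-walk via $s_j,t_j,s_{j+2},t_{j+2}$ for the second chain of (i), and the Menger argument for (iii) are all the paper's route, and your justification of (iii) is in fact more carefully spelled out than the paper's terse ``But then there is no $(t_j,s_{j+1})$-path in $D\setminus\{a\}$''.

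There is one slip in your bookkeeping for the second chain. In justifying the arc $t_{j+2}t_{j+1}$ you claim that the reversed arc $t_{j+1}t_{j+2}$ would share a tail or a head with $s_{j+1}t_{j+1}$; it does not. Its tail $t_{j+1}$ is the \emph{head} of $s_{j+1}t_{j+1}$, and Proposition~\ref{prop:nosame}~(i) only forbids two cut arcs from sharing a tail index or sharing a head index, not one's tail coinciding with the other's head. The correct comparison is with $s_{j+2}t_{j+2}$: both have head $t_{j+2}$, so by Proposition~\ref{prop:nosame}~(i) the hypothetical cut arc $t_{j+1}t_{j+2}$ would have to coincide with $s_{j+2}t_{j+2}$, i.e.\ $t_{j+1}=s_{j+2}$, and then your shortened walk $s_{j+1}s_jt_j\leadsto s_{j+2}=t_{j+1}$ finishes the job exactly as you say. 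With this one correction your argument is complete.
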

\begin{proof}
(i) By Proposition~\ref{prop:nosame}~(i), $\ind(s_{j+1}) < \ind(s_j)$, and as $D$ is strong, $\ind(t_j)\leq \ind(s_{j+1}) < \ind(s_j)$.
By Proposition~\ref{prop:nosame}~(ii), $s_{j}t_{j}$ and $s_{j+1}t_{j+1}$ are not nested so $\ind(t_{j+1}) <  \ind( t_{})$.
Assume for a contradiction that $\ind(t_j) \leq \ind(s_{j+2})$. By Proposition~\ref{prop:nosame} (i), $s_{j} s_{j+1}$ and $t_{j+1}t_{j+2}$ are not arcs, so $s_{j+1} s_{j}$ and $t_{j+2}t_{j+1}$ are arcs. 
If $\ind(t_{j}) <  \ind( s_{j+2})$, then $t_{j}s_{j+2}\in A(D)$, and if $\ind(t_{j}) =  \ind( s_{j+1})$, then there is a $(t_j, s_{j+2})$-path in $D\langle S_{t_j}\rangle$. In both cases, there is a $(t_j, s_{j+2})$-path $P$ not using the arc $s_{j+1}t_{j+1}$. Now $s_{j+1}s_jt_j\cup P \cup s_{j+2}t_{j+2}t_{j+1}$ is an $(s_{j+1}, t_{j+1})$-path in $D\setminus s_{j+1}t_{j+1}$, a contradiction.

\medskip

(ii) Because $D$ is strong, there must be a backward arc with tail in $S_p$ and a backward arc with head in $S_1$.
By the above inequality, necessarily $s_1\in S_p$ and $t_r\in S_1$.

\medskip

(iii) Assume for a contradiction that $\ind(t_j) = \ind(s_{j+1})=i$ and there do not exist two arc-disjoint $(t_j, s_{j+1})$-paths in $D\langle S_i \rangle$. By Menger's Theorem, there is an arc $a$ such that $D\langle S_i \rangle \setminus \{a\}$ has no $(t_j, s_{j+1})$-path.
But then, there is no  $(t_j, s_{j+1})$-path in $D \setminus \{a\}$, that is $a$ is a cut-arc of $D$. This contradicts the fact that $(S_1, \dots , S_p)$ is a nice decomposition.
\end{proof}

\section{Eulerian-connected semicomplete digraphs}\label{sec:spanning}

We first observe that being strong and having large in- and out-degrees is not sufficient to guarantee every arc of a tournament to be in a spanning eulerian subdigraph.

\begin{figure}[!hbtp]
\begin{center}
\tikzstyle{vertexX}=[circle,draw, top color=gray!5, bottom color=gray!30, minimum size=16pt, scale=0.7, inner sep=0.5pt]
\tikzstyle{vertexY}=[circle,draw, top color=gray!5, bottom color=gray!30, minimum size=55pt, scale=1.5, inner sep=1.5pt]
\tikzstyle{vertexZ}=[circle,draw, top color=gray!0, bottom color=gray!0, minimum size=60pt, scale=1.0, inner sep=1.5pt]

\begin{tikzpicture}[scale=0.4]
 \node (QQ) at (8.0,9.0) [vertexZ] {};
 \node (AA) at (3.0,4.0) [vertexY] {$A$};
 \node (BB) at (13.0,4.0) [vertexY] {$B$};
 \node (x) at (6.6,8) [vertexX] {$x$};
 \node (y) at (8.0,10) [vertexX] {$y$};
 \node (z) at (9.4,8) [vertexX] {$z$};
 \node (a) at (3.9,2.6) [vertexX] {$a$};
 \node (b) at (12.1,2.6) [vertexX] {$b$};

\draw [->, line width=0.03cm] (x) -- (y);
\draw [->, line width=0.03cm] (y) -- (z);
\draw [->, line width=0.03cm] (x) -- (z);
\draw [->, line width=0.03cm] (AA) -- (QQ);
\draw [->, line width=0.03cm] (AA) -- (BB);
\draw [->, line width=0.03cm] (QQ) -- (BB);
\draw [->, line width=0.03cm] (b) -- (a);
\end{tikzpicture}
\caption{The tournament $T$ in Proposition~\ref{prop:exple}.}\label{picEXPLE}
\end{center}
\end{figure}
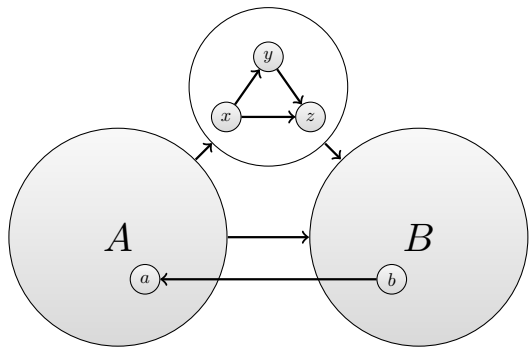

\begin{proposition}\label{prop:exple}
For every positive integer $k$, there exist strong tournaments with minimum in- and out-degrees at least $k$ containing an arc which is not in any spanning eulerian subdigraph.
\end{proposition}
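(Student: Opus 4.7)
The plan is to exhibit the tournament $T$ of Figure~\ref{picEXPLE} with $A$ and $B$ chosen to be large strong tournaments, and to verify that the forward arc $xz$ lies in no spanning eulerian subdigraph of $T$; this choice of arc is the ``bad'' arc.

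I would take $A$ and $B$ to be regular strong tournaments on $2k+1$ vertices (so that every internal in- and out-degree equals $k$), pick arbitrary $a\in A$ and $b\in B$, and build $T$ on $V(A)\cup V(B)\cup\{x,y,z\}$ by adding: the transitive triangle with arcs $xy$, $yz$, $xz$; all arcs from $A$ to $\{x,y,z\}$ and from $\{x,y,z\}$ to $B$; and all arcs from $A$ to $B$, with the single exception that $ab$ is replaced by $ba$. A routine degree count yields minimum in- and out-degree at least $k$ everywhere, and strong connectivity of $T$ follows from that of $A$ and $B$, together with the macro-arcs and the single backward arc $ba$.

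The heart of the proof is the claim that $xz$ lies in no spanning eulerian subdigraph $H$ of $T$. The arc $ba$ is the unique arc of $T$ from $B$ to $A$, hence a cut-arc; Euler balance on the cut $A$ forces $ba\in H$ and that exactly one arc of $H$ leaves $A$, and symmetrically exactly one arc of $H$ enters $B$. Now $Q=\{x,y,z\}$ has in-arcs only from $A$ and out-arcs only to $B$ in $T$, so Euler balance on the cut $Q$ gives that the number of $H$-arcs from $A$ to $Q$ equals the number from $Q$ to $B$; combining with the previous step this common value is $0$ or $1$. Value $0$ would leave $Q$ disconnected from $A\cup B$ in $H$, contradicting that $H$ is spanning and connected, so the value is $1$: a unique $H$-arc $a_1\dom p$ enters $Q$ from $A$ and a unique $H$-arc $q\dom b_1$ leaves $Q$ into $B$.

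Since in $T$ the only in-neighbours of $x$ lie in $A$ and the only out-neighbours of $z$ lie in $B$, the requirement that $x$ and $z$ have positive degree in $H$ forces $p=x$ and $q=z$. Writing $\alpha_{xy},\alpha_{yz},\alpha_{xz}\in\{0,1\}$ for the internal arcs of $Q$ chosen by $H$, the balance equations at $x$, $y$ and $z$ reduce to
\[
\alpha_{xy}+\alpha_{xz}=1,\qquad \alpha_{xz}+\alpha_{yz}=1,\qquad \alpha_{xy}=\alpha_{yz}.
\]
The only solution with $y$ non-isolated is $\alpha_{xy}=\alpha_{yz}=1$ and $\alpha_{xz}=0$, so $xz\notin H$. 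The main technical obstacle is organising the three successive cut equalities so that the ``exactly one arc across each boundary'' conclusion is transparent; once this is set up, the transitive structure of the triangle (with $x$ as source and $z$ as sink) forces the internal trail to be $x\dom y\dom z$ and the arc $xz$ is squeezed out.
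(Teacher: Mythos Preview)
Your proposal is correct and uses exactly the same construction and the same bad arc $xz$ as the paper; the paper's own proof simply asserts ``one can check that every eulerian subdigraph containing the arc $xz$ does not contain $y$'', whereas you supply the detailed cut/balance argument that justifies this. There is no substantive difference in approach.
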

\begin{proof}
Let $T$ (see Figure~\ref{picEXPLE}) be a  tournament  with vertex set $A\cup B \cup \{x,y,z\}$ such that $A\dom \{x,y,z\}$, $\{x,y,z\}\dom B$, $x\dom \{y,z\}$, $y\dom z$, there exist a vertex $a\in A$ and a vertex $b\in B$ such that $T$ contains all arcs from $A$ to $B$ except $ab$ (and so $b\dom a$), and $T\langle A\rangle$ and $T\langle B\rangle$ are strong tournaments with minimum in- and out-degrees at least $k$.
Clearly $T$ is strong and has minimum in- and out-degrees at least $k$. 
One can check that every eulerian subdigraph  containing the arc $xz$ does not contain $y$ and is therefore not spanning.
\end{proof}

In the remaining of the section, we prove Theorem~\ref{thm:spanning-trail}, which we recall.

\medskip
\noindent
{\bf Theorem~\ref{thm:spanning-trail}. }
{\em Let $D$ be a strong semicomplete digraph, and let $x$ and $y$ be two vertices of $D$.
If there are two arc-disjoint $(x,y)$-paths in $D$, then there is a spanning $(x,y)$-trail in $D\setminus \{yx\}$.
}

\medskip

Let us start we some useful preliminaries.
The following lemma is easy and well-known.
\begin{lemma}\label{lem:path-reduc}
Let $D$ be a non-strong semicomplete digraph.
For every out-generator $x$ of $D$ and in-generator $y$ of $D$, there is a hamiltonian $(x,y)$-path in $D$.
\end{lemma}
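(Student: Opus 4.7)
The plan is to work with the strong component decomposition of $D$. Since $D$ is semicomplete and non-strong, its strong components can be ordered uniquely as $D_1, D_2, \ldots, D_k$ with $k \geq 2$ so that $V(D_i) \sdom V(D_j)$ for all $i < j$; that is, every vertex of $D_i$ dominates every vertex of $D_j$ and no arc goes backward between distinct components.

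First I would locate $x$ and $y$. Because no arc goes from $V(D_j)$ to $V(D_i)$ when $i < j$, any out-generator must lie in the initial component, so $x \in V(D_1)$. Symmetrically, $y \in V(D_k)$. Conversely, any vertex of $D_1$ is an out-generator and any vertex of $D_k$ is an in-generator, thanks to strong connectivity within each component.

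Next I would apply Camion's Theorem (Theorem~\ref{thm:camion}) inside each nontrivial strong component $D_i$ to obtain a hamiltonian cycle. From the cycle in $D_1$, removing the arc entering $x$ yields a hamiltonian $(x,u_1)$-path $P_1$ in $D_1$. From the cycle in $D_k$, removing the arc leaving $y$ yields a hamiltonian $(v_k,y)$-path $P_k$ in $D_k$. For each intermediate index $2 \leq i \leq k-1$, I pick any hamiltonian path $P_i$ in $D_i$ (a singleton component contributes a single vertex, considered a trivial path).

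Finally I would concatenate $P_1, P_2, \ldots, P_k$ into one hamiltonian $(x,y)$-path of $D$ using, for each $i$, the arc from the terminal vertex of $P_i$ to the initial vertex of $P_{i+1}$; this arc exists precisely because $V(D_i) \sdom V(D_{i+1})$. There is no real obstacle: the entire argument rides on the tidy semicomplete structure of the condensation together with Camion's Theorem inside each component.
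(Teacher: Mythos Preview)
Your proof is correct. The paper does not actually supply a proof of this lemma; it only remarks that the result is ``easy and well-known.'' Your argument via the acyclic ordering of strong components (which in a semicomplete digraph is a genuine linear order with $V(D_i)\sdom V(D_j)$ for $i<j$), together with Camion's Theorem inside each component, is precisely the standard way to justify it, and all steps go through as you describe.
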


Lemma~\ref{lem:path-reduc} and Camion's Theorem immediately imply the following.

\begin{corollary}\label{cor:origin}
In a semicomplete digraph,
every out-generator is the initial vertex of a hamiltonian path.
\end{corollary}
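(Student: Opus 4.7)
The plan is to split into two cases depending on whether $D$ is strong, so that in one case Camion's Theorem applies directly and in the other Lemma~\ref{lem:path-reduc} applies.

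First, if $D$ is strong, then by Theorem~\ref{thm:camion} (Camion) it contains a hamiltonian cycle $C$. Traversing $C$ starting at $x$ and stopping just before returning to $x$ gives a hamiltonian path with initial vertex $x$; note that no assumption on $x$ is needed here.

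Now suppose $D$ is not strong, and let $(S_1, \ldots, S_p)$ be its strong decomposition in acyclic order, so that $S_i \sdom S_j$ for all $i<j$ (which exists by Proposition~\ref{prop:noback} and the semicomplete property). I would first argue that $x \in S_1$: any vertex reachable from $x$ by a path lies in $S_i$ with $i \geq \ind(x)$, so if $\ind(x)>1$ then $x$ cannot reach any vertex of $S_1$, contradicting the fact that $x$ is an out-generator. Next, pick an arbitrary vertex $y \in S_p$; I would verify that $y$ is an in-generator. Indeed, any vertex $v \in S_i$ with $i<p$ dominates all of $S_p$, so $v$ reaches $y$ by a direct arc followed by a path inside $S_p$ (which is strong). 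Within $S_p$ itself, strongness gives a path from any vertex to $y$. So $y$ is an in-generator of $D$.

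At this point Lemma~\ref{lem:path-reduc} applied to the pair $(x,y)$ yields a hamiltonian $(x,y)$-path in $D$, whose initial vertex is $x$, completing the proof.

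There is no real obstacle here: the only mild subtlety is making sure that the strong component containing an out-generator is $S_1$ and that any vertex of $S_p$ qualifies as an in-generator, both of which follow immediately from semicompleteness together with the ordering $S_i \sdom S_j$ for $i<j$.
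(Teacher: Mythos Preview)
Your proof is correct and follows exactly the approach the paper intends: the paper simply states that the corollary follows immediately from Lemma~\ref{lem:path-reduc} and Camion's Theorem, and your case split (strong versus non-strong) is precisely how one combines those two ingredients. The only superfluous detail is that in the non-strong case a vertex $v\in S_i$ with $i<p$ already has a direct arc to $y$ itself, so no intermediate path inside $S_p$ is needed; but this does not affect correctness.
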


We shall now prove a lemma which is a strengthening of Camion's Theorem.

\begin{lemma}\label{lem:camion-extended}
Let $D$ be a semicomplete digraph, $F$ a subdigraph of $D$, and $z$ a vertex in $V(F)$.
If $D\setminus A(F)$ is strong, then there is a cycle containing all vertices of $V(D) \setminus V(F)$ and $z$. 
\end{lemma}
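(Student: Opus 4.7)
My plan is to reduce the problem to Camion's Theorem via a contraction. Set $D'=D\setminus A(F)$ and $U=V(D)\setminus V(F)$, and form the digraph $\widetilde D$ obtained from $D'$ by contracting $V(F)$ to a single vertex $z^*$: the vertex set is $U\cup\{z^*\}$, and for $u,v\in U$, $uv\in A(\widetilde D)$ iff $uv\in A(D')$, while $uz^*\in A(\widetilde D)$ (respectively $z^*u$) iff some $y\in V(F)$ satisfies $uy\in A(D')$ (respectively $yu\in A(D')$).

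Next I would verify that $\widetilde D$ is strong and semicomplete. Strong connectivity is preserved because contracting any subset of the vertex set of a strong digraph yields a strong digraph. Semicompleteness follows from the fact that any two vertices of $\widetilde D$ come from a pair of vertices of $D$ sharing at least one arc (by semicompleteness of $D$), and any such arc has at most one endpoint in $V(F)$ and therefore does not belong to $F$; it hence survives in $D'$. Camion's Theorem then produces a hamiltonian cycle $\widetilde C=v_1v_2\cdots v_nz^*v_1$ of $\widetilde D$, with $v_i\in U$.

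To obtain the required cycle of $D'$ containing $U\cup\{z\}$, I would lift $\widetilde C$: the arcs $v_iv_{i+1}$ for $1\le i<n$ are already arcs of $D'$, and the two arcs incident to $z^*$ need to be replaced by a $(v_n,v_1)$-walk in $D'$ internal to $V(F)$ that passes through $z$. In the easiest case one can take $v_n\to z\to v_1$ directly when both of these are arcs of $D'$; more generally the two arcs lift to arcs ending at vertices $y,y'\in V(F)$ and we connect $y$ to $y'$ through $z$ using arcs of $D'$ inside $V(F)$.

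The main obstacle is guaranteeing the existence of such an internal $(y,y')$-path through $z$: this is the only point where the hypothesis that $D\setminus A(F)$ is strong is genuinely used (not merely that $D$ is strong). I expect the cleanest way to finish is either by an auxiliary induction on $|V(F)\setminus\{z\}|$, reducing each time by removing a non-cut vertex of $D'$ belonging to $V(F)\setminus\{z\}$ whenever one exists, or by a direct structural argument comparing the in- and out-neighbourhoods of $z$ in $D'$ to the arrangement of $z^*$ on $\widetilde C$, selecting the lifts of the two arcs at $z^*$ so that the required $V(F)$-internal path through $z$ is forced to exist.
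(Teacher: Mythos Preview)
Your contraction-and-lift plan has a genuine gap, and in fact cannot work as stated: the desired cycle may visit $V(F)$ in several non-contiguous stretches, so it need not contract to a hamiltonian cycle of $\widetilde D$ at all. Take $V(D)=\{z,a,v_1,v_2\}$ with arc set $\{v_1v_2,\, v_2a,\, av_1,\, v_1z,\, zv_2,\, az\}$, and let $V(F)=\{z,a\}$, $A(F)=\emptyset$. Then $D'=D$ is strong and the cycle $v_1zv_2av_1$ witnesses the lemma. But $\widetilde D$ has the \emph{unique} hamiltonian cycle $v_1v_2z^*v_1$; the only lifts of the two arcs at $z^*$ are $v_2a$ and $av_1$, giving $v_1v_2av_1$, and $z$ cannot be inserted since its sole out-neighbour is $v_2$. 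Your fix (a) also stalls here: $V(F)\setminus\{z\}=\{a\}$, and $a$ is a cut vertex of $D'$ (its removal leaves $v_2$ a sink). The good cycle $v_1zv_2av_1$ contracts to the closed \emph{walk} $v_1z^*v_2z^*v_1$, not a cycle of $\widetilde D$, so no choice of hamiltonian cycle in $\widetilde D$ can rescue the lift; fix (b) is not developed enough to overcome this.

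The paper avoids the problem by working with the induced subdigraph $D^\star := D\langle (V(D)\setminus V(F))\cup\{z\}\rangle$ rather than a contraction. Since $z$ is an actual vertex of $D^\star$, any hamiltonian cycle of $D^\star$ already contains $z$: if $D^\star$ is strong, Camion's theorem finishes immediately. If $D^\star$ is not strong, let $X$ and $Y$ be its sets of out- and in-generators; since $D$ is strong there is a $(Y,X)$-path $P$ in $D$, and after deleting the internal vertices of $P$ from $D^\star$, Lemma~\ref{lem:path-reduc} yields a hamiltonian $(t(P),s(P))$-path $Q$, so that $P\cup Q$ is the desired cycle. In the example above this produces $P=v_2av_1$ and $Q=v_1zv_2$. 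By keeping $z$ as a single vertex instead of burying it in a contracted blob, the paper never has to thread the cycle through $z$ after the fact.
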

\begin{proof}
Let $D'= D\langle (V(D) \setminus V(F)) \cup \{z\}\rangle$.
If $D'$ is strong, then by Camion's Theorem, it has a hamiltonian cycle, which has the desired property.

If $D'$ is not strong, then let $X$ be its set of out-generators and let $Y$ be its set of in-generators. Since $D\setminus A(F)$ is strong, there is a $(Y,X)$-path $P$ in $D$. Set $D''=D' - P(s(P), t(P))$.
Clearly, $t(P)$ is an out-generator of $D''$ and $s(P)$ is an in-generator of $D''$.
Hence, by Lemma~\ref{lem:path-reduc}, $D''$ has a hamiltonian path $Q$ from $t(P)$ to $s(P)$.
The union of $P$ and $Q$ is the desired cycle.
\end{proof}

\begin{proof}[Proof of Theorem~\ref{thm:spanning-trail}]
We proceed by induction on the number of vertices, the result holding trivially when $|V(D)|=3$.

By the assumption there are two arc-disjoint $(x,y)$-paths $P_1, P_2$. Let $y'_i$ be the out-neighbour of $x$ in $P_i$ and let $x'_i$ be the in-neighbour of $y$ in $P_i$.
We assume that $P_1\cup P_2$ has as few arcs as possible and under this assumption that $P_1$ is as short as possible.
In particular, $x'_2$ and $y'_2$ are not in $V(P_1)$ and all internal vertices of $P_1$ except $y'_1$ dominate $x$, and 
all internal vertices of $P_1$ except $x'_1$ are dominated by $y$.

\bigskip

Assume first that $x\dom y$. 
By our choice of $P_1$ and $P_2$, we have $P_1=xy$.
The digraph $D\setminus A(P_1)$ is $D\setminus \{yx\}$ and contains $P_2$. Hence it is strong, so by Lemma~\ref{lem:camion-extended}, $D\setminus A(P_1)$ contains a cycle $C$ covering all vertices of $V(D) \setminus \{y\}$.
The union of $C$ and $P_1$ is a spanning $(x,y)$-trail in $D\setminus \{yx\}$.

\bigskip

Assume now that $xy\notin A(D)$. Then $y\dom x$ and $P_1$ has length at least $2$.
Let $w_1$ be the in-neighbour of $x'_1$ on $P_1$.
Set $D'= D\setminus \{yx\}$.

Assume first that $D'$ is not strong. Since $D$ is strong, by Camion's Theorem, it contains a hamiltonian cycle $C$.
Now $C$ must contain the arc $yx$, and $C\setminus \{yx\}$ is a hamiltonian $(x,y)$-path, and so a spanning $(x,y)$-trail in $D'$.
Henceforth, we assume that $D'$ is strong.

\medskip

If $D'\setminus A(P_1)$ is strong, then,
by Lemma~\ref{lem:camion-extended}, $D'\setminus A(P_1)$ contains a cycle $C$ covering all vertices of $V(D) \setminus V(P_1)$ and a vertex of $V(P_1)$.
The union of $C$ and $P_1$ is a spanning $(x,y)$-trail in $D'$.
Henceforth we may assume that $D'\setminus A(P_1)$ is not strong. 
Let $(X,Y)$ be a partition of $V(D)$ such that there is no arc from $Y$ to $X$ in $D'\setminus A(P_1)$ and $Y$ is minimal with respect to inclusion. Then it is easy to see that $D\langle Y \rangle$ is strong.
Since $D'$ is strong, there must be an arc of $P_1$ with tail in $Y$ and head in $X$.
Observe that because $P_2$ is a path in $D'\setminus A(P_1)$, we cannot have $x\in Y$ and $y\in X$.

Assume for a contradiction that $x\in X$ and $y\in Y$.
The vertex $x'_1$ is the unique vertex of $P_1$ in $X$ because all other internal vertices of $P_1$ are dominated by $y$.
Similarly, vertex $y'_1$ is the unique vertex of $P_1$ in $Y$ because all others internal vertices of $P_1$ dominate $x$.
 So $P_1=xy'_1x'_1y$.
 Consider now $P_2$ and recall that $x'_2,y'_2\not\in V(P_1)$ and  $|V(P_2)|\geq |V(P_1)|=4$. The vertex $y'_2$ is dominated by $y$, so it must be in $Y$. Similarly, $x'_2$ dominates $x$, so it must be in $X$.
 But then an arc of $A(P_2)$ must have tail in $Y$ and head in $X$, a contradiction.

Assume that $x,y\in Y$.
Vertex $x'_1$ is the unique vertex of $P_1$ in $X$ because all other internal vertices of $P_1$ are dominated by $y$. Furthermore $w_1x'_1$ is the unique arc of $D$ from $Y$ to $X$.
Moreover, since $D$ is strong, $x'_1$ must be an out-generator of $D\langle X \rangle$. Thus, by Corollary~\ref{cor:origin}, there is a hamiltonian path $Q_X$ of $D\langle X \rangle$ with initial vertex $x'_1$.
The terminal vertex of $Q_X$ dominates $Y\setminus \{w_1\}$.
Let $D'' = D\langle Y\rangle \cup \{w_1y\}$. This digraph is strong. Observe moreover that $w_1y$ was not in $A(D)$ by our choice of $P_1$. Therefore $P_1[x,w_1]\cup w_1y$ and $P_2$ are two arc-disjoint $(x,y)$-paths in $D''$.
By the induction hypothesis, there is a spanning $(x,y)$-trail $W$ in $D''$.
Let $u$ be an out-neighbour of $w_1$ in $W$. Replacing the arc $w_1u$ by $wx'_1\cup Q_X \cup t(Q_X)u$, we obtain a spanning $(x,y)$-trail in $D$.

By symmetry, we get the result if $x,y\in X$.
\end{proof}

\begin{remark}\ 

  \begin{itemize}
\item Note that in the spanning $(x,y)$-trail given by the above proof, every vertex has out-degree at most $2$.

\item The proof of Theorem~\ref{thm:spanning-trail} can easily be translated into a polynomial-time algorithm.
\end{itemize}
\end{remark}

\section{Arcs contained in no spanning eulerian subdigraph}\label{sec:charac}

The aim of this section is to prove a characterization of the arcs of a semicomplete digraph $D$  that are not contained in any spanning eulerian subdigraph of $D$. Observe that if the semicomplete digraph is not strong, then there are only such arcs, and if the semicomplete digraph is $2$-strong there are no such arcs by Theorem~\ref{thm:2as-always-good}.

We first deal with digraphs of order at most $3$, before settling the case of digraphs of order at least $4$, for which we use structural properties established in Subsection~\ref{subsec:structure}.

\medskip

Let $D_3$ be the digraph with vertex set $\{x,y,z\}$ and arc set $\{xy, yz, zy , zx\}$.
The following easy proposition is left to the reader.
\begin{proposition}\label{prop:small}
Let $D$ be a strong semicomplete digraph $D$ of order at most $3$ and let $a$ be an arc of $D$.
The arc $a$ is contained in a spanning eulerian subdigraph unless $D=D_3$ and $a=zy$.
\end{proposition}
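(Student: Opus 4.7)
The plan is a finite case analysis, mirroring the remark that the proof is ``left to the reader''. First I would handle the trivial cases $|V(D)|\leq 2$: a strong semicomplete digraph on one vertex has no arcs, and the only strong semicomplete digraph on two vertices is the digon, which is itself eulerian and spanning. So the substance lies in $|V(D)|=3$, where I would enumerate the strong semicomplete digraphs on $\{x,y,z\}$ up to relabeling, organized by the number of digons they contain: the cyclic triangle $xyzx$ (no digon), $D_3$ (one digon), a digraph with two digons and one single arc, and the complete biorientation $\overleftrightarrow{K_3}$ (three digons).

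For each of these digraphs other than $D_3$, I would exhibit spanning eulerian subdigraphs covering every arc. In the cyclic triangle and in $\overleftrightarrow{K_3}$ the digraph itself is eulerian. For the two-digon case (say arcs $xy,yx,yz,zy,zx$), the subdigraph $\{xy,yz,zx\}$ is a spanning cycle covering $xy,yz,zx$, while $\{xy,yx,yz,zy\}$ is a spanning eulerian subdigraph covering $yx$ and $zy$. In $D_3$, the $3$-cycle $\{xy,yz,zx\}$ is spanning and eulerian and covers three of the four arcs, so only the arc $zy$ needs closer inspection.

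The only non-routine step is to verify that $zy$ lies in no spanning eulerian subdigraph of $D_3$, and I would do this by a short degree argument. Suppose $H\subseteq D_3$ is a spanning eulerian subdigraph with $zy\in A(H)$. Then $d^-_H(y)\geq 1$, so $d^+_H(y)\geq 1$, which forces $yz\in A(H)$ since $yz$ is the only out-arc of $y$ in $D_3$. Now $d^-_H(z)\geq 1$, matched by the out-arc $zy$. To span $\{x,y,z\}$ the subdigraph $H$ must use an arc incident with $x$. The available candidates are $xy$ and $zx$: adding $xy$ raises $d^-_H(y)$ to $2$, but $y$ has only one out-arc, violating $d^+_H(y)=d^-_H(y)$; adding $zx$ raises $d^+_H(z)$ to $2$, but $z$ has only one in-arc in $D_3$, again a contradiction. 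Hence no such $H$ exists, completing the proof.

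The only possible obstacle is bookkeeping the isomorphism types correctly so that each configuration is treated exactly once; this is straightforward given that a strong semicomplete digraph on three vertices is determined up to isomorphism by its number of digons.
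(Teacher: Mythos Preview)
Your proof is correct. The paper leaves this proposition to the reader, and your finite case analysis---organized by the number of digons in a strong semicomplete digraph on three vertices, with the degree argument ruling out $zy$ in $D_3$---is precisely the routine verification intended.
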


Let $D$ be a  strong semicomplete digraph of order at least $4$, $(S_1, \dots , S_p)$ a nice decomposition of $D$, and $(s_1t_1, s_2t_2, \dots , s_rt_r)$ the natural ordering of the backward arcs. A set $S_i$ is {\bf ignored} if there exists $j$ such that $\ind(s_{j+1})< i < \ind(t_{j-1})$ or $1 < i <\ind(t_{r-1})$ or $\ind(s_{2})< i < p$. 
An arc $uv$ of $D$ is {\bf regular-bad} if it is forward and there is an integer $i$ such that $\ind(u) < i < \ind(v)$ and $S_i$ is ignored (see Figure \ref{fig:ignored}.)
The arc $uv$ is {\bf left-bad} if $S_{2}=\{u\}$, $S_1=\{t_r\}$, $t_r\neq v$, and $t_ru \notin A(D)$.
The arc $uv$ is {\bf right-bad} if $S_{p-1}=\{v\}$, $S_p=\{s_1\}$, $s_1\neq v$, and $us_1 \notin A(D)$.
An arc is {\bf bad} if it is regular-bad, right-bad or left-bad.
A non-bad arc is {\bf good}.

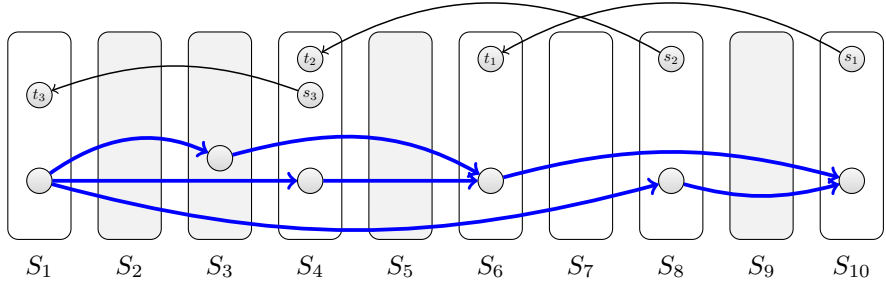
\begin{figure}[hbtp]
\begin{center}
\tikzstyle{vertexX}=[circle,draw, top color=gray!5, bottom color=gray!30, minimum size=16pt, scale=0.6, inner sep=0.5pt]
\tikzstyle{vertexY}=[circle,draw, top color=gray!5, bottom color=gray!30, minimum size=20pt, scale=0.7, inner sep=1.5pt]

\begin{tikzpicture}[scale=0.6]

 \draw (4,0.6) node {$S_1$};
 \draw [rounded corners] (3.3,1.2) rectangle (4.7,5.8);
 
  \draw (6,0.6) node {$S_2$};
 \draw [rounded corners, top color=gray!10, bottom color=gray!10] (5.3,1.2) rectangle (6.7,5.8);

  \draw (8,0.6) node {$S_3$};
 \draw [rounded corners, top color=gray!10, bottom color=gray!10] (7.3,1.2) rectangle (8.7,5.8);
 
  \draw (10,0.6) node {$S_4$};
 \draw [rounded corners] (9.3,1.2) rectangle (10.7,5.8);

 \draw (12,0.6) node {$S_5$};
 \draw [rounded corners, top color=gray!10, bottom color=gray!10] (11.3,1.2) rectangle (12.7,5.8);

 \draw (14,0.6) node {$S_6$};
 \draw [rounded corners] (13.3,1.2) rectangle (14.7,5.8);

 \draw (16,0.6) node {$S_7$};
 \draw [rounded corners] (15.3,1.2) rectangle (16.7,5.8);
 
  \draw (18,0.6) node {$S_8$};
\draw [rounded corners] (17.3,1.2) rectangle (18.7,5.8);
 
   \draw (20,0.6) node {$S_9$};
\draw [rounded corners, top color=gray!10, bottom color=gray!10] (19.3,1.2) rectangle (20.7,5.8);
 
 \draw (22,0.6) node {$S_{10}$};
 \draw [rounded corners] (21.3,1.2) rectangle (22.7,5.8);

\node (s1) at (22.0,5.2) [vertexX] {$s_1$};
\node (t1) at (14.0,5.2) [vertexX] {$t_1$};
 \node (s2) at (18.0,5.2) [vertexX] {$s_2$};
 \node (t2) at (10.0,5.2) [vertexX] {$t_2$};
 \node (s3) at (10.0,4.4) [vertexX] {$s_3$};
\node (t3) at (4.0,4.4) [vertexX] {$t_3$};
\node (u1) at (4.0,2.5) [vertexX] {};
\node (v1) at (8.0,3.0) [vertexX] {};
 \node (u2) at (10.0,2.5) [vertexX] {};
 \node (v2) at (14.0,2.5) [vertexX] {};
 \node (u3) at (18.0,2.5) [vertexX] {};
\node (v3) at (22.0,2.5) [vertexX] {}; 

\draw [->, line width=0.02cm] (s1) to [out=150, in=30] (t1);
\draw [->, line width=0.02cm] (s2) to [out=150, in=30] (t2);
\draw [->, line width=0.02cm] (s3) to [out=160, in=20] (t3);

 \draw [->, line width=0.05cm, color=blue] (u1) to [out=35, in=155] (v1);
 \draw [->, line width=0.05cm, color=blue] (u1) to [out=0, in=180] (u2);
 \draw [->, line width=0.05cm, color=blue] (u2) to [out=0, in=180] (v2);
 \draw [->, line width=0.05cm, color=blue] (u3) to [out=-15, in=195] (v3);
\draw [->, line width=0.05cm, color=blue] (v2) to [out=15, in=165] (v3);
\draw [->, line width=0.05cm, color=blue] (v1) to [out=15, in=150] (v2);
 \draw [->, line width=0.05cm, color=blue] (u1) to [out=-15, in=195] (u3);

\end{tikzpicture}
\end{center}
\caption{A nice decomposition of a strong semicomplete digraph with three backwards arcs (in thin black). The grey sets ($S_2$, $S_3$, $S_5$, $S_9$) are ignored.
The thick blue arcs are regular-bad.}\label{fig:ignored}
\label{fig2}
\end{figure}

\begin{theorem}
Let $D$ be a strong semicomplete digraph of order at least $4$, $(S_1, \dots , S_p)$ a nice decomposition of $D$ and $(s_1t_1, s_2t_2, \dots , s_rt_r)$ the natural ordering of the backward arcs. 
An arc is contained in a spanning eulerian subdigraph of $D$ if and only if it is good.
\end{theorem}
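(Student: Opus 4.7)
The plan is to prove the two directions separately. For the necessity direction (a bad arc lies in no spanning eulerian subdigraph), I intend to use arc-balance at carefully chosen cuts. For the sufficiency direction (a good arc lies in some spanning eulerian subdigraph), I plan to induct on the number $r$ of backward arcs, using Theorem~\ref{thm:spanning-trail} as the main tool within each ``region'' of the nice decomposition.

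For necessity, I will treat the three types of bad arcs in turn. The regular-bad case is the cleanest: if $uv$ is regular-bad with witness $i$ (so $\ind(u)<i<\ind(v)$ and $S_i$ is ignored for some $j$), Proposition~\ref{prop:ordering} implies that only $s_jt_j$ crosses each of the cuts $(\bigcup_{k\le i-1}S_k,\bigcup_{k\ge i}S_k)$ and $(\bigcup_{k\le i}S_k,\bigcup_{k\ge i+1}S_k)$. Arc-balance in a putative spanning eulerian subdigraph $E$ forces exactly one forward arc across each cut, and if $uv\in E$ then $uv$ must be that arc for both cuts; but $uv$ bypasses $S_i$, so no arc of $E$ is incident with $S_i$, contradicting that $E$ is spanning and connected. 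For right-bad, since $S_p=\{s_1\}$ the vertex $s_1$ has only one out-arc in $D$ (namely $s_1t_1$), so $s_1t_1$ is forced into $E$; semicompleteness together with $us_1\notin A(D)$ forces $s_1u$ to be an arc, and this backward cut-arc with tail $s_1$ must be $s_1t_1$, so $u=t_1$. A balance analysis at $\{v,s_1\}$ then shows that including $uv$ in $E$ would force the triangle $t_1\to v\to s_1\to t_1$ together with a spanning eulerian subdigraph of $D-\{v,s_1\}$ meeting $t_1$; the degree restrictions on the vertices of $S_{p-2}$ in $D-\{v,s_1\}$ rule this out. The left-bad case follows by reversing every arc of $D$ and applying the right-bad argument.

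For sufficiency, suppose $uv$ is good. If $uv$ is backward, then it is a cut-arc, and hence lies in every hamiltonian cycle of $D$ (which exists by Camion's Theorem~\ref{thm:camion}), finishing this case. Otherwise $uv$ is forward (or flat), and I will induct on $r$. The base $r=0$ means that $D$ is $2$-arc-strong, so Theorem~\ref{thm:2as-always-good} supplies the desired eulerian subdigraph. For the inductive step, I would isolate the backward arc $s_jt_j$ whose index-span $[\ind(t_j),\ind(s_j)]$ contains $\ind(u)$ and $\ind(v)$ (the non-regular-bad hypothesis ensures such a $j$ exists), apply Theorem~\ref{thm:spanning-trail} inside $D\langle S_{\ind(t_j)}\cup\cdots\cup S_{\ind(s_j)}\rangle$ to obtain a spanning $(t_j,s_j)$-trail that passes through $uv$, and then splice this trail with the analogous local trails obtained by induction for the other regions via the remaining backward arcs $s_kt_k$.

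The main obstacle will be executing the sufficiency construction cleanly, and in particular verifying the splicing step near the extreme blocks $S_1$ and $S_p$: there the non-left-bad and non-right-bad hypotheses have to be invoked concretely to guarantee that the local trail can start (or end) at the intended vertex without being blocked by a thin end-block. On the necessity side, the right-bad (and by symmetry left-bad) argument is the other delicate piece, because it cannot be settled by a single cut-counting step and requires a secondary analysis of $D-\{v,s_1\}$ to produce the contradiction.
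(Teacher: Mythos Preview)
Your sufficiency plan has two genuine gaps. First, the claim that non-regular-badness guarantees a single $j$ with $\ind(t_j)\le\ind(u)<\ind(v)\le\ind(s_j)$ is false: take $p=6$ with backward arcs $S_6\to S_4$, $S_4\to S_2$, $S_3\to S_1$ (this satisfies Proposition~\ref{prop:ordering}); then neither $S_2$ nor $S_3$ is ignored, so a forward arc from $S_1$ to $S_4$ is good, yet no backward arc spans the interval $[1,4]$. Second, and more seriously, Theorem~\ref{thm:spanning-trail} yields \emph{some} spanning $(x,y)$-trail but gives no control over which arcs it contains, so you cannot invoke it ``to obtain a spanning $(t_j,s_j)$-trail that passes through $uv$''; forcing a prescribed arc into a spanning eulerian structure is precisely the content of the theorem you are trying to prove, so this step is circular. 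The splicing-by-induction is also underspecified: the ``other regions'' are not in general strong semicomplete digraphs to which your inductive hypothesis applies.

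The paper's route is different. It inducts on $|V(D)|$, not on $r$. Flat arcs are handled by applying Theorem~\ref{thm:spanning-trail} to the pair $(v,u)$ (a spanning $(v,u)$-trail together with $uv$ is the desired subdigraph). For a good forward arc, two short contraction claims reduce to the extreme case $\ind(u)=1$, $\ind(v)=p$; in that case one builds spanning $(t_j,s_{j+1})$-trails inside the regions $X_j=\{x:\ind(t_j)\le\ind(x)\le\ind(s_{j+1})\}$ via Theorem~\ref{thm:spanning-trail}, concatenates them along the backward arcs $s_jt_j$, and closes the cycle with $uv$ itself. Your necessity argument for regular-bad arcs is fine; for left/right-bad the paper's argument is simpler than your balance sketch: from $|S_1|=|S_2|=1$ and $t_ru\notin A(D)$ one gets $u=s_r$ and $d^-_D(u)=1$, hence in any spanning eulerian subdigraph $u$ has out-degree~$1$, and since $t_r$'s unique in-neighbour is $u$ that single out-arc must be $ut_r$, not $uv$.
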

\begin{proof}
Recall that an arc $uv$ is contained in a spanning eulerian subdigraph of $D$ if and only if there is a spanning $(v,u)$-trail in $D\setminus \{uv\}$.

\medskip

Let us first prove that a bad arc is not contained in any spanning eulerian subdigraph.

Assume first that $uv$ is a regular-bad arc. Let $i_0$ be an integer such that $\ind(u) < i_0 < \ind(v)$ and $S_{i_0}$ is ignored.
Let $j$ be the integer such that $\ind(s_{j+1}) < i_0 < \ind(t_{j-1})$.
Set $L_j = \bigcup_{i=1}^{\ind(s_{j+1})} S_i$, $R_j= \bigcup_{i=\ind(t_{j-1})}^{p} S_i$, and $M_j=V(D) \setminus (L_j\cup R_j)$.
Consider a $(v,u)$-trail $W$ in $D$. It must start in $R_j$ and then use $s_jt_j$, which is the unique arc from $R_j$ to $L_j\cup M_j$. 
But then $W$ cannot return to $R_j\cup M_j$ after using $s_jt_j$, 
as $u \in L_j$ and $s_jt_j$ is the unique arc from $R_j\cup M_j$ to $L_j$.
Hence $W$ is not spanning, because it contains no vertex of $M_j$.
Therefore there is no spanning $(v,u)$-trail in $D\setminus \{uv\}$.

Assume now that $uv$ is a left-bad arc. Since $D$ is semicomplete, $ut_r \in A(D)$. 
By Proposition~\ref{prop:nosame} (i),  $u$ is the unique in-neighbour of $t_r$, and $u$ has in-degree $1$ in $D$.  
Thus any spanning eulerian subdigraph $E$ contains $ut_r$. Moreover $u$ has in- an out-degree $1$ in $E$ and so $E$ does not contain $uv$. 
Similarly, if $uv$ is right-bad, we get that there is no spanning eulerian subdigraph containing $uv$ in $D$. 

\bigskip

We shall now prove by induction on $|D|$ that a good arc $uv$ is contained in a spanning eulerian subdigraph.
This is equivalent to proving the existence of a spanning $(v,u)$-trail in $D\setminus \{uv\}$.
If $|D|=4$, the statement can be easily checked. Therefore, we now assume that $|D| > 4$.

For each $1\leq j<r$, let $N_j$ be a $(t_j, s_{j+1})$-path in $D\langle S_{t_j}\rangle$ if $\ind(t_j)=\ind(s_{j+1})$ and let $N_j=(t_j, s_{j+1})$ otherwise (that is if $\ind(t_j) < \ind(s_{j+1})$).
Let $N=N_1 \cup \cdots \cup N_{r-1}$. Note that $N$ is an $(s_1, t_r)$-path containing all backward arcs. 

\medskip

We first consider the backward arcs.
Let $P_1$ be a hamiltonian path of $D\langle S_1\rangle$ with initial vertex $t_r$ and let $x$ be its terminal vertex. Let $P_p$ be a hamiltonian path of $D\langle S_p\rangle$ with terminal vertex $s_1$ and let $y$ be its initial vertex. Then $Q_1=P_p\cup{}N\cup{}P_1$ is a $(y,x)$-path. 
Observe that in the semicomplete digraph $D - V(Q_1(y,x))$, $x$ has in-degree zero and $y$ has out-degree zero.
Hence, by Lemma~\ref{lem:path-reduc}, there is a hamiltonian $(x,y)$-path $Q_2$ in $D - V(Q_1(y,x))$. Thus $Q_1\cup Q_2$ is a hamiltonian cycle containing all  backward arcs.

\medskip

Assume now that $uv$ is a flat arc.
In $D$, there are two arc-disjoint $(v,u)$-paths. Indeed, suppose not. By Menger's Theorem, there would be a cut-arc separating $v$ from $u$.
But this cut-arc must be in $D\langle S_u\rangle = D\langle S_v\rangle$, which is strong, contradicting that we have a nice decomposition.
Therefore, by Theorem~\ref{thm:spanning-trail}, there is a spanning $(v,u)$-trail in $D\setminus \{uv\}$.

\medskip

Assume finally that $uv$ is a good forward arc.

\begin{claim}\label{claim:milieu}
If $\ind(u)\geq 3$ or $\ind(u) = 2$ and $|S_1|>1$, then $D$ has a spanning eulerian subdigraph containing $uv$.
\end{claim}
\begin{subproof}
Let $L = \{x \mid \ind(x) < \ind(u)\}$, and $R= \{x \mid \ind(x) \geq \ind(u)\}$, and let $j$ be the integer such that
$s_{j} \in R$ and $t_{j}\in L$.
Let $D_L$ be the digraph obtained from $D\langle L\rangle$ by adding a vertex $z_L$ and all arcs from $L$ to $z_L$ and $z_Lt_{j}$.
Let $D_R$ be the digraph obtained from $D\langle R \rangle$ by adding a vertex $z_R$ and all arcs from $z_R$ to $R$ and $s_{j}z_R$.
Observe that $D_L$ and $D_R$ are strong. Moreover, $(\{z_R\}, S_{\ind(u)}, \dots , S_p)$ is a nice decomposition of $D_R$. Thus $uv$ is neither regular-bad nor a right-bad in $D_R$ for otherwise it would already be regular-bad or right-bad in $D$, and it is not left-bad in $D_R$ because $z_R$ dominates $u$ in this digraph.

Since $\ind(u)\geq 3$ or  $\ind(u) = 2$ and $|S_1|>1$, then $D_R$ is smaller than $D$. Observe moreover that if  $D_R$ is  isomorphic to $D_3$, then the arc $uv$ is in the spanning eulerian subdigraph  $uvz_Ru$.
Therefore, by the induction hypothesis, or this observation, in $D_R$ there is a spanning eulerian subdigraph $E_R$ containing $uv$. Since $z_R$ has in-degree $1$ in $D_R$, $E_R$ contains the arc $s_{j}z_R$ and an arc $z_Ry_R$ for some $y_R\in R$.
By Camion's Theorem, there is a hamiltonian cycle $C_L$ of $D_L$.
It necessarily contains the arc $z_Lt_{j}$ because $z_L$ has out-degree $1$ in $D_L$.
Let $y_L$ be the in-neighbour of $z_L$ in $C_L$. Observe that $y_L\neq t_{j}$, because $|V(D_L)|\geq 3$.
Thus $y_L\dom y_R$, and the union of $C_L - z_L$, $y_Ly_R$, $E_R - z_R$ and $s_{j}t_{j}$ is a spanning eulerian subdigraph of $D$ containing $uv$.
\end{subproof}

By Claim~\ref{claim:milieu}, we may assume that $\ind(u)=1$ or $\ind(u)=2$ and $|S_1|=1$ (that is $S_1=\{t_r\}$).
Similarly, we can assume $\ind(v)=p$ or $\ind(v)=p-1$ and $|S_p|=1$ (that is $S_p=\{s_1\}$).

\begin{claim}\label{claim:ind2}
If $\ind(u) = 2$ and $|S_1|=1$, then $D$ has a spanning eulerian subdigraph containing $uv$.
\end{claim}
\begin{subproof}
Assume first that $r=1$ or $\ind(t_{r-1})>2$.
Let $D_1$ be the strong semicomplete digraph obtained form $D$ by removing $t_r$ and adding the arc $s_ru$. Then $|D_1| = |D|-1 \geq 4$ and $(S_2, \dots , S_p)$ is a nice decomposition of $D_1$.
Consequently, $uv$ is not bad and so, by the induction hypothesis, there is a spanning eulerian subdigraph $W_1$ of $D_1$ containing $uv$. Necessarily, $W_1$ contains $s_ru$ which is a cut-arc in $D_1$. Hence $(W_1\setminus \{s_ru\}) \cup s_rt_ru$ is a spanning eulerian subdigraph of $D$ containing $uv$.

Assume now that $r\geq 2$ and $\ind(t_{r-1})=2$.
Then $D_2=D - t_r$ is strong and $uv$ is good in $D_2$.
Therefore, by the induction hypothesis, there is a spanning eulerian subdigraph $W_2$ of $D_2$ containing $uv$.
If $s_r$ is the tail of an arc $s_rw \in A(W_2 )\setminus \{uv\} $, then $(W\setminus \{s_rw\}) \cup s_rt_rw$ is a spanning eulerian subdigraph of $D$ containing $uv$. 
If not, then $s_r=u$ and $v$ is the only out-neighbour of $u$ on $W_2$. 
Thus $u$ has a unique in-neighbour $z$ in $W_2$.  Since $uv$ is not left-bad, we have $d^-_D(u)\geq 2$.
 Thus $u$ has an in-neighbour $y$ distinct from $z$. If $y=t_r$ then $W_2\cup ut_ru$ is a spanning eulerian subdigraph of $D$ containing $uv$, and if $y\neq t_r$ then $W_2\cup ut_ryu$ is a spanning eulerian subdigraph of $D$ containing $uv$ (Note that $t_ry\in A(D)$ because by Proposition~\ref{prop:nosame} (i), $D$ cannot contain the arc $yt_r$).
\end{subproof}

By Claims~\ref{claim:milieu} and~\ref{claim:ind2}, we may assume that  $\ind(u)=1$ and $\ind(v)=p$.

For every $1\leq i\leq p$, let $P_i$ be a hamiltonian path of $D\langle S_i\rangle$.

Set $t_0=v$ and $s_{r+1}=u$.
For $0\leq j \leq r$, let $X_j=\{x \mid \ind(t_j) \leq \ind(x) \leq \ind(s_{j+1})\}$.
Observe that the $X_j$, $0\leq j \leq r$, form a partition of $V(D)$ because each backward arc is a cut-arc.

\begin{claim}\label{claim:Xj}
For every $0\leq j \leq r$, there is a spanning $(t_j, s_{j+1})$-trail $T_j$ in $D\langle X_j\rangle$.
\end{claim}
\begin{subproof}
Set $i_1=\ind(t_j)$ and $i_2=\ind(s_{j+1})$.

If $i_1 < i_2$, then pick a vertex $x_i$ in each set $S_i$ for $i_1< i < i_2$.
Then $t_jx_{i_1+1} \cdots x_{i_2-1}s_{j+1} \cup \bigcup_{i=i_1}^{i_2} P_i$ is a spanning $(t_j, s_{j+1})$-trail in $D\langle X_j\rangle$.

Assume now that $i_1=i_2$.
There must be two arc-disjoint $(t_j, s_{j+1})$-paths in $D\langle X_j\rangle$, for otherwise, by Menger's Theorem, there is a partition $(T, S)$ of $S_{i_1}$ with $t_j\in T$, $s_{j+1}\in S$ such that there is a unique arc $a$ with tail in $T$ and head in $S$.
But then $a$ would also be a cut-arc of $D$, which is impossible because it is a flat arc.
Now, by Theorem~\ref{thm:spanning-trail}, there is a spanning $(t_j, s_{j+1})$-trail in $D\langle X_j\rangle=S_{i-1}$.
\end{subproof}

Now $\bigcup^r_{i=0} T_j  \cup \{s_jt_j \mid 1\leq j \leq r\} \cup \{uv\}$ is a spanning eulerian subdigraph of $D$ containing $uv$.
\end{proof}

\section{Eulerian spanning subdigraphs avoiding prescribed arcs}\label{sec:avoid}

In this section, we give some support for Conjecture~\ref{Euleravoidkarcs}.

We first prove the existence of a minimum function $f(k)$ such that every $f(k)$-arc-strong semicomplete digraph contains a spanning eulerian subdigraph avoiding any prescribed set of $k$ arcs. 
Conjecture~\ref{Euleravoidkarcs} states that $f(k)=k+1$. 

We need the following theorem. A digraph is {\bf semicomplete multipartite} if it can be obtained from a complete multipartite graph $G=(V,E)$ by replacing each edge $uv\in E$ by either a 2-cycle on $u,v$ or one of the two arcs $uv,vu$.

  \begin{theorem}[Bang-Jensen and Maddaloni~\cite{bangJGT79}]\label{thm:SMDsupereuler}
    A strong semicomplete multipartite digraph has a spanning eulerian subdigraph if and only if it is strong and has an eulerian factor. Furthermore, there exists a polynomial-time algorithm for finding a spanning eulerian subdigraph in a strong semicomplete multipartite digraph $D$ or concluding that $D$ has no eulerian factor.
  \end{theorem}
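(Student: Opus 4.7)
The plan is to prove both directions separately, with the easy direction being immediate: a spanning eulerian subdigraph $H$ of $D$ is by definition connected and balanced at every vertex, so it is itself an eulerian factor, and its existence certifies that $D$ is strong. All the content lies in the converse implication, namely that in a strong semicomplete multipartite digraph $D$, the existence of some (possibly disconnected) eulerian factor implies the existence of a spanning eulerian subdigraph.

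First I would reduce the problem of finding an eulerian factor to a matching problem: build the bipartite ``split'' graph in which each vertex $v$ of $D$ is replaced by an out-copy $v^{+}$ and an in-copy $v^{-}$, and each arc $uv$ of $D$ becomes the edge $u^{+}v^{-}$; an eulerian factor of $D$ then corresponds exactly to a perfect matching in this bipartite graph (decode each matched edge back to an arc). This runs in polynomial time and either returns an eulerian factor or certifies that none exists. Note that this immediately covers the algorithmic conclusion of the theorem modulo the merging step below.

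The heart of the proof is a merging argument. Choose an eulerian factor $F$ minimising the number of weakly connected components, and assume for contradiction that this number is at least two, with components $C_{1},C_{2},\dots$. Since $D$ is strong there exists a directed cycle $Z$ of $D$ meeting at least two components of $F$. If $A(Z)\cap A(F)=\emptyset$, then $F\cup A(Z)$ is immediately an eulerian factor with strictly fewer components. Otherwise one forms the symmetric difference $F\oplus A(Z)$: balancedness at every vertex is preserved because both $F$ and $Z$ contribute balanced in- and out-degree contributions locally. The semicomplete multipartite hypothesis, which forces at least one arc (often a 2-cycle) between any two vertices in distinct partition classes, is what lets us choose $Z$ so that the number of components strictly decreases. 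Iterating the reduction $O(|V(D)|)$ times produces a one-component eulerian factor, i.e.\ a spanning eulerian subdigraph, contradicting minimality unless the process has already terminated.

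The main obstacle is exactly the last sentence of the previous paragraph: verifying that a careful choice of $Z$ — for instance a shortest cycle of $D$ meeting more than one component of $F$, or a judicious pair of boundary-crossing arcs together with short connectors inside the $C_i$'s — really fuses two components rather than fragmenting one. The multipartite condition is essential to rule out pathological configurations (whenever two components share a partition class, a third class provides the needed crossing), and it is here that one has to exploit the structural richness of the digraph most heavily. Combining the matching step with the iterative merging yields the promised polynomial-time algorithm: either it outputs a spanning eulerian subdigraph, or the initial matching step certifies that no eulerian factor exists.
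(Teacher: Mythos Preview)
The paper does not prove this statement; Theorem~\ref{thm:SMDsupereuler} is quoted from Bang-Jensen and Maddaloni~\cite{bangJGT79} and used as a black box, so there is no proof in the paper to compare your proposal against.

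Assessing your sketch on its own merits, there are three concrete gaps. First, the reduction to bipartite matching is not a correct detector for eulerian factors: a perfect matching in the split graph encodes a \emph{cycle factor} (every vertex has in- and out-degree exactly~$1$), not an arbitrary eulerian factor, and the two notions are not equivalent. For instance, the digraph on $\{1,2,3,4\}$ with arc set $\{12,23,31,14,41\}$ is itself eulerian (hence an eulerian factor of itself) but has no cycle factor, since both $3^{+}$ and $4^{+}$ compete for $1^{-}$ in the split graph. The correct polynomial detector is a feasible circulation with lower bounds, exactly as the present paper does in the proof of Theorem~\ref{thm:EfactoravoidX}.

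Second, the claim that the symmetric difference $F\oplus A(Z)$ preserves balance is false. At a vertex $v$ traversed by $Z$, if the $Z$-arc into $v$ lies in $F$ but the $Z$-arc out of $v$ does not, then $d^{-}(v)$ drops by~$1$ while $d^{+}(v)$ rises by~$1$, destroying balance. Adding $A(Z)$ to $F$ (as a multiset of arcs) does preserve balance, but then the arc-simplicity of the factor is lost and the component-count bookkeeping changes.

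Third, and most importantly, you explicitly identify the crux --- that a suitable $Z$ can always be chosen so as to strictly decrease the number of components --- and then leave it unresolved. This is exactly where the semicomplete multipartite hypothesis must do real work, and the actual argument in~\cite{bangJGT79} is a careful case analysis on how two components of a minimal eulerian factor can interact, not a generic ``shortest cycle'' or symmetric-difference trick. Without that analysis the proposal is a plan rather than a proof.
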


\begin{theorem}
  \label{thm:f(k)}
  Every semicomplete digraph $D=(V,A)$ with $\lambda{}(D)\geq \frac{(k+1)^2}{4}+1$ has a spanning eulerian subdigraph which avoids any prescribed set of $k$ arcs.
\end{theorem}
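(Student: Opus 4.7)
The plan is to apply Theorem~\ref{thm:SMDsupereuler} to an appropriate spanning subdigraph obtained by possibly enlarging $F$. Let $H$ be the \emph{non-adjacency graph} on $V(D)$ whose edges are the pairs $\{u,v\}$ with $uv,vu\in F$; since $|F|\le k$, we have $|E(H)|\le k/2$. For $D\setminus F$ to be viewed as semicomplete multipartite, the parts of the partition must be independent sets of $D\setminus F$, hence cliques in $H$. When $H$ is not already a cluster graph, I would extend $F$ to a superset $F'\subseteq A(D)$ by adding, within each non-clique component of $H$, every arc of $D$ joining two vertices of that component that are not yet non-adjacent in $D\setminus F$. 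This turns each component of the resulting non-adjacency graph into a clique. A short calculation shows that the extremal shape is a ``tree-like'' component with $k/2$ edges on $k/2+1$ vertices, and yields $|F'|\le k(k+2)/4 = ((k+1)^2-1)/4$.

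It then remains to verify that $D\setminus F'$ is a strong semicomplete multipartite digraph admitting an eulerian factor. The multipartite structure, with parts being the completed components of the non-adjacency graph plus singletons, is immediate by construction. Strong connectivity (in fact $2$-arc-strong connectivity) of $D\setminus F'$ follows from $\lambda(D)\ge (k+1)^2/4+1 > |F'|+1$, giving $\lambda(D\setminus F')\ge 2$. To produce an eulerian factor, I would verify, via a K\"onig--Hall-type argument, that $|N^+_{D\setminus F'}(S)|\ge |S|$ for every $S\subseteq V$. Assuming the contrary and setting $T=V\setminus N^+_{D\setminus F'}(S)$, all arcs from $S$ to $T$ in $D$ must lie in $F'$. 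Writing $s=|S|$, $t=|T|$, $c=|V\setminus(S\cup T)|$ and applying the arc-connectivity of $D$ to the cuts $(S,V\setminus S)$ and $(V\setminus T,T)$ yields bounds of the form $\lambda(D)\le |F'|+sc$ and $\lambda(D)\le |F'|+ct$. Combined with the inequality $s+t>n$ (which forces $c\le n-\max(s,t)$ and $S\cap T\ne\emptyset$) and the bound on $|F'|$, this contradicts the hypothesis $\lambda(D)\ge (k+1)^2/4+1$.

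Theorem~\ref{thm:SMDsupereuler} then supplies a spanning eulerian subdigraph of $D\setminus F'$, which is \emph{a fortiori} a spanning eulerian subdigraph of $D\setminus F$ avoiding the prescribed arc set $F$. The main obstacle is the simultaneous balancing of the two ingredients: the augmentation cost $|F'|$ arising from the worst-case shape of $H$ (roughly $(k+1)^2/4$), and the slack in $\lambda$ needed to enforce the Hall-type condition for the eulerian factor. These two costs must be absorbed into the single quadratic bound $(k+1)^2/4+1$ on $\lambda(D)$, which explains why the final estimate is quadratic rather than the conjectured linear $k+1$.
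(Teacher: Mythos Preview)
Your overall strategy---enlarge $F$ to some $F'$ so that $D\setminus F'$ is semicomplete multipartite and then invoke Theorem~\ref{thm:SMDsupereuler}---is exactly what the paper does. The execution, however, breaks in two places.

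First, your description of the non-adjacency graph $H$ is wrong. A pair $\{u,v\}$ is non-adjacent in $D\setminus F$ whenever \emph{every} arc of $D$ between $u$ and $v$ lies in $F$; since $D$ is merely semicomplete, there may be only one such arc. Thus a single arc of $F$ can already create a non-edge, and the correct bound is $|E(H)|\le k$, not $k/2$. (If $D$ is a tournament your $H$, as written, is edgeless and you would conclude $F'=F$, which is false.) With $|E(H)|\le k$, a single tree component of $H$ may have $k+1$ vertices, and making it independent can force you to delete up to $(k+1)k$ arcs in total, so your bound $|F'|\le ((k+1)^2-1)/4$ does not hold.

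Second, and this is the crux, bounding the \emph{total} size of $F'$ is the wrong quantity; even the paper's construction deletes far more than $(k+1)^2/4$ arcs globally. What the paper controls is the number of deleted arcs \emph{across each directed cut}: writing $X_1,\dots,X_r$ for the connected components of the digraph induced by $F$, the arcs removed across a cut $(S,\bar S)$ number at most $\sum_i |S\cap X_i|\,|\bar S\cap X_i|\le \sum_i |X_i|^2/4\le (k+1)^2/4$, the last step because $\sum_i(|X_i|-1)\le k$ and convexity concentrates the mass in one component. It is this cut-wise estimate, not a global one, that combines with $\lambda(D)\ge (k+1)^2/4+1$ to make $D^*$ strong. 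Your Hall-type argument for the eulerian factor suffers from the same defect: the inequalities you sketch do not yield a contradiction without the per-cut accounting.
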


\begin{proof} 
Consider a set $A'$ of $k$ arcs and let $X_1,X_2,\ldots{},X_r$, $r\leq k$, be the connected components of $\induce{D}{A'}$. Let $D^*$ be the semicomplete multipartite digraph that we obtain by deleting all arcs of $A$ which lie inside some component $X_i$. Clearly $\alpha{}(D^*)\leq k+1$, and it is easy to see that we did not delete more than $\frac{(k+1)^2}{4}$ arcs across any cut of $D$ so $D^*$ is strong and the claim follows from Theorem~\ref{thm:SMDsupereuler}. 
\end{proof}

Below we shall verify Conjecture \ref{Euleravoidkarcs} for the cases $k=1$, $k=2$ and $k=3$.
\begin{theorem}
  \label{thm:k<=3}
  Let $k\in \{1,2,3\}$.
Every $(k+1)$-arc-strong semicomplete digraph  has a spanning eulerian digraph which avoids any prescribed set of $k$ arcs.
\end{theorem}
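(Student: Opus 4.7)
My plan is to split the argument based on whether $A'$ contains both arcs of some $2$-cycle of $D$. If it does not, then $D\setminus A'$ remains a semicomplete digraph; since $D$ is $(k+1)$-arc-strong and $|A'|=k$, we still have $\lambda(D\setminus A')\ge 1$, i.e.\ $D\setminus A'$ is strong. Camion's Theorem (Theorem~\ref{thm:camion}) then yields a hamiltonian cycle of $D\setminus A'$, which is the desired spanning eulerian subdigraph of $D$ avoiding $A'$. This already disposes of $k=1$ entirely, and of the no-$2$-cycle subcase of $k=2,3$.

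The substantive case, arising only for $k\in\{2,3\}$, is when $A'$ contains a $2$-cycle $\{uv,vu\}$, plus (when $k=3$) possibly one additional arc $xy$. Then $D\setminus A'$ is a semicomplete multipartite digraph whose only nonadjacencies are $\{u,v\}$ and, for $k=3$, possibly $\{x,y\}$ (precisely when $yx\notin A(D)$). Across any single cut at most one of $uv,vu$ goes forward, so at most $2$ arcs of $A'$ cross in the same direction; hence $\lambda(D\setminus A')\ge (k+1)-2\ge 1$, and $D\setminus A'$ is strong. Theorem~\ref{thm:SMDsupereuler} then reduces the problem to exhibiting an eulerian factor of $D\setminus A'$.

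Producing this eulerian factor is the main obstacle. I would verify Hall's condition $|N^+_{D\setminus A'}(S)|\ge |S|$ on the bipartite double cover of $D\setminus A'$, since this is equivalent (by K\"onig's theorem) to the existence of a cycle factor, which is the same as an eulerian factor. Camion's Theorem applied to $D$ gives a hamiltonian cycle, hence a perfect matching of the double cover of $D$, so Hall's condition $|N^+_D(S)|\ge |S|$ holds for every $S\subseteq V$. The deletion of the $k$ arcs of $A'$ can drop at most $k$ heads from $N^+(S)$. If some $S$ were Hall-deficient in $D\setminus A'$, then for each dropped head all arcs from $S$ to it would lie in $A'$; combining this with semicompleteness (a vertex outside $N^+_D(S)$ must dominate all of $S$) and with the $(k+1)$-arc-strongness of $D$ on the cut $(S,V\setminus S)$, one forces strictly more than $k$ mandatory arcs into $A'$, a contradiction. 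The remaining bookkeeping is a short case analysis on the positions of $u,v$, and (for $k=3$) of $x,y$, relative to $S$ and $V\setminus S$; the tightest situation occurs for $k=3$ with the endpoints of the $2$-cycle and of the extra arc straddling the cut.
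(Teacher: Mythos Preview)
Your opening dichotomy rests on a false claim. You assert that if $A'$ does not contain both arcs of some $2$-cycle of $D$, then $D\setminus A'$ remains semicomplete. This fails already for $k=1$: take any tournament $D$ and delete a single arc $uv$; then $u$ and $v$ become non-adjacent in $D\setminus\{uv\}$. The correct condition for $D\setminus A'$ to remain semicomplete is that every arc $uv\in A'$ has its reverse $vu\in A(D)\setminus A'$, which is far more restrictive than what you wrote. Consequently your argument does not cover $k=1$ at all, and for $k\in\{2,3\}$ your ``substantive case'' misses most of the difficulty: for $k=3$ the non-edges of $D\setminus A'$ can form a $P_4$ (e.g.\ when $A'=\{ab,bc,cd\}$ in a tournament), and this is exactly the configuration that requires real work in the paper's proof of Theorem~\ref{thm:k=3}.

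Even within your second case (a $2$-cycle $\{uv,vu\}\subseteq A'$ plus one further arc $xy$ when $k=3$), the claim that $D\setminus A'$ is semicomplete multipartite can fail: if $xy$ shares an endpoint with $\{u,v\}$ and $yx\notin A(D)$---say $x=u$---then the non-edges are $\{u,v\}$ and $\{u,y\}$ while $v$ and $y$ remain adjacent, so non-adjacency is not an equivalence relation and Theorem~\ref{thm:SMDsupereuler} does not apply. Your concluding Hall-condition sketch is also too vague to assess; note that a cycle factor is strictly stronger than an eulerian factor, and the bookkeeping you allude to is not carried out. The paper's route is quite different: it first shows (Lemma~\ref{lem:Efactoravoid}) that $D\setminus A'$ always has an eulerian factor, then develops merging lemmas for components of such a factor (Lemmas~\ref{lem:merge} and~\ref{lem:merge-star}) to prove a general star-set avoidance theorem (Theorem~\ref{thm:star-set}), which settles $k\le 2$ immediately; the case $k=3$ with non-edges forming a $P_4$ is then handled by a separate structural argument.
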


We need a number of preliminary results. 

\subsection{Preliminaries}

In this subsection we establish some results for general digraphs that are of independent interest and will be useful in our proofs in the next subsections.

  \subsubsection{Eulerian factors in semicomplete digraphs}\label{subsub:factor}

An {\bf eulerian factor} of a digraph $D=(V,A)$ is a spanning subdigraph $H=(V,A')$ so that $d_H^+(v)=d_H^-(v)>0$ for all $v\in V$. 
By a {\bf component} of the eulerian factor $H$  we mean a connected component of the digraph $H$.
Let $\dXY{X}{Y}$ denote the number of arcs from $X$ to $Y$.

\begin{theorem}\label{thm:EfactoravoidX}
A digraph $D$ has no eulerian factor if and only if $V(D)$ can be partitioned into $R_1$, $R_2$ and $Y$ so that
the following hold.

\begin{itemize}
\item $Y$ is independent. 
\item $\dXY{R_2}{Y}=0$, $\dXY{Y}{R_1}=0$ and $\dXY{R_2}{R_1}<|Y|$.
\end{itemize}
\end{theorem}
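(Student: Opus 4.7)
The plan is to reduce the problem to the standard fact that a digraph has an eulerian factor iff every vertex lies on a directed cycle, and then prove each implication separately---one by flow conservation and the other by an explicit choice of partition.

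The first step would be to establish this reformulation. By Euler's theorem, any eulerian factor decomposes into arc-disjoint directed cycles whose union covers $V(D)$, giving the forward direction. For the converse, if every vertex $v$ lies on some cycle $C_v$, then $H=\bigcup_{v\in V(D)} C_v$ satisfies $d_H^+(u)=d_H^-(u)\geq 1$ for all $u$ and is therefore an eulerian factor. Since $D$ is loopless, a vertex lies on some cycle iff its strong component has size at least $2$, so the theorem reduces to: $D$ has a strong component of size $1$ iff a partition with the stated properties exists.

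For the direction ``partition $\Rightarrow$ no eulerian factor'', I would assume $(R_1,R_2,Y)$ is such a partition and suppose for a contradiction that $H$ is an eulerian factor of $D$. Because $Y$ is independent and $\dXY{R_2}{Y}=\dXY{Y}{R_1}=0$, every arc of $H$ entering $Y$ has its tail in $R_1$, and every arc of $H$ leaving $Y$ has its head in $R_2$; summing $d_H^-(y)=d_H^+(y)\geq 1$ over $y\in Y$ then shows that $H$ contains at least $|Y|$ arcs from $Y$ to $R_2$. Flow conservation on $R_2$ (total $H$-out-degree equals total $H$-in-degree), combined with $\dXY{R_2}{Y}=0$, implies that the number of $R_2\to R_1$ arcs in $H$ equals the number of $R_1\to R_2$ arcs in $H$ plus the number of $Y\to R_2$ arcs in $H$, and is therefore at least $|Y|$. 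This contradicts $\dXY{R_2}{R_1}<|Y|$.

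For the other direction, assume $D$ has no eulerian factor. By the reformulation there is a vertex $v$ whose strong component is $\{v\}$. Set $Y=\{v\}$, let $R_1$ be the set of vertices distinct from $v$ that can reach $v$ in $D$, and let $R_2=V(D)\setminus(R_1\cup\{v\})$. Then $Y$ is independent; no arc from $R_2$ to $v$ exists (its tail would then reach $v$ and lie in $R_1$); no arc from $v$ to any $r\in R_1$ exists (composing with an $r\to v$ path would yield a cycle through $v$); and no arc from $R_2$ to $R_1$ exists (same reasoning, via an $R_1\to v$ path). Hence $\dXY{R_2}{R_1}=0<1=|Y|$, giving the required partition. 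The only really substantive step is the flow-counting argument in the third paragraph; the rest is a direct translation of the cycle-covering criterion.
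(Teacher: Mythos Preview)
Your argument for the direction ``partition $\Rightarrow$ no eulerian factor'' is correct: the flow-counting on $R_2$ is clean and matches the paper's reasoning for that implication.

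The other direction, however, rests on a false reformulation. You claim that $D$ has an eulerian factor if and only if every vertex lies on a directed cycle, and you justify the backward implication by taking $H=\bigcup_{v} C_v$. But a union of cycles taken as a \emph{set} of arcs need not have balanced degrees. Concretely, let $V(D)=\{a,b,c,d\}$ and $A(D)=\{ab,bc,ca,ad,dc\}$. Every vertex lies on a cycle ($abca$ covers $a,b,c$ and $adca$ covers $d$), yet $D$ has no eulerian factor: any spanning balanced subdigraph must contain $ab,bc$ (to cover $b$) and $ad,dc$ (to cover $d$), forcing $d_H^-(c)\geq 2>1\geq d_H^+(c)$. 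So your equivalence fails, and with it your construction of the partition: in this example there is no vertex whose strong component is a singleton, so your choice $Y=\{v\}$ never gets off the ground. (The theorem does hold here, with $Y=\{b,d\}$, $R_1=\{a\}$, $R_2=\{c\}$, and $\dXY{R_2}{R_1}=1<2=|Y|$; the point is that $|Y|>1$ is genuinely needed.)

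The paper avoids this issue by passing to a bipartite flow network: split each $v$ into $v^-,v^+$, impose lower bound $1$ on $v^-v^+$, and apply Hoffman's circulation theorem. An infeasible cut $(S,\bar S)$ yields an initial partition $(R_1',R_2',Y')$ satisfying a single inequality, and a minimization on $|Y'|$ is then used to clean up the extra terms and obtain $\dXY{R_2}{Y}=\dXY{Y}{R_1}=0$. This flow formulation is exactly what captures the balance condition that your cycle-covering criterion misses.
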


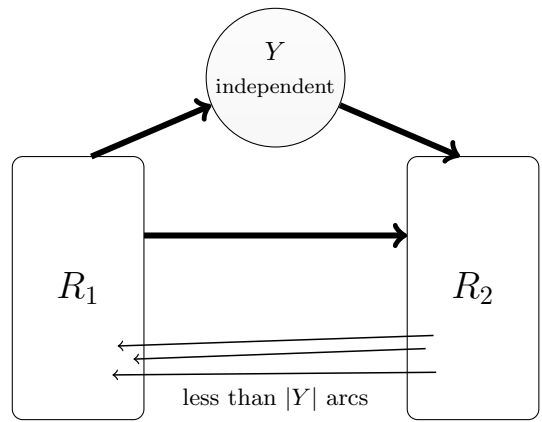
\begin{figure}[hbtp]
\begin{center}
 \tikzstyle{vertexY}=[circle,draw, top color=gray!3, bottom color=gray!6, minimum size=75pt, scale=0.7, inner sep=1.5pt]
 \tikzstyle{vertexBIG}=[ellipse, draw, scale=0.6, inner sep=3.5pt]
\begin{tikzpicture}[scale=0.35]
  \node (y) at (11,14) [vertexY] {};
  \draw [rounded corners] (1,1) rectangle (6,11);
  \draw [rounded corners] (16,1) rectangle (21,11);

 \draw (3.5,6.0) node {{\Large $R_1$}};
 \draw (18.5,6.0) node {{\Large $R_2$}};
 \draw (11,15.0) node {$Y$};
 \draw (11,13.6) node {{\footnotesize independent}};

 \draw [->, line width=0.08cm] (4,11) -- (y);
 \draw [->, line width=0.08cm] (y) -- (18,11);
 \draw [->, line width=0.08cm] (6,8) -- (16,8);
 \draw [->, line width=0.02cm] (17,4.2) -- (5,3.8);
 \draw [->, line width=0.02cm] (16.7,3.7) -- (5.6,3.3);
 \draw [->, line width=0.02cm] (17.1,2.8) -- (4.8,2.7);

 \draw (11,1.8) node {{\small less than $|Y|$ arcs}};

\end{tikzpicture}
\end{center}
\caption{An illustration of Theorem~\ref{thm:EfactoravoidX}. There are no arcs from $R_2$ to $Y$ and no arcs from $Y$ to $R_1$ and less than 
$|Y|$ arcs from $R_2$ to $R_1$.}
\label{figNoFactor}
\end{figure}

\begin{proof}

Let $D=(V,A)$ be any digraph and let 
$B$ be the bipartite digraph obtained from $D$ by splitting every vertex $v$ into an in-going part 
$v^-$ and and out-going part $v^+$. Formally,
$V(B) = \bigcup_{v\in V(D)} \{v^-, v^+\}$ and
$A(B) = \{v^-v^+ \mid v\in V(D)\}\cup \{x^+y^- \mid xy\in A(D)\}$.

Consider the flow network ${\cal N}=(B,l,u)$ with $l$, $u$, being lower and upper bounds on arcs, respectively, such that
$$l(v^-v^+) = 1,  ~~~~ u(v^-v^+) = +\infty ~~~~ l(x^+y^-) = 0,  ~~~~ u(x^+y^-) = 1$$
for every $v\in V(D)$, $xy\in A(D)$.

It is easy to check that there is a one-to-one correspondence between feasible integer-valued circulations on ${\cal N}$ and eulerian factors of $D$.

By Hoffman's circulation theorem~\cite{hoffman1960} (see also Theorem~4.8.2 in \cite{bang2009}), there exists a feasible integer circulation of ${\cal N}$ if (and only if)
\begin{equation}\label{eq:hoffman}
u(\bar{S} , S) \geq  l(S,\bar{S}) 
\end{equation}
for every $S\subseteq V(B)$.


First assume that $D$ has no eulerian factor, which implies that $u(\bar{S} , S) < l(S,\bar{S})$ for some $S \subseteq V(B)$.
Consider the following possiblities for every $x \in V(D)$ and construct $Y'$, $R_1'$ and $R_2'$ as illustrated below.

\begin{itemize}
\item If $x^- \in S$ and $x^+ \in \bar{S}$, then the arc $x^-x^+$ adds $1$ to $l(S,\bar{S})$, as $l(x^-x^+)=1$. Add $x$ to $Y'$.
\item If $x^+ \in S$ and $x^- \in \bar{S}$, then $u(x^-x^+)=+\infty$ which contradicts $u(\bar{S} , S) < l(S,\bar{S})$. So this case 
cannot happen.
\item If $x^- \in S$ and $x^+ \in S$, then add $x$ to $R_1'$.
\item If $x^- \in \bar{S}$ and $x^+ \in \bar{S}$, then add $x$ to $R_2'$.
\end{itemize}

Note that $R_1'$, $R_2'$ and $Y'$ is a partition of $V(D)$. We note that $l(S,\bar{S}) = |Y'|$, as the lower bound on all arcs except
the $x^-x^+$, $x\in V(D)$, is $0$. We now prove that the following holds.

\begin{equation}\label{eq:u}
u(\bar{S} , S) = \dXY{R_2'}{R_1'} + \dXY{R_2'}{Y'} + \dXY{Y'}{R_1'} + \dXY{Y'}{Y'}
\end{equation}

If $xy$ is an arc from $R_2'$ to $R_1'$ then we note that $u(x^+y^-)=1$ and therefore the arc $xy$ contributes $1$ to $u(\bar{S} , S)$.
Analogously, if $xy$ is an arc from $R_2'$ to $Y'$ or an arc from $Y'$ to $R_1'$, then it also contributes $1$ to $u(\bar{S} , S)$. If
$y_1 y_2$ is an arc within $Y$, then  $u(y_1^+y_2^-)=1$ and therefore the arc $y_1y_2$ also contributes $1$ to $u(\bar{S} , S)$.
As all arcs $x^+ y^-$ in the cut from $\bar{S}$ to $S$ have been counted, this proves Eq.~\eqref{eq:u}.

Now as $u(\bar{S} , S) < l(S,\bar{S}) = |Y'|$), we have

\begin{equation}\label{eq:Y'}
\dXY{R_2'}{R_1'} + \dXY{R_2'}{Y'} + \dXY{Y'}{R_1'} + \dXY{Y'}{Y'} <  |Y'|
\end{equation}

Assume that $Y'$ has minimum size such that Eq.~\eqref{eq:Y'} holds. We will first show that $\dXY{R_2'}{y}=0$ and $\dXY{Y'}{y}=0$ for all
$y \in Y'$. 
Assume for the sake of contradiction that this is not the case and let let $Y^*=Y' \setminus \{y\}$ and
let $R_2^* = R_2' \cup \{y\}$ and let $R_1^*=R_1'$. Then $|Y^*|=|Y'|-1$ and the following holds.

\begin{itemize}
\item $\dXY{R_2^*}{R_1^*} = \dXY{R_2'}{R_1'} + \dXY{y}{R_1'}$.
\item $\dXY{R_2^*}{Y^*} = \dXY{R_2'}{Y'} + \dXY{y}{Y'} - \dXY{R_2'}{y}$.
\item $\dXY{Y^*}{R_1^*} = \dXY{Y'}{R_1'} - \dXY{y}{R_1'}$.
\item $\dXY{Y^*}{Y^*} = \dXY{Y'}{Y'} - \dXY{Y'}{y} - \dXY{y}{Y'}$.
\end{itemize}

Summing up the four above equations we obtain the following (as we assumed that $\dXY{R_2'}{y}\not=0$ or $\dXY{Y'}{y}\not=0$).

\[
\begin{array}{l}
\dXY{R_2^*}{R_1^*} + \dXY{R_2^*}{Y^*} + \dXY{Y^*}{R_1^*} + \dXY{Y^*}{Y^*} \\
\ssX{} =  \dXY{R_2'}{R_1'}  + \dXY{R_2'}{Y'}   + \dXY{Y'}{R_1'}  + \dXY{Y'}{Y'} - \dXY{R_2'}{y}  - \dXY{Y'}{y}  \\
\ssX{} \leq \dXY{R_2'}{R_1'}  + \dXY{R_2'}{Y'}   + \dXY{Y'}{R_1'}  + \dXY{Y'}{Y'} - 1 \\
\ssX{} < |Y'|-1 \\
\ssX{} = |Y^*|  \\
\end{array}
\]

So we note that the partition $(Y^*,R_1^*,R_2^*)$ is a contradiction to the minimality of $Y'$ and we must have $\dXY{R_2'}{y}=0$ and $\dXY{Y'}{y}=0$ 
for all $y \in Y'$. Therefore $\dXY{R_2'}{Y'}=0$ and $\dXY{Y'}{Y'}=0$.
Analogously if $\dXY{y}{R_1'}\not=0$ for some $y \in Y$, then we can let $R_1'' = R_1' \cup \{y\}$, 
$Y'' = Y' \setminus \{y\}$ and $R_2'' = R_2'$ and obtain
the following (as $\dXY{Y'}{Y'}=0$).

\[
\begin{array}{l}
\dXY{R_2''}{R_1''} + \dXY{R_2''}{Y''} + \dXY{Y''}{R_1''} + \dXY{Y''}{Y''} \\
\ssX{} =  \dXY{R_2'}{R_1'}  + \dXY{R_2'}{Y'}   + \dXY{Y'}{R_1'}  + \dXY{Y'}{Y'} - \dXY{y}{R'} \\
\ssX{} \leq \dXY{R_2'}{R_1'}  + \dXY{R_2'}{Y'}   + \dXY{Y'}{R_1'}  + \dXY{Y'}{Y'} - 1 \\
\ssX{} < |Y'|-1 \\
\ssX{} = |Y''|  \\
\end{array}
\]

Therefore $\dXY{Y'}{R_1'}=0$, which implies that  $\dXY{R_2'}{R_1'} <  |Y'|$ and $\dXY{Y'}{R_1'}=\dXY{R_2'}{Y'}=\dXY{Y'}{Y'}=0$.
Therefore we have obtained the desired partition of $V(D)$.

\2

This proved one direction of the theorem.  Now assume that  we can partition the vertices of $V(D)$ 
into $R_1$, $R_2$ and $Y$ such that $Y$ is independent and $\dXY{R_2}{Y}=0$, $\dXY{Y}{R_1}=0$ and $\dXY{R_2}{R_1}<|Y|$.
In this case we note that to get from one vertex of $Y$ to another vertex of $Y$ (or the same vertex of $Y$ with a path of length at least $1$)
we need to use at least one arc from
$R_2$ to $R_1$. However, as $\dXY{R_2}{R_1}<|Y|$, this implies that $D$ cannot contain a eulerian factor (which would contain
at least $|Y|$ arc-disjoint paths between vertices in $Y$).
\end{proof}

\begin{lemma}\label{lem:Efactoravoid}
Let  $k$ a non-negative integer and $D$  a $(k+1)$-arc-strong semicomplete digraph.
Then $D$ has an eulerian factor avoiding any prescribed set of $k$ arcs.
\end{lemma}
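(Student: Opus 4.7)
The plan is to argue by contradiction using the characterization in Theorem~\ref{thm:EfactoravoidX}. Let $A'\subseteq A(D)$ with $|A'|=k$, set $D'=D\setminus A'$, and suppose $D'$ has no eulerian factor. Then there is a partition $(R_1,R_2,Y)$ of $V(D)$ with $Y$ independent in $D'$, $\dXY{R_2}{Y}=\dXY{Y}{R_1}=0$ in $D'$, and $\dXY{R_2}{R_1}<|Y|$ in $D'$. The goal is to combine the fact that $D$ is semicomplete with the $(k+1)$-arc-strong hypothesis on several well-chosen cuts to force $|A'|>k$.

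The key bookkeeping observation is that, because $D$ is semicomplete, several arcs of $D$ must have been killed to produce $D'$. Concretely, $A'$ contains all $\binom{|Y|}{2}$ arcs that $D$ has inside $Y$ (at least one for every pair), all arcs of $D$ from $R_2$ to $Y$, all arcs of $D$ from $Y$ to $R_1$, together with the arcs of $A'\cap(R_2\times R_1)$; since these four arc sets are pairwise disjoint, this gives
\[
k \;\geq\; \binom{|Y|}{2} + d_D(R_2,Y) + d_D(Y,R_1) + |A'\cap(R_2\times R_1)|.
\]

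Before invoking cuts, I dispose of degenerate cases. If $Y=\emptyset$, the condition $d_{D'}(R_2,R_1)<0$ is impossible. If $R_1=\emptyset$ (with $Y,R_2\neq\emptyset$), then every arc of $D$ from $R_2$ to $Y$ is in $A'$, so $d_D(R_2,Y)\leq k$; but $(k+1)$-arc-strength applied to the cut $(R_2,Y)$ gives $d_D(R_2,Y)\geq k+1$, a contradiction. Symmetrically for $R_2=\emptyset$. So I may assume $R_1,R_2,Y$ are all non-empty.

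Now I apply arc-connectivity to two cuts whose sides are all non-empty. Applying $(k+1)$-arc-strength to $(R_2\cup Y,R_1)$ and to $(R_2,R_1\cup Y)$, and using $d_{D'}(R_2,R_1)\leq |Y|-1$ in each, yields
\[
d_D(Y,R_1)+|A'\cap(R_2\times R_1)| \;\geq\; k-|Y|+2,
\qquad
d_D(R_2,Y)+|A'\cap(R_2\times R_1)| \;\geq\; k-|Y|+2.
\]
Adding these and comparing with the disjoint-count upper bound displayed above, the quantity $d_D(R_2,Y)+d_D(Y,R_1)+|A'\cap(R_2\times R_1)|$ is at most $k-\binom{|Y|}{2}$, which forces
\[
|A'\cap(R_2\times R_1)| \;\geq\; k+\binom{|Y|}{2}-2|Y|+4,
\]
and therefore $\binom{|Y|}{2}-2|Y|+4\leq 0$, i.e.\ $|Y|^{2}-5|Y|+8\leq 0$. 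Since the discriminant $25-32$ is negative, this inequality has no real solution, the final contradiction.

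The main obstacles will be the bookkeeping (keeping the four disjoint arc classes of $A'$ separate) and verifying that the cuts used really have both sides non-empty, which is handled by the case split on whether $R_1,R_2,Y$ are empty; once these are cleared, the cut inequalities combine with the disjoint-count inequality to give a polynomial in $|Y|$ with no real roots, delivering the contradiction uniformly for every $|Y|\geq 1$.
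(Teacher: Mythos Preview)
Your proof is correct and follows essentially the same approach as the paper's: invoke Theorem~\ref{thm:EfactoravoidX}, count the at least $\binom{|Y|}{2}$ arcs of $A'$ lying inside $Y$, and combine this with arc-connectivity cuts to force $|A'|>k$. The paper is more economical: noting first that $D'=D\setminus A'$ is strong (since $|A'|=k<\lambda(D)$) immediately gives $R_1,R_2\neq\emptyset$ and hence $|Y|\geq 2$, after which a \emph{single} cut $(R_2\cup Y,R_1)$ already yields $|A'|\geq \binom{|Y|}{2}+(k-|Y|+2)\geq k+1$. Your two-cuts-and-add manoeuvre reaches the same contradiction by a longer route. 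One small omission: your degenerate-case analysis treats $R_1=\emptyset$ only under the additional assumption $R_2\neq\emptyset$ (and symmetrically), so the case $R_1=R_2=\emptyset$ is not covered; this is trivial (then $V(D)=Y$ is independent in $D'$, whence $|A'|\geq\binom{|V(D)|}{2}>k$ since a $(k+1)$-arc-strong digraph has at least $k+2$ vertices), but it should be stated.
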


\begin{proof}
  Let $A'$ be any set of $k$ arcs in a $(k+1)$-arc-strong semicomplete digraph $D$.
Let $D'=D\setminus A'$ and note that $D'$ is strong.
For the sake of contradiction, assume that $D'$ can be partitioned into $R_1$, $R_2$ and $Y$ such that $Y$ is independent and
$\dXY{R_2}{Y}=0$ and $\dXY{Y}{R_1}=0$ and $\dXY{R_2}{R_1}<|Y|$.  As $D'$ is strong, we must have $R_1 \not= \emptyset$
and $R_2 \not= \emptyset$ and $\dXY{R_2}{R_1} \geq 1$. Therefore $|Y| \geq 2$. Note that at least ${|Y| \choose 2} = |Y|(|Y|-1)/2$
arcs from $A'$ lie completely within $Y$ (as $Y$ is independent in $D'$). Furthermore at least $k+1-(|Y|-1)$ arcs from $A'$ 
go from $R_2$ to $R_1$
as $R_1$ has at least $k+1$ arcs into it in $D$ (and $R_2$ has at least $k+1$ arcs out of it in $D$),
as in $D'$ we have $\dXY{R_2}{R_1} \leq |Y|-1$. So the following holds.

\[
|A'| \geq \frac{|Y|(|Y|-1)}{2} + k - |Y| + 2 = k + 2 + |Y| \left( \frac{|Y| - 3}{2} \right)
\]

The above implies that $|A'| \geq k+1$ (which can easily be verified when $|Y|=2$ and $|Y| \geq 3$), a 
contradiction.  Therefore the partition $(Y,R_1,R_2)$ does not exist and $D'$ has an eulerian factor
by Theorem~\ref{thm:EfactoravoidX}.
\end{proof}

\subsubsection{Merging eulerian subdigraphs}

Let $D=(V,A)$ be a digraph and $D'$ an eulerian subdigraph of $D$ which is not spanning. 
A vertex $x\in V\setminus V(D')$ is {\bf universal to $D'$} 
(or just {\bf universal} when $D'$ is clear from the context) 
if $x$ is adjacent to every vertex of $D'$ and it is {\bf hypouniversal to $D'$} if it is adjacent to all vertices of $D'$ but at most one.
If $x$ has an arc to $D'$ and an arc from $D'$ then we say that $x$ is {\bf mixed to $D'$}.

Let $H$ be a eulerian factor of a digraph $D$ and let $H_1$ and $H_2$ be two distinct components of $H$.
If there exists a spanning eulerian subdigraph, $H^*$ of $\induce{D}{V(H_1) \cup V(H_2)}$, then we say that 
$H_1$ and $H_2$ can be {\bf merged}, as in $H$ we can substitute $H_1$ and $H_2$ by $H^*$ in order to get a 
eulerian factor of $D$ with fewer components.

\begin{lemma}
  \label{lem:merge}
  Let $H_1$ and $H_2$ be two components in an eulerian factor of a digraph $D$ that cannot be merged.  Then all of the following 
points hold for all $i \in \{1,2\}$ and $j=3-i$.
\begin{description}
 \item[(a):] There is no $2$-cycle, $uvu$, where $u \in H_1$ and $v \in H_2$.
 \item[(b):] For every arc $uv \in A(H_i)$ and every $x \in V(H_j)$ we cannot have $ux,xv \in A(D)$.
 \item[(c):] For every arc $uv \in A(H_i)$ and every arc $xy \in A(H_j)$ we cannot have $uy,xv \in A(D)$.
 \item[(d):] If $x \in V(H_i)$ is universal to $H_j$, then $x$ is not mixed to $H_j$.

That is, $x \sdom V(H_j)$ or $V(H_j) \sdom x$.

 \item[(e):] If $x \in V(H_i)$ is hypouniversal and mixed to $H_j$, then there exists a unique $y \in V(H_j)$ such that
$x$ and $y$ are not adjacent and $y^- x, x y^+ \in A(D)$.
\end{description}
\end{lemma}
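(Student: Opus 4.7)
My approach is to prove each of (a)--(e) by contradiction: assume that $H_1$ and $H_2$ cannot be merged and yet the forbidden configuration is present, then exhibit an explicit spanning eulerian subdigraph of $\induce{D}{V(H_1)\cup V(H_2)}$, contradicting the merging hypothesis. The construction in each case takes closed eulerian trails of $H_1$ and $H_2$ and splices them together using the extra arcs of $D$ supplied by the configuration.

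For (a), the $2$-cycle $uvu$ splices directly as: a closed eulerian trail of $H_1$ based at $u$, the arc $uv$, a closed eulerian trail of $H_2$ based at $v$, the arc $vu$. For (b), take a closed eulerian trail of $H_i$ starting with $uv$ and replace this first arc by $u\to x$, then a closed eulerian trail of $H_j$ based at $x$, then $x\to v$. Part (c) proceeds similarly but uses an eulerian $(y,x)$-trail of $H_j\setminus\{xy\}$: such a trail exists because $H_j$ admits a closed eulerian trail ending with $xy$, so removing $xy$ leaves a connected digraph with the correct degree imbalances at $y$ and $x$; the arc $uv\in A(H_i)$ is then replaced by $u\to y\to\cdots\to x\to v$.

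For (d), set $N^+=\{v\in V(H_j):xv\in A(D)\}$ and $N^-=\{u\in V(H_j):ux\in A(D)\}$; universality gives $N^+\cup N^-=V(H_j)$ and mixedness gives $N^+,N^-\ne\emptyset$. If some arc $uv\in A(H_j)$ has $u\in N^-$ and $v\in N^+$, then (b) merges $H_i$ and $H_j$ via this arc. Otherwise every arc of $H_j$ has tail in $N^+\setminus N^-$ or head in $N^-\setminus N^+$, which forces $N^-\setminus N^+$ to be closed under out-arcs in the strongly connected $H_j$; the alternatives $N^-\setminus N^+\in\{\emptyset,V(H_j)\}$ both contradict mixedness combined with $V(H_j)=N^+\cup N^-$.

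Finally for (e), uniqueness of $y$ is immediate: (d) forbids $x$ from being universal while hypouniversality gives at most one non-neighbour, so there is exactly one. For existence of $y^-,y^+$, by symmetry assume that no in-neighbour of $y$ in $H_j$ lies in $N^-$, so all in-neighbours of $y$ lie in $B:=N^+\setminus N^-$. Writing $A=N^-\setminus N^+$ and $C=N^+\cap N^-$, the argument of (d) shows that the merging obstruction forbids arcs $uv\in A(H_j)$ with $u\in N^-$, $v\in N^+$ and $u,v\ne y$, so every arc of $H_j$ avoiding $y$ has tail in $B$ or head in $A$. Applying the eulerian balance across the cut at $S=A\cup\{y\}$ and using $|B\to y|=d^-_{H_j}(y)=d^+_{H_j}(y)$, one obtains $|B\to A|=|C\to A|=|y\to A|=0$; combined with the merging obstruction this leaves $A$ closed under both in- and out-arcs in the connected digraph $H_j$, forcing $A=\emptyset$. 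But then $C\ne\emptyset$ by mixedness and the remaining forbidden arcs leave every vertex of $C$ with no out-neighbour in $H_j$, contradicting the eulerian property of $H_j$. This degree-count at $S$ is the main obstacle, as it must absorb the exceptional vertex $y$ without access to the arcs of (b)--(c).
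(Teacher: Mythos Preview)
Your proofs of (a)--(c) coincide with the paper's. For (d) and (e) your approach differs, and (d) has a small gap.

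In (d), the claim that ``$N^-\setminus N^+=\emptyset$ contradicts mixedness combined with $V(H_j)=N^+\cup N^-$'' is not literally correct: these two conditions give only $N^-\subseteq N^+$, and since $N^-\neq\emptyset$ by mixedness you obtain a vertex in $N^+\cap N^-$, i.e.\ a $2$-cycle with $x$. That is a contradiction, but to (a), not to mixedness. The clean fix is to invoke (a) at the outset to get $N^+\cap N^-=\emptyset$; then $N^-\setminus N^+=N^-$ and your dichotomy really does contradict mixedness on both sides.

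The paper's route for (d) and (e) is more direct and avoids cut/degree bookkeeping. For (d) it walks along the eulerian trail $w_1w_2\cdots w_lw_1$ of $H_j$: starting from some $w_1$ with $w_1x\in A(D)$, part (b) forbids $xw_2$, so universality gives $w_2x\in A(D)$, and inductively $V(H_j)\dom x$; any arc from $x$ to $H_j$ then yields a $2$-cycle, contradicting (a). For (e), once the unique non-neighbour $y$ is fixed, one simply picks $w$ on the trail with $xw\in A(D)$ but $xw^-\notin A(D)$ (possible since $N^+\neq\emptyset$ and $y\notin N^+$); part (b) then makes $w^-$ non-adjacent to $x$, forcing $w^-=y$, hence $xy^+\in A(D)$, and the dual argument gives $y^-x\in A(D)$. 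Your counting argument for (e) is correct, but the paper's ``transition point on the tour'' trick is both shorter and the natural continuation of the idea you already used for (d).
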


\begin{proof}
Let $D$, $H_1$, $H_2$ and $i,j$ be defined as in the statement of the lemma.  If there was a $2$-cycle, $uvu$, where $u \in H_1$ and $v \in H_2$, then adding this to $H_1$ and $H_2$ shows that $H_1$ and $H_2$ can be merged, a contradiction. This proves (a).

For the sake of contradiction assume that $uv \in A(H_i)$ and $x \in V(H_j)$ and $ux,xv \in A(D)$. Adding the
arcs $ux$ and $xv$ and removing the arc $uv$ from $H_1 \cup H_2$ shows that $H_1$ and $H_2$ can be merged, a contradiction. This proves (b).

For the sake of contradiction assume that $uv \in A(H_i)$ and $xy \in A(H_j)$ and $uy,xv \in A(D)$. Adding the
arcs $uy$ and $xv$ and removing the arcs $uv$ and $xy$ from $H_1 \cup H_2$ shows that $H_1$ and $H_2$ can be merged, 
a contradiction. This proves (c).

Let $x \in V(H_i)$ be universal to $H_j$ and for the sake of contradiction assume that $x$ is mixed to $H_j$. 
Let the eulerian tour in $H_2$ be $w_1 w_2 w_3 \cdots w_l w_1$ (every arc of $H_2$ is used exactly once).
Without loss of generality we may assume $w_1 x\in A(D)$ (as $x$ is mixed to $H_2$). 
Part (b) implies that $xw_2$ is not an arc in $D$, so $w_2 x \in A(D)$ (as $x$ is universal).
Analogously $w_3 x \in A(D)$. Continueing this process we note that $V(H_2) \dom x$. As there is an arc from 
$x$ to $H_2$ in $D$ (as $x$ is mixed) we have a $2$-cycle between $H_1$ and $H_2$, a contradiction to (a).
This proves (d).

We will now prove (e). Let $x \in V(H_i)$ be hypouniversal and mixed to $H_j$. By (d), vertex  $x$ is not universal 
to $H_j$, so there exists a unique $y \in V(H_j)$ such that $x$ and $y$ are not adjacent. As $x$ is mixed to $H_j$ there
is an arc from $x$ to $V(H_j)$. As $xy \not\in A(D)$, we can assume that $w \in V(H_j)$ 
is chosen such that $xw \in A(D)$ and $xw^- \not\in A(D)$.  By (b) we note that $x$ and $w^-$ are non-adjacent
and therefore $w^-=y$. This implies that $x y^+ \in A(D)$.
Analogously, using (b), we can prove that $y^- x \in A(D)$.
\end{proof}

\subsection{Avoiding a collection of stars}

If $D$ is a digraph and $A' \subset A(D)$ such that the underlying graph of the digraph induced by $A'$ is a collection of 
stars, then $A'$ is called a 
{\bf star-set} in $D$.
Note that a matching in $D$ is also a star-set.

\begin{lemma}\label{lem:merge-star}
Let $D$ be a semicomplete digraph and let $A' \subset A(D)$ be a star-set in $D$ and let $D'=D\setminus A'$.
If $D'$ is strongly connected  and contains
an eulerian factor with two components $H_1$ and $H_2$ but no spanning eulerian subdigraph, then
the following holds for some $i\in\{1,2\}$ and $j=3-i$.

\begin{description}
 \item[(i):] The eulerian tour in $H_i$ can be denoted by $w_1w_2w_3 \cdots w_l w_1$, such that $w_1$ is not adjacent to 
any vertex in $H_j$ in $D'$.
\item[(ii):] There exists a $k$, such that $R_1 = \{w_2,w_3,\ldots,w_k\}$ and $R_2=\{w_{k+1},w_{k+2},\ldots,w_l\}$ are both
non-empty and the only arc in $D'$ from $R_1$ to $R_2$ is $w_kw_{k+1}$.
\item[(iii):] There is no arc from $R_1$ to $w_1$ and there is no arc from $w_1$ to $R_2$ in $D'$.
\item[(iv):] $V(H_j) \sdom R_1$ and $R_2 \sdom V(H_j)$ in $D'$.
\end{description}
\end{lemma}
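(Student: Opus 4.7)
The strategy exploits two facts. First, since $D'$ has an eulerian factor $H_1\cup H_2$ but no spanning eulerian subdigraph, the components cannot be merged in $D'$, so Lemma~\ref{lem:merge} applies with ambient digraph $D'$. Second, because $A'$ is a star-set in $D$, every vertex of $V(D')$ is either the center of a star of $A'$ (possibly with many non-adjacencies in $D'$) or has degree at most one in the underlying graph of $A'$, in which case it is universal or hypouniversal to any other set of vertices.

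I first prove~(i). Assume for contradiction that no vertex of $V(H_1)\cup V(H_2)$ is non-adjacent in $D'$ to the whole other component. By Lemma~\ref{lem:merge}(d) applied to universal vertices, and since by Lemma~\ref{lem:merge}(a) two universal vertices on opposite sides cannot strictly dominate each other, the ``all-universal'' case forces $V(H_1)\sdom V(H_2)$ or $V(H_2)\sdom V(H_1)$ in $D'$, contradicting strong connectivity. Hence some vertex $c$ is non-universal, so $c$ is the center of a star in $A'$; by the contradiction hypothesis, the leaves of this star in the opposite component form a proper subset. A short case analysis using Lemma~\ref{lem:merge}(b),(c),(e) together with the degree-$1$ behaviour of the star-leaves then produces an explicit merge, the desired contradiction. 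Thus there exist $i\in\{1,2\}$, $j=3-i$, and $w_1\in V(H_i)$ non-adjacent in $D'$ to every vertex of $V(H_j)$, which is~(i).

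I then derive~(iv). Since $D$ is semicomplete, every arc between $w_1$ and $V(H_j)$ in $D$ belongs to $A'$, so $w_1$ is the center of a star of $A'$ with $V(H_j)$ among its leaves. By the star-forest property, each $v\in V(H_j)$ has degree~$1$ in the underlying graph of $A'$ and hence is adjacent in $D'$ to every vertex except $w_1$. In particular, every $u\in V(H_i)\setminus\{w_1\}$ is universal to $V(H_j)$ in $D'$, so by Lemma~\ref{lem:merge}(d), $u\sdom V(H_j)$ or $V(H_j)\sdom u$. Defining $R_1$ and $R_2$ accordingly yields~(iv), and strong connectivity of $D'$ together with $w_1$'s isolation from $V(H_j)$ forces $R_1,R_2\neq\emptyset$.

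Finally I prove~(ii) and~(iii). Lemma~\ref{lem:merge}(b) applied to any $v\in V(H_j)$ and any arc $xy\in A(H_i)$ with $x\in R_2,y\in R_1$ (both $xv,vy\in A(D')$ by~(iv)) shows that $H_i$ has no $R_2\to R_1$ arc. If $d^+_{H_i}(w_1)\geq 2$, the eulerian tour of $H_i$ splits into at least two sub-tours through $V(H_i)\setminus\{w_1\}$; since $R_1,R_2\neq\emptyset$ and $H_i$ has no $R_2\to R_1$ arc, at least one such sub-tour starts in $R_1$ and at least one ends in $R_2$, and splicing any such pair via a single detour through $V(H_j)$ (using any $w\in V(H_j)$ and the bridging arcs $\beta w, w\alpha$ with $\beta\in R_2,\alpha\in R_1$ supplied by~(iv)) merges $H_i$ and $H_j$, a contradiction. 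Hence $d^\pm_{H_i}(w_1)=1$, so the tour reads $w_1w_2\cdots w_lw_1$ with $w_1\notin\{w_2,\ldots,w_l\}$, and the absence of $R_2\to R_1$ arcs in $H_i$ forces exactly one transition $w_kw_{k+1}$ with $w_k\in R_1$ and $w_{k+1}\in R_2$, giving the partition in~(ii). Any extra arc of $D'$ of the forms $R_1\to w_1$, $w_1\to R_2$, or $R_1\to R_2$ (distinct from $w_kw_{k+1}$) can be inserted into an analogous splice-and-detour construction through $V(H_j)$ to produce a merge, contradicting non-mergeability; this yields~(iii) and the uniqueness in~(ii). The main obstacle is~(i), where ruling out every ``no fully non-adjacent vertex'' configuration requires a delicate interplay between the star-set bound on non-adjacencies and the merge obstructions (a)--(e) of Lemma~\ref{lem:merge}.
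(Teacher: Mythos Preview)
Your argument for part~(i) has a genuine gap. From ``not every vertex is universal'' you conclude ``some vertex $c$ is the center of a star in $A'$'', but this inference is wrong: a non-universal vertex could be a leaf of a star (hence hypouniversal with exactly one non-adjacency). More importantly, the ``short case analysis using Lemma~\ref{lem:merge}(b),(c),(e)'' that you defer is precisely where the real work lies, and it is not short. The paper proceeds quite differently here: first it shows (Claim~A, via an easy alternating-cycle argument) that some vertex $x$ is not \emph{mixed} to the other side, which is much weaker than non-adjacent. Then it chooses $x$ so that $x^+$ receives an arc from $H_j$, and runs a careful propagation along the eulerian tour of $H_j$ (Claim~B): from $ux^+\in A(D')$ one gets $xu\notin A(D')$ and $xu^+\notin A(D')$ by (b),(c); the star-set property then forces $x^+$ and $u^+$ to be adjacent; and (c) again gives $u^+x^+\in A(D')$. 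Iterating yields $V(H_j)\sdom x^+$, and then (b) gives that $x$ is non-adjacent to all of $V(H_j)$. This propagation step is the crux of (i) and is absent from your sketch.

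Your treatment of (iv) is fine and in fact slightly cleaner than the paper's (you define $R_1,R_2$ semantically via domination and then align them with tour positions, whereas the paper does the reverse). Your argument that $d^\pm_{H_i}(w_1)=1$ via splicing sub-tours is essentially workable, but you must be careful about connectivity: if the sub-tour $s$ with $\beta_s\in R_2$ coincides with the sub-tour $t$ with $\alpha_t\in R_1$, replacing the two arcs $\beta_s w_1,\,w_1\alpha_t$ by $\beta_s w,\,w\alpha_t$ may disconnect $w_1$ from the rest. One needs to observe that when $d\geq 2$ one can always find $s\neq t$ (since any other sub-tour either starts in $R_1$ or ends in $R_2$) and then reorder the sub-tours so that $s$ immediately precedes $t$ before splicing. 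The paper avoids this subtlety with a different merge (Claim~D): it picks $w_k u$ with $u\in V(H_j)$ and $k$ maximal, deduces $w_{k+1}=w_1$, and replaces the two consecutive arcs $w_k w_1,\,w_1 w_{k+2}$ by $w_k u,\,u w_{k+2}$, which preserves the walk structure and hence connectivity automatically.
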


\begin{proof}
Let $D$, $A'$, $D'$, $H_1$ and $H_2$ be defined as in the lemma.
Assume that $D'$ has no spanning eulerian subdigraph.
We now prove the following claims.

\begin{claim}\label{claimA}
There must be a vertex in $H_1$ which is not mixed to $H_2$ or a vertex in $H_2$ which is not mixed to $H_1$.
\end{claim}

\begin{subproof}
Suppose there is no such vertex. Then $D$ contains a cycle $C$ whose vertices alternate between $V(H_1)$ and $V(H_2)$ so taking the union of the arcs of $C$ and  those of  $H_1,H_2$ we obtain a spanning eulerian subdigraph of $D$, contradicting the assumption.
This completes the proof of Claim~\ref{claimA}. 
\end{subproof}

{\bf Definition of $x$}: By Claim~\ref{claimA} we may assume without loss of generality 
that there is a vertex $x \in V(H_1)$ which is not mixed to $H_2$.
Also without loss of generality we may assume that there is no arcs from $H_2$ to $x$.
As $D'$ is strong we can pick $x$ such that $x^+$ has an arc into it from $H_2$ (otherwise consider $x^+$ instead of $x$).

\begin{claim}\label{claimB}
$V(H_2) \sdom x^+$ and $x$ is non-adjacent to every vertex of $H_2$.
\end{claim}

\begin{subproof}
Let $u x^+$ be an arc from $H_2$ into $x^+$. By Lemma~\ref{lem:merge} (b) and (c), we note that $xu \not\in A(D')$ and
$x u^+ \not\in A(D')$. As there is no arc from $H_2$ to $x$ this implies that $x$ is not adjacent to $u$ or $u^+$.
As $A'$ is a star-set, $x^+$ and $u^+$ are adjacent. By Lemma~\ref{lem:merge} (c), we have $x^+ u^+ \not\in
A(D')$ (as $u x^+ \in A(D')$), which implies that $u^+ x^+ \in A(D')$. We have now shown that $u x^+ \in A(D')$ implies that
$u^+ x^+ \in A(D')$. Analogously we must have $u^{++} x^+ \in A(D')$ and $u^{+++} x^+ \in A(D')$, etc...
Continuing the process we get that  $V(H_2) \dom x^+$.  By Lemma~\ref{lem:merge} (a) there are no $2$-cycles between 
$H_1$ and $H_2$, which implies that $V(H_2) \sdom x^+$. By Lemma~\ref{lem:merge} (b) and the fact that there is no arc from
$H_2$ to $x$ we get that $x$ is non-adjacent to every vertex of $H_2$, completing the proof of Claim~\ref{claimB}.
\end{subproof}

\begin{claim}\label{claimC}
Every vertex in $H_2$ is hypouniversal to $H_1$.  In fact, every vertex in $H_2$ is universal 
to $V(H_1) \setminus \{x\}$.
\end{claim}
\begin{subproof}
This follows from the fact that $x$ is not adjacent to any vertex in $H_2$ and therefore must be the 
center of a star in $A'$ (as $|V(H_2)| \geq 2$). Therefore all vertices in $H_2$ are leaves in a star in $A'$ and therefore 
have at most one non-neighbour in $D'$. By the above we note that they have exactly one non-neighbour, which in $x$.
\end{subproof}

{\bf Definition}: Let $w_1w_2w_3 \cdots w_lw_1$ be a eulerian tour of $H_1$ and let $w_1=x$.

\begin{claim}\label{claimD}
 The vertex $x$ only appears once in the eulerian tour of $H_1$. That is, in $H_1$ we have
$d^+(x)=d^-(x)=1$.
\end{claim}

\begin{proof}
 Assume for the sake of contradiction that $x$ appears more than once in the 
eulerian tour of $H_1$. As $D'$ is strong there is an arc from $H_1$ to $H_2$, say $w_k u$.
Pick $u$ and $k$ such that $k$ is as large as possible.
As $w_k u\in A(D')$ we note that by Lemma~\ref{lem:merge} (b) $u w_{k+1} \not\in A(D')$.

By the maximality of $k$ this implies that $k=l$ or $u$ and $w_{k+1}$ are non-adjacent.  As  $w_{l+1}=w_1=x$ we note
that in both cases $u$ and $w_{k+1}$ are non-adjacent, which by Claim~\ref{claimC} implies that $w_{k+1}=x$. 
So $u w_{k+2}\in A(D')$ by Claim~\ref{claimB}.
Now deleting the arcs $w_k w_{k+1}$ and $w_{k+1} w_{k+2}$ and adding the arcs $w_k u$ and $u w_{k+2}$ we can merge
$H_1$ and $H_2$ a contradiction.
\end{proof}

{\bf Definition}: As $D'$ is strong there is an arc from $H_1$ to $H_2$, say $w_{k+1} u$.
Pick $u$ and $k$ such that $k$ is as small as possible. Note that $k \geq 2$ as by Claim~\ref{claimB} we have $V(H_2) \sdom w_1$.  
Let $R_1 =  \{w_2,w_3,\ldots,w_k\}$ and let $R_2=\{w_{k+1},w_{k+2},\ldots,w_l\}$.

\begin{claim}\label{claimE}
 $V(H_2) \sdom R_1$ and $R_2 \sdom V(H_2)$ in $D'$.
Note that this proves part (iv) in the lemma.
\end{claim}

\begin{subproof}

By Claim~\ref{claimC} and the minimality of $k$ we note that $V(H_2) \sdom R_1$ holds.

As $w_{k+1} u\in A(D')$, by Lemma~\ref{lem:merge} (b) (and Claim~\ref{claimC}), we have $w_{k+2} u \in A(D')$ or $w_{k+2}=x$.
Continuing this process we note that $R_2 \dom u$. 
By Claim~\ref{claimC}, $u^-$ is universal to $R_2$, so by Lemma~\ref{lem:merge}~(b), we have $R_2 \sdom u^-$.
Analogously $R_2 \sdom u^{--}$. Continuing this process we note that $R_2 \sdom V(H_2)$. 
\end{subproof}

\begin{claim}\label{claimF}
 $R_1 \cap R_2 = \emptyset$ and the only arc from $R_1$ to $R_2$ in $D'$ is $w_kw_{k+1}$.
\end{claim}
Note that this proves part (ii) in the lemma.

\begin{subproof}
If $y \in R_1 \cap R_2$, then by Claim~\ref{claimE} we have $V(H_2) \sdom y$ and $y \sdom V(H_2)$,
which is not possible since $D'$ has no 2-cycle by Lemma~\ref{lem:merge} (a). Therefore $R_1 \cap R_2 = \emptyset$.

Now assume for the sake of contradiction that $uv \in A(D')$ is an arc from $R_1$ to $R_2$ different
from $w_kw_{k+1}$. Note that $uv \not\in A(H_1)$ as $R_1 \cap R_2 = \emptyset$ and all arcs in $H_1$
either lie within $R_1$ or within $R_2$ or are incident with $w_1$ or is the arc $w_k w_{k+1}$. 
Now let $q \in V(H_2)$ be arbitrary and add the arcs $uv, vq, qu$ to $H_1$ and $H_2$ and note that
this merges $H_1$ and $H_2$, a contradiction.
\end{subproof}
\begin{claim}\label{claimG}
There is no arc from $R_1$ to $w_1$ and there is no arc from $w_1$ to $R_2$ in $D'$.
\end{claim}
Note that this proves part (iii) in the lemma.

\begin{subproof}
 For the sake of contradiction assume that $u w_1$ is an arc from $R_1$ to $w_1$.
Let $v \in V(H_2)$ be arbitrary and by Claim~\ref{claimE} note that $vu \in A(D')$. 
We can now merge $H_1$ and $H_2$ by taking the union of the tour $v u w_1 w_2 w_3 \cdots w_l v$ ($w_lv \in A(D')$
by Claim~\ref{claimE}) and $H_2$. This contradiction, implies that there is no arc from $R_1$ to $w_1$ in $D'$.

Analogously we can prove that there is no  arc from $w_1$ to $R_2$ in $D'$.
\end{subproof}

The above claims complete the proof of the lemma, as Claim~\ref{claimB} implies that part (i) of the lemma holds and parts (ii), (iii) and 
(iv) follow from the Claims~\ref{claimE}, \ref{claimF} and \ref{claimG}.
\end{proof}

\begin{theorem}\label{thm:star-set}
Let $D$ be a $(k+1)$-arc-strong semicomplete digraph and let $A' \subset A(D)$ be a star-set of size $k$.
Then $D$ has a spanning eulerian subdigraph which avoids the arcs in $A'$.
\end{theorem}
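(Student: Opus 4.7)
The plan is to choose an eulerian factor $H$ of $D':=D\setminus A'$ with the minimum possible number of components and then to show that any two components can always be merged, forcing $H$ to be connected. Since $D$ is $(k+1)$-arc-strong and $|A'|=k$, the digraph $D'$ is strongly connected, so by Lemma~\ref{lem:Efactoravoid} it admits an eulerian factor; pick one, $H$, minimizing the number of components. If $H$ is connected, we are done. Otherwise let $H_1,H_2$ be two components of $H$: by the minimality of the component count they cannot be merged in $D'$, so Lemma~\ref{lem:merge-star} applies and gives (after possibly swapping the roles of $H_1$ and $H_2$) a vertex $w_1\in V(H_1)$ together with a partition $V(H_1)=\{w_1\}\cup R_1\cup R_2$ and the structural conditions (i)--(iv). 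In particular $w_1$ is non-adjacent in $D'$ to every vertex of $V(H_2)$, and by parts (ii)--(iv) the only arc of $D'$ with tail in $R_1$ and head in $V(H_1)\cup V(H_2)$ is the single $R_1$-to-$R_2$ arc supplied by~(ii).

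Next I would exploit the star-set structure of $A'$. Since $D$ is semicomplete and $w_1$ is non-adjacent in $D'$ to each of the $|V(H_2)|\geq 2$ vertices of $V(H_2)$, every arc between $w_1$ and $V(H_2)$ in $D$ must lie in $A'$, contributing at least $|V(H_2)|$ arcs to $A'$. Because $A'$ is a star-set and $w_1$ already has at least two neighbors in its underlying graph, $w_1$ must be the center of a star in $A'$ whose leaves include $V(H_2)$; consequently each $v\in V(H_2)$ has no $A'$-incident arc except the ones joining it to $w_1$. This is the essential rigidity: the $|V(H_2)|$ ``forced'' $A'$-arcs between $w_1$ and $V(H_2)$ are disjoint from every $A'$-arc whose tail or head lies in $R_1\cup R_2$.

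The main step is a cut-counting contradiction. By $(k+1)$-arc-strongness the out-cut of $R_1$ in $D$ has at least $k+1$ arcs, while in $D'$ it contains only $1+q$ arcs, where $q$ is the number of arcs of $D'$ from $R_1$ to $W:=V(D)\setminus(V(H_1)\cup V(H_2))$. Hence $A'$ must supply at least $k-q$ arcs leaving $R_1$; these are disjoint from the $|V(H_2)|$ forced $A'$-arcs described above, so $k=|A'|\geq(k-q)+|V(H_2)|$, which gives $q\geq|V(H_2)|\geq 2$. A symmetric argument on the in-cut of $R_2$ forces at least $|V(H_2)|$ arcs from $W$ to $R_2$ in $D'$.

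The principal obstacle is closing the contradiction, since the two bounds above do not yet contradict $|A'|=k$. My plan is to iterate: because $q\geq 2$, some component $H_3\subseteq W$ of $H$ receives an arc from $R_1$ in $D'$, and by the minimality of the component count $H_1$ and $H_3$ also cannot be merged in $D'$. A second application of Lemma~\ref{lem:merge-star} to the pair $(H_1,H_3)$ then yields another non-adjacent vertex which, by the star-set rigidity, forces a second star in $A'$ vertex-disjoint from the first. Combining the forced $A'$-arc counts produced by the two applications, together with the disjointness of the two stars, should exhaust the budget $|A'|=k$ and yield the required contradiction, showing that the minimum-component eulerian factor $H$ must consist of a single component.
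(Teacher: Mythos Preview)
Your opening is the same as the paper's: take a minimal-component eulerian factor $H$ of $D'=D\setminus A'$ and apply Lemma~\ref{lem:merge-star} to two components $H_1,H_2$. Your cut-counting on $R_1$ is correct and in fact already finishes the case $p=2$: when $W=\emptyset$ the out-cut of $R_1$ in $D'$ has exactly one arc, so $A'$ must contain at least $k$ arcs leaving $R_1$, all of which are disjoint from the $|V(H_2)|\geq 2$ forced $A'$-arcs between $w_1$ and $V(H_2)$, giving $|A'|\geq k+2>k$.

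The gap is your handling of $p\geq 3$. The ``iteration'' plan is not a proof, and it has concrete obstructions:
\begin{itemize}
\item To apply Lemma~\ref{lem:merge-star} to $(H_1,H_3)$ you need $D'\langle V(H_1)\cup V(H_3)\rangle$ to be strong. Knowing that $H_3$ receives an arc from $R_1$ gives only one direction; you have not shown any arc from $V(H_3)$ back to $V(H_1)$ in $D'$.
\item Even if the lemma applies, the new non-adjacent vertex it produces may very well be $w_1$ again (nothing prevents $w_1$'s star in $A'$ from having leaves in $V(H_3)$ as well), so a ``second star vertex-disjoint from the first'' is not forced.
\item The final sentence ``should exhaust the budget $|A'|=k$'' is not an argument; even with two disjoint stars you have given no inequality that contradicts $|A'|=k$.
\end{itemize}

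The paper takes a different route once $p\geq 2$: it contracts each component $H_i$ to a vertex $x_i$, obtaining a strong semicomplete digraph $T$, and studies the graph $G$ of $2$-cycles of $T$. A sequence of structural claims shows (i) $G$ is a disjoint union of stars whose centres are the ``vital'' vertices, (ii) $T$ has no induced $3$-cycle, (iii) $G$ is not a spanning star, and hence (iv) $T$ contains a $3$-cycle $x_1x_2x_3x_1$ with exactly one arc lying on a $2$-cycle. A dedicated merging lemma (Claim~\ref{claimB2}) then shows that the three corresponding components can be merged, contradicting minimality. Your cut-counting idea does surface inside the paper, but only as one sub-case of the proof that $G$ is not a spanning star; it is not the main engine.
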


\begin{proof}
 Let $D'=D\setminus A'$ and note that $D'$ is strong. 
By Lemma~\ref{lem:Efactoravoid}, $D'$ contains an eulerian factor.
Let  $H$ be an eulerian factor of $D'$  with the minimum number of components. Let 
$H_1,H_2, \ldots , H_p$ be the components of $H$, and for every $i\in [p]$ set $D_i=D\langle V(H_i) \rangle$. 
For the sake of contradiction, assume that $p>1$.

Let $T$ be the digraph we obtain from $D'$ by contracting each $V(D_i)$, $i\in [p]$, into one vertex, $x_i$.
Since $D'$ is strong, then $T$ is also strong.
As every $D_i$ contains at least two vertices, $T$ is a semicomplete digraph. 
Let $G$ be the graph with $V(G)=V(T)$ and $uv \in E(G)$ if and only if $uvu$ is a $2$-cycle in $T$.
We now need the following definitions and claims, which completes the proof of the theorem.

\2

{\bf Definition (vital vertex).}  Assume that $x_ix_j$ is an edge in $G$, implying that $x_i x_j x_i$ is a $2$-cycle in $T$.
By the minimality of $p$ the properties of Lemma~\ref{lem:merge-star} hold.
By Lemma~\ref{lem:merge-star}~(i), either there is a vertex in $V(H_i)$ which is not adjacent to any vertex of $H_j$, in which 
case we say that $x_i$ is the vital vertex of the edge $x_ix_j$ in $G$, or
a vertex in $V(H_j)$ which is not adjacent to any vertex of $H_i$,  in which 
case we say that $x_j$ is the vital vertex of the edge $x_ix_j$ in $G$.
Note that $x_i$ and $x_j$ cannot both be vital for $x_ix_j$ as $A'$ is a star-set.
If a vertex is vital for any edge in $G$, then we say that it is a vital vertex in $G$ and otherwise it is non-vital.

\begin{claim}\label{claimA2}
$G$ is a (possibly empty) set of vertex-disjoint stars, where the center vertices of the non-trivial (i.e. of order at least $2$) stars  are 
exactly the vital vertices of $G$.
\end{claim}

\begin{subproof} Assume that $x_i x_j \in E(G)$ and that $x_i$ is the vital vertex of $x_i x_j$. 
That is, there is a $w_1 \in V(H_i)$ which is not adjacent to any vertex of $H_j$.
We will now show that $d_G(x_j)=1$.  That is, $x_ix_j$ is the only edge in $G$ touching $x_j$. Assume for the sake of contradiction
that $x_kx_j$ is an edge in $G$ with $k \not= i$. As $A'$ is a star-set we note that $x_k$ cannot be the vital vertex for
$x_kx_j$ and $x_j$ also cannot be the vital vertex. This implies that $d_G(x_j)=1$. 

So for every edge in $G$ one endpoint is the vital vertex and the other endpoint has degree one. 
This implies that $G$ is a vertex-disjoint collection of stars, where the center vertices of the stars are 
exactly the vital vertices of $G$, which completes the proof of Claim~\ref{claimA2}.
\end{subproof}

\begin{claim}\label{claimB2}
If there exists a $3$-cycle $x_i x_j x_k x_i$ in $T$ such that $x_j x_i \not\in A(T)$ and $x_i x_k \not\in A(T)$ and
there is a vertex $u \in V(H_k)$ that is dominated by all of $V(H_j)$, except for possibly one vertex, then $H_i$, $H_j$ and
$H_k$ can be merged.
\end{claim}

\begin{subproof} For the sake of contradiction assume w.l.o.g. that $i=1$, $j=2$ and $k=3$ in the statement of the claim.
That is, $x_1 x_2 x_3 x_1$ is a $3$-cycle in $T$  and $x_2 x_1 \not\in A(T)$ and $x_1 x_3 \not\in A(T)$ and
there is a vertex $u \in V(H_3)$ that is dominated by all of $V(H_2)$, except for possibly one vertex.
Let $W=H_1 \cup H_2 \cup H_3$.

As $A'$ is a star-set and there is no arc from $H_1$ to $H_3$ we note that either $u$ has an arc out of it to $H_1$ or 
$u^-$ has an arc out of it to $H_1$. Consider the two possibilities below.

\begin{itemize}
\item If there is an arc $uv$ with $v\in V(H_1)$, then add $uv$ to $W$.
\item Otherwise there exists an arc $u^- v \in A(D')$ with $v \in V(H_1)$ and add the arc $u^- v$ to $W$ and delete the arc $u^-u$ from $W$.
\end{itemize}

The new $W$ now has $d^+(a)=d^-(a)$ for all $a \in V(W)\setminus \{u,v\}$ and $d^+(u) = d^-(u)+1$ and $d^-(v)=d^+(v)+1$.
Analogously to above there is an arc from $v$ to $H_2$ or from $v^-$ to $H_2$. Again consider the two possibilities below.

\begin{itemize}
\item If there is an arc $vw$ with $w\in V(H_2)$, then let $vw \in A(D')$ be such an arc and add $vw$ to $W$.
\item Otherwise there exists an arc $v^- w \in A(D')$ with $w \in V(H_2)$ and
add the arc $v^- w$ to $W$ and delete the arc $v^-v$ from $W$.
\end{itemize}

Analogously to above the new $W$ now has $d^+(a)=d^-(a)$ for all $a \in V(W)\setminus \{u,w\}$ 
and $d^+(u) = d^-(u)+1$ and $d^-(w)=d^+(w)+1$.
Note that there is an arc from $w$ to $u$ or from  $w^-$ to $u$, as $u$ was 
dominated by all of $V(H_2)$, except for possibly one vertex.

\begin{itemize}
\item If there is an arc from $w$ to $u$, then add $wu$ to $W$.
\item Otherwise $w^- u \in A(D')$ and 
add the arc $w^- u$ to $W$ and delete the arc $w^-w$ from $W$.
\end{itemize}

Now $W$ is a spanning eulerian subdigraph of $\induce{D}{V(H_1) \cup V(H_2) \cup V(H_3)}$, contradicting the minimality of $p$,
and thereby proving  Claim~\ref{claimB2}.
\end{subproof}

\begin{claim}\label{claimC2}
There is no induced $3$-cycle in $T$.
\end{claim}

\begin{subproof} For the sake of contradiction assume $x_1 x_2 x_3 x_1$ is an induced $3$-cycle in $T$.
That is, $x_1x_3, x_3x_2, x_2 x_1 \not\in A(T)$. If there is no vertex in $H_3$ that is hypouniversal to $H_2$, then
all vertices in $H_2$ are hypouniversal to $H_3$, as $A'$ is a star-set.  So we can assume without loss of generality 
that there is a vertex  $u \in V(H_3)$ that is hypouniversal to $H_2$ (otherwise reverse all arcs and rename $H_1$, $H_2$ and
$H_3$). Claim~\ref{claimB2} now implies that $H_1$, $H_2$ and $H_3$ can be merged, a contradiction.
This proves Claim~\ref{claimC2}.
\end{subproof}

\begin{claim}\label{claimD2}
  There is no vertex in $G$ of degree $p-1$ (that is, $G$ does not consist of one spanning star).
\end{claim}

\begin{subproof} Assume for the sake of contradiction that $x \in V(G)$ has degree $p-1$ in $G$.
By Claim~\ref{claimA2} and Claim~\ref{claimC2} we note that $T-x$ is a transitive tournament, so without loss of generality assume 
that $x=x_1$  and $x_2,x_3,\ldots,x_p$ are named such that if $2 \leq i < j \leq p$ then $x_ix_j \in A(T)$ 
(and $x_j x_i \not\in A(T)$). By Claim~\ref{claimA2} we may assume that $x_1$ is the vital vertex for all edges $x_1 x_i$, 
$i \in \{2,3,\ldots,p\}$ in $G$.

Consider the $2$-cycle $x_1 x_2 x_1 \in T$. By Lemma~\ref{lem:merge-star} the following holds.

\begin{description}
 \item[(i):] The eulerian tour in $H_1$ can be denoted by $w_1w_2w_3 \cdots w_l w_1$, such that $w_1$ is not adjacent to
any vertex in $H_2$ in $D'$.
\item[(ii):] There exists a $k$, such that $R_1 = \{w_2,w_3,\ldots,w_k\}$ and $R_2=\{w_{k+1},w_{k+2},\ldots,w_l\}$ are both
non-empty and the only arc in $D'$ from $R_1$ to $R_2$ is $w_kw_{k+1}$.
\item[(iii):] There is no arc from $R_1$ to $w_1$ and there is no arc from $w_1$ to $R_2$ in $D'$.
\item[(iv):] $V(H_2) \sdom R_1$ and $R_2 \sdom V(H_2)$ in $D'$.
\end{description}

In $H_1$ we note that the only arc into $R_2$ is $w_k w_{k+1}$. We now consider the cases when there 
is an arc into $R_2$ in $D' \setminus w_k w_{k+1}$ and when there is no such arc. 

\2 

{\em Case 1. There is an arc into $R_2$ in $D' \setminus w_k w_{k+1}$.}  In this case assume that $uv$ is such an arc and note that
$u \in H_j$ for some $j \in \{3,4,\ldots,p\}$ and $v \in R_2$. Let $q \in V(H_2)$ be arbitrary and note that $vq \in A(D')$ (as 
$v \in R_2$ and $R_2 \sdom V(H_2)$, by (iv) above) and $qu \in A(D')$ (as $V(H_2) \sdom V(H_j)$, as $A'$ is a star-set).  Therefore,
$uvqu$ is a $3$-cycle in $D'$ and adding this $3$-cycle to $H_1$, $H_2$ and $H_j$ merges them, a contradiction to the 
minimality of $p$.

\2

{\em Case 2. There is no arc into $R_2$ in $D' - w_k w_{k+1}$.}  In this case consider $A''$, which consists of all the arcs in 
$A'$ except the arcs between $w_1$ and $V(H_2)$.  As there are at least two arcs between $w_1$ and $V(H_2)$ in $A'$ (as
$|V(H_2)| \geq 2$) we have $|A''| \leq |A'|-2 \leq (k+1)-2 = k-1$. 
Furthermore $w_k w_{k+1}$ is the only arc into $R_2$ in $D\setminus A''$, which implies that $D$ is at most $k$-arc-connected, a
contradiction.

\end{subproof}

\begin{claim}\label{claimE2}
There exists a $3$-cycle, say $x_1 x_2 x_3 x_1$, in $T$ such that $x_2 x_1 \in A(T)$ and $x_3 x_2 \not\in A(T)$ and
$x_1 x_3 \not\in A(T)$.
\end{claim}

\begin{subproof} If $|V(G)|=2$, then as $D'$ is strong and therefore also $T$, we note that $T$ consists of a $2$-cycle.  
However this is a contradiction to Claim~\ref{claimD2}. Therefore we may assume that $|V(G)|=|V(T)| \geq 3$. As by Claim~\ref{claimC2} $T$ does not contain 
an induced $3$-cycle, $G$ must contain a star $S$, and by Claim~\ref{claimD2} a vertex $x \in V(T) \setminus V(S)$.
Let $y$ be the center of the star $S$ (if $|E(S)|=1$ let $y \in V(S)$ be arbitrary) and without loss of generality assume that 
$yx \in A(T)$. Let $P=p_1p_2\ldots p_l$ be a shortest path from $x$ ($x=p_1$) to $y$ ($p_l=y$) in $T$. (Such a path exists because $T$ is strong.)
By the minimality of $l$ note that $y \sdom \{p_1,p_2,\ldots,p_{l-2}\}$ and as $x \not\in V(S)$ we have $l \geq 3$.

Therefore $c=p_{l-2} p_{l-1} p_l p_{l-2}$ is a $3$-cycle in $T$ and $p_{l-2} \not\in V(S)$ (as $p_{l-2} p_l \not\in A(T)$).
Therefore $p_l p_{l-2}$ is not an edge in $G$. If $p_{l-1} \in V(S)$ then $p_{l-2} p_{l-1} \not\in E(G)$ and if
$p_{l-1} \not\in S$ then $p_{l-1} p_l \not\in E(G)$. So in both cases $C$ has at most one arc belonging to a $2$-cycle.
By Claim~\ref{claimC2} we note that there is exactly one arc belonging to a $3$-cycle, thereby proving Claim~\ref{claimE2}.
\end{subproof}

One can now prove the theorem.
By Claim~\ref{claimE2}, we may let  $x_1 x_2 x_3 x_1$ be a $3$-cycle in $T$ such that $x_2 x_1 \in A(T)$ 
and $x_3 x_2 \not\in A(T)$ and $x_1 x_3 \not\in A(T)$.
By Lemma~\ref{lem:merge-star} either there is a vertex in $H_2$ that is not adjacent to any vertex in $H_1$ or there
is a vertex in $H_1$ that is not adjacent to any vertex in $H_2$. By reversing all arcs if necessary, we may assume that 
$x \in V(H_2)$ is not adjacent to any vertex of $H_1$.  This implies that $x^- \sdom V(H_1)$ and $V(H_1) \sdom x^+$,
by Lemma~\ref{lem:merge-star}. As $x_2 x_3, x_3 x_1 \in A(T)$ and $x_3 x_2, x_1 x_3 \not\in A(T)$ and $V(H_1) \sdom x^+$ 
it follows from Claim~\ref{claimB2}  that $H_1$, $H_2$ and $H_3$ can be merged, a contradiction.

This completes the proof. 
\end{proof}

 Since a set of at most two arcs always form a star-set, we have the following corollary.

\begin{corollary} 
\label{cor:k<=2}
Every $2$-arc-strong semicomplete digraph  has a spanning eulerian digraph which avoids any prescribed arc and every $3$-arc-strong semicomplete digraph  has a spanning eulerian digraph which avoids any set of two prescribed arcs.
\end{corollary}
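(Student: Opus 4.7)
The plan is to derive this corollary directly from Theorem~\ref{thm:star-set} by verifying that any set of at most two arcs in a semicomplete digraph forms a star-set. Recall that a star-set is a set of arcs whose underlying (multi)graph is a collection of stars (that is, a forest each of whose components is a star $K_{1,r}$ for some $r \geq 0$).

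For the first part of the corollary (the case $k=1$), I would simply observe that a single arc $a \in A(D)$ has an underlying graph consisting of a single edge, which is the star $K_{1,1}$. Hence $\{a\}$ is trivially a star-set of size $1$, and Theorem~\ref{thm:star-set} applied with $k=1$ yields a spanning eulerian subdigraph of $D$ avoiding $a$.

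For the second part (the case $k=2$), I would consider any two arcs $a_1, a_2 \in A(D)$ and analyze the underlying multigraph of $\{a_1, a_2\}$ by cases: either $a_1$ and $a_2$ share no endpoint (the underlying graph is two disjoint edges, a forest of two copies of $K_{1,1}$), or $a_1$ and $a_2$ share exactly one endpoint (the underlying graph is a path of length $2$, which is $K_{1,2}$), or $a_1$ and $a_2$ share both endpoints, i.e., they form a $2$-cycle (the underlying graph is a single edge $K_{1,1}$). In all three cases, the underlying graph is a disjoint union of stars, so $\{a_1, a_2\}$ is a star-set of size $2$, and Theorem~\ref{thm:star-set} applied with $k=2$ yields a spanning eulerian subdigraph of $D$ avoiding both $a_1$ and $a_2$.

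There is no substantive obstacle here: the entire content of the corollary is the trivial observation that any set of at most two arcs forms a star-set. Indeed, the excerpt itself already signals this by the phrase "Since a set of at most two arcs always form a star-set" preceding the corollary, so the proof amounts to little more than pointing to Theorem~\ref{thm:star-set}.
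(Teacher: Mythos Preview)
Your proposal is correct and follows exactly the paper's approach: the paper's entire proof is the single sentence ``Since a set of at most two arcs always form a star-set, we have the following corollary,'' and you have simply unpacked this observation with an explicit (and correct) case analysis before invoking Theorem~\ref{thm:star-set}.
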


\subsection{Avoiding three arcs}\label{subsec:k=3}

\begin{theorem} 
\label{thm:k=3}
Every $4$-arc-strong semicomplete digraph  has a spanning eulerian digraph which avoids any set of three prescribed arcs.
\end{theorem}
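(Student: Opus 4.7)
The plan is first to try to reduce to Theorem~\ref{thm:star-set}, and otherwise to follow the proof strategy of that theorem (eulerian factor plus merging) in the few residual cases. By Theorem~\ref{thm:star-set} applied with $k=3$, we may assume that $A'=\{a_1,a_2,a_3\}$ is not a star-set. Since $|A'|=3$, the underlying graph of $A'$ is then either a triangle $K_3$ or a path on four vertices, and in particular $A'$ contains no $2$-cycle.

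My first reduction is as follows. Suppose some arc $a=xy$ in $A'$ is such that $yx$ is also in $A(D)$. Then $B:=D\setminus\{a\}$ is still a semicomplete digraph, and it is $3$-arc-strong (since $D$ is $4$-arc-strong), while $A'\setminus\{a\}$ has size $2$ and is therefore automatically a star-set. By Theorem~\ref{thm:star-set} applied to $B$ with $k=2$, $B$ contains a spanning eulerian subdigraph avoiding $A'\setminus\{a\}$; since $a\notin A(B)$, this subdigraph also avoids $a$, and hence avoids all of $A'$. So we may further assume that no arc of $A'$ has its reverse in $D$. This reduces us to a short list of residual tournament-like patterns on $V(A')$: if the underlying graph is $K_3$, the induced sub-tournament on $V(A')$ is either a directed $3$-cycle or a transitive triangle; if it is a path on four vertices, the three path-edges are one-way in $D$ while the remaining three pairs inside $V(A')$ may be single arcs or $2$-cycles.

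For each residual pattern, set $D':=D\setminus A'$. The $4$-arc-strong hypothesis gives $D'$ strong, so Lemma~\ref{lem:Efactoravoid} yields an eulerian factor of $D'$; pick one, say $H$, with the minimum possible number $p$ of components. If $p=1$ we are done, so assume $p\geq 2$. As in the proof of Theorem~\ref{thm:star-set}, I would contract the components to obtain a semicomplete digraph $T$, apply Lemma~\ref{lem:merge} to it, and attempt to force a merge via a suitable $2$-cycle or $3$-cycle in $T$. The main obstacle is the directed-triangle subcase, where $u_1,u_2,u_3$ become an independent triple in $D'$ and a single component of $H$ may host a vertex with two non-neighbours in $D'$, so the clean structure of Lemma~\ref{lem:merge-star} no longer directly applies. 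To handle this I would replace that lemma by a tailored case analysis exploiting that each $u_i$ still has in-degree and out-degree at least $3$ in $D'$ and that $D$ admits four arc-disjoint paths between any two vertices, which together should force enough inter-component arcs to either yield a merge or contradict the $4$-arc-strong assumption on $D$.
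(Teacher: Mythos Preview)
Your initial reductions are fine and essentially match the paper: after invoking Theorem~\ref{thm:star-set} and your observation that any $a\in A'$ whose reverse is in $D$ can be handled by applying Theorem~\ref{thm:star-set} with $k=2$ to $D\setminus\{a\}$, the non-edge graph of $D'$ is exactly the underlying graph of $A'$, namely a triangle or a $P_4$. But from this point on the proposal is not a proof, and it also misidentifies which case is hard.

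The triangle case, which you call the ``main obstacle'', is in fact a one-line application of Theorem~\ref{thm:SMDsupereuler}: if $\{u_1,u_2,u_3\}$ is independent in $D'$, then $D'$ is a semicomplete multipartite digraph (with one part $\{u_1,u_2,u_3\}$ and the rest singletons); it is strong, and by Lemma~\ref{lem:Efactoravoid} it has an eulerian factor, so Theorem~\ref{thm:SMDsupereuler} gives a spanning eulerian subdigraph immediately. You never invoke this theorem, and your proposed ``tailored case analysis'' for the triangle is unnecessary.

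The actual work is entirely in the $P_4$ case, and here your sketch (``force a merge via a suitable $2$-cycle or $3$-cycle in $T$'', ``enough inter-component arcs'') is far too vague to count as a proof. In this case two vertices of $D'$ have two non-neighbours each, so Lemma~\ref{lem:merge-star} genuinely fails; the paper needs a detailed structural argument: it first pins down that $V(N)=\{u,u^+,v,v^+\}$ for specific vertices $v,v^+\in V(H_1)$, $u,u^+\in V(H_2)$, then reduces to $p=2$, then to $|V(H_2)|=2$, and finally uses Theorem~\ref{thm:spanning-trail} together with Menger's theorem and the $4$-arc-strong hypothesis to exhibit a spanning $(y,x)$-trail in $H_1$ that can be closed through $\{u,u^+\}$. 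None of these steps is routine, and nothing in your outline indicates how you would carry them out. As written, the proposal stops at the point where the real argument begins.
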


\begin{proof}

Let $D=(V,A)$ be a 4-arc-strong semicomplete digraph, let $F =\{a,a',a''\}\subset A$ be a set of three arcs and let $D'=D\setminus F$.
By Theorem \ref{thm:star-set} we may assume that 
the graph $N$ induced by non-edges of $D'$ is either a triangle or the path $P_4$ on four vertices. If $N$ is a triangle, then $D'$ is semicomplete multipartite and the claim  follows from Theorem \ref{thm:SMDsupereuler}, so the only remaining case is that $N$ is a $P_4$.\\

 By Lemma~\ref{lem:Efactoravoid}, $D'$ contains an eulerian factor.
Let  ${\cal E}$ be an eulerian factor of $D'$  with the minimum number of components. Let 
$H_1,H_2, \ldots , H_p$ be the components of ${\cal E}$, and for every $i\in [p]$ let $W_i$ be a closed spanning trail of $H_i$.
For the sake of contradiction, assume that $p>1$.
If $D'$ contains a cycle $C$ all of whose arcs go between different components of ${\cal E}$, then by adding the arcs of $C$ we obtain a better eulerian factor, contradicting the choice of ${\cal E}$. Hence we may assume w.l.o.g. that $H_1$ contains a vertex $v$ with no arc into it from any other $H_j$. As $D'$ is strong we can furthermore assume that the successor $v^+$ of $v$ on $W_1$ has an arc into it from another $H_j$ and by renumbering if necessary we can assume that there is a vertex $u$ of $H_2$ such that $uv^+$ is an arc of $D'$. Let $u^+$ be the successor of $u$ on $W_2$. Since $\induce{D'}{V(H_1)\cup{}V(H_2)}$ has no spanning closed trail it follows from Lemma \ref{lem:merge}~(c) that $v$ is non-adjacent to both $u$ and
$u^+$.

If $v^+$ and $u^+$ are adjacent in $D'$, then we must have $u^+v^+\in A(D')$ by Lemma \ref{lem:merge}~(b), and  now since $v$ dominates $V(H_2)-\{u,u^+\}$ we have $V(H_2)=\{u,u^+\}$ for otherwise the arcs $vu^{++},u^+v^+$ contradict Lemma~\ref{lem:merge}~(c). If $v^+$ and $u^+$ are not adjacent in $D'$, then $V(N)=\{u,u^+,v,v^+\}$.

Suppose first that $p>2$. It follows from the minimality of $p$ and the fact that $N$ is a $P_4$ that we must have $V(H_1)\mapsto V(H_3)\cup\ldots\cup{}V(H_p)$. Suppose there is an arc $zw\in A(D')$ from
$V(H_i)$ to $V(H_2)$ for some $i>2$. If $u^+v^+\in A(D')$, then $w\in \{u,u^+\}$ by the argument above and thus
$vzwv^+$ is a path in $D'$ which shows that $H_1,H_2,H_i$ can be replaced by one eulerian subdigraph, contradicting the choice of $\cal E$. So $u^+$ and $v^+$ must be non-adjacent as otherwise there is no arc entering $V(H_2)$, contradicting that $D'$ is strong. As remarked above , this means that  $V(N)=\{u,u^+,v,v^+\}$ and hence every vertex of $V(H_i)$ is adjacent to every vertex of $V(H_2)$ so by
Lemma \ref{lem:merge} and the choice of $\cal E$ we must have $V(H_i)\mapsto V(H_2)$. Now $v^+zuv^+$ is a 3-cycle in $D'$ which shows that we can merge $W_1,W_2,W_i$, contradicting the minimality of $p$. So we must have $p=2$.

Suppose first that $|V(H_2)|>2$. By the remark above, $V(N)=\{u,u^+,v,v^+\}$. Hence $v^+$ is adjacent to all vertices of $R=V(H_2)\setminus \{u,u^+\}$ and since $v$ dominates all of these, we also conclude from Lemma \ref{lem:merge} and the minimality of $p$ that $v^+\mapsto R$ and we see that
$V(H_1)\mapsto R$. Let $u^{++}$ be the successor of $u^+$ on $W_2$. If $H_2$ has a spanning $(u^{++},u)$-trail $T$ then we can insert $V(H_2)$ in $W_1$ by deleting the arc $vv^+$ and adding the arcs of the trail $vu^{++}T[u^{++},u]uv^+$, contradicting the minimality of $\cal E$. Thus there is no spanning $(u^{++},u)$-trail in $H_2$ and, by  Theorem \ref{thm:spanning-trail} and Menger's theorem, we can partition $V(H_2)$ into two sets $Z_1,Z_2$ such that $u^{++}\in Z_1,u\in Z_2$ and there is precisely one arc from $Z_1$ to $Z_2$ in $D_2$. But then there are at most three arcs leaving $Z_1$ in $D$, contradicting that $D$ is 4-arc-strong.

Henceforth $V(H_2)=\{u,u^+\}$. As $D$ is 4-arc-strong this implies that $|V(H_1)|>2$. 
Note that if $u^+v^+\in A(D')$, then we may assume, by renaming $u,u^+$ if necessary, that $u$ is adjacent to all vertices of $V(H_1)\setminus \{v,v^+\}$ in $D'$. This holds automatically if $V(N)=\{u,u^+,v,v^+\}$.

 If $uv^-$ is an arc of $D$ (and hence of $D'$), then it follows from Lemma \ref{lem:merge}~(b)
 that  $u\mapsto V(H_1)\setminus \{v\}$, contradicting that the in-degree of $u$ is at least 4 in $D$. 

Hence
$v^-u\in A(D')$. This implies that  either $u^+$ and $v^-$ are non-adjacent or $v^-\mapsto u^+$ by Lemma~\ref{lem:merge}~(b). As  $D$ is 4-arc-strong the vertex $u$ has at least two in-neighbours and two out-neighbours in $V(H_1)$ in $D'$. This and
the minimality of $p$ implies that there exist a vertex $w\in V(H_1)$ such that $u\mapsto Y$ and $X\mapsto u$, where $Y= V(W_1[v^+,w^-])$ and $X=V(W_1[w,v^-])$. It is easy to see that we also have $u^+\mapsto Y\setminus \{v^+\}$ and if $u^+$ is not adjacent to $v^-$ then $u^+\mapsto Y$. Now we conclude that $H_1$ has no spanning $(v^+,v^-)$-trail $T'$ as otherwise
either $uv^+T'[v^+,v^-]v^-u^+u$ or $u^+v^+T'[v^+,v^-]v^-uu^+$ would be a closed spanning trail of $D$. As $v^-v,vv^+\in A(D')$ and $v$ is adjacent to all vertices of $V(H_1)\setminus v$ and cannot be inserted in the trail $W_1[v^+,v^-]$, there exists a vertex $z\in V(H_1)\setminus v$ such that $v\mapsto W_1[v^+,z^-]$ and $W_1[z,v^-]\mapsto v$. By symmetry we can assume that $z\in X$ and thus $v\mapsto Y$.

As $D$ is 4-arc-strong, by Menger's theorem,
there are at least four arcs with tail in $Y$ and head in $V(D)\setminus Y$.
As we have $\{u,u^+\}\mapsto Y-v^+$, $u\mapsto v^+$ and  $v\sdom Y$ the head of at least three of those arcs must be in $X$.
Consequently, in $H_1$ there are at least three arcs with tail in $Y$ and head in $X$.
In particular, there are $y\in Y, x\in X$ such that $yx$ is not the arc $v^+v^-$ and there are two arc-disjoint $(y,x)$-paths in $H_1$. 
Thus by Theorem~\ref{thm:spanning-trail}, there exists a spanning $(y,x)$-trail $T_1$ in $H_1$.
Now  either $T_1[y,x]xu^+uy$  or $T_1[y,x]xuu^+y$ (or both) is a spanning eulerian trail of $D'$, a contradiction.

\end{proof}

 \section{Unavoidable arcs in semicomplete digraphs}\label{sec:unavoid}

Let $D$ be a strong semicomplete digraph with at least one cut-arc (so $\lambda{}(D)=1$)
 An arc $a$ is {\bf unavoidable} if it is contained in all spanning eulerian subdigraphs of $D$ (so $D\setminus a$ has no spanning closed trail).
 Observe that every cut-arc is unavoidable. 

The following is a direct consequence of Theorem \ref{thm:SMDsupereuler}. Note that if $D\setminus a$ is semicomplete then it has a hamiltonian cycle and we can find such a cycle in polynomial time in any semicomplete digraph.
\begin{corollary}\label{thm:algo}
There is a polynomial-time algorithm that, given a semicomplete digraph $D$ and an arc $a$, decides whether $a$ is unavoidable in $D$ and returns a spanning eulerian subdigraph avoiding $a$ when one exists.
\end{corollary}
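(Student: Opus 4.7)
The plan is to reduce the problem immediately to Theorem~\ref{thm:SMDsupereuler}. Write $a = uv$ and set $D' = D \setminus a$. Observe that $D'$ is always a semicomplete multipartite digraph: if $vu \in A(D)$ then $D'$ is still semicomplete (a degenerate multipartite digraph in which every partition class is a singleton), and if $vu \notin A(D)$ then $\{u,v\}$ is the unique pair of non-adjacent vertices in $D'$, so $D'$ is semicomplete multipartite with exactly one non-trivial partition class $\{u,v\}$.

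With this reduction in hand, the algorithm is essentially a single application of the result cited. First test whether $D'$ is strong; this is done in linear time via Tarjan's algorithm. If $D'$ is not strong, then $a$ is a cut-arc of $D$, so $a$ is unavoidable, and we output this answer. If $D'$ is strong, invoke the polynomial-time algorithm of Theorem~\ref{thm:SMDsupereuler} on $D'$: either it returns a spanning eulerian subdigraph $H$ of $D'$, which is a spanning eulerian subdigraph of $D$ avoiding $a$ (so $a$ is avoidable and we output $H$), or it certifies that $D'$ has no eulerian factor, in which case $D'$ has no spanning eulerian subdigraph at all, and $a$ is unavoidable.

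As the remark in the statement notes, in the sub-case $vu \in A(D)$ one does not even need Theorem~\ref{thm:SMDsupereuler}: a strong semicomplete $D'$ has a hamiltonian cycle by Camion's Theorem, and a standard constructive proof of Camion's Theorem (insert vertices one by one into a cycle of the previously processed vertices) yields such a hamiltonian cycle in polynomial time. So the only genuine appeal is to Theorem~\ref{thm:SMDsupereuler} in the case $vu \notin A(D)$, and there is no real obstacle beyond observing that $D'$ falls within its hypotheses.
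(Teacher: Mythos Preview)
Your argument is correct and matches the paper's intended proof: the paper simply states that the corollary is a direct consequence of Theorem~\ref{thm:SMDsupereuler}, together with the remark that when $D\setminus a$ is still semicomplete one can just find a hamiltonian cycle directly. Your write-up makes the reduction explicit (observing $D\setminus a$ is always semicomplete multipartite and handling the non-strong case separately), but the approach is identical.
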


We believe that Corollary \ref{thm:algo} can be generalized to the following.

\begin{conjecture}
  \label{conj:avoidkalg}
  For each fixed positive integer $k$, there exists a polynomial-time algorithm which, given a semicomplete digraph $D=(V,A)$ and $A'\subset A$ with $|A'|=k$, decides whether $D\setminus A'$ has a spanning eulerian subdigraph.
\end{conjecture}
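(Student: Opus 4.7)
The plan is to prove Conjecture~\ref{conj:avoidkalg} by enumerating, in polynomial time for fixed $k$, over a bounded number of candidate restrictions of a hypothetical spanning eulerian subdigraph $H$ of $D':=D\setminus A'$ to a small ``problematic'' region, and reducing each case to a prescribed-boundary variant of the spanning eulerian subdigraph problem on a semicomplete multipartite digraph, which I expect to solve by extending Theorem~\ref{thm:SMDsupereuler}.

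First I would observe that, because $|A'|=k$ and $D$ is semicomplete, the non-edges of $D'$ form a graph $N$ with at most $k$ edges (each non-edge requires at least one arc of $A'$). Let $X$ be the set of vertices incident to an edge of $N$; then $|X|\le 2k$, and every vertex of $V(D)\setminus X$ is adjacent in $D'$ to all other vertices. Since $|A(D'\langle X\rangle)|\le 2\binom{2k}{2}=O(k^2)$, there are at most $2^{O(k^2)}$ candidate subdigraphs $H_X$ on $X$. My algorithm enumerates over all such candidates for $H_X:=H\cap A(D'\langle X\rangle)$, and for fixed $k$ this yields a constant number of subcases.

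For a fixed candidate $H_X$, the completion $H':=H\setminus H_X$ must lie in $\widetilde{D}:=D'\setminus A(D'\langle X\rangle)$, which is semicomplete multipartite with respect to the partition $\mathcal{P}$ whose non-trivial classes are the connected components of $N$ and whose remaining classes are singletons $\{v\}$, $v\in V\setminus X$. The conditions for $H_X\cup H'$ to be a spanning eulerian subdigraph of $D'$ translate to: prescribed degree imbalances $\delta_v:=d^-_{H_X}(v)-d^+_{H_X}(v)$ at each $v\in X$, balanced nonzero degrees at each $v\in V\setminus X$, and strong connectivity of $H_X\cup H'$. Thus the sub-problem is a prescribed-boundary variant of the spanning eulerian subdigraph problem in the semicomplete multipartite digraph $\widetilde{D}$.

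The main technical obstacle will be showing this prescribed-boundary variant is polynomial-time solvable in a semicomplete multipartite digraph. The degree-imbalance and balancedness conditions fit into a circulation framework generalizing the proof of Theorem~\ref{thm:EfactoravoidX}: setting appropriate lower bounds on the split arcs $v^-v^+$ to force nonzero degree at each $v\in V\setminus X$, and introducing prescribed supplies/demands at the split vertices corresponding to $X$ to realize the $\delta_v$. The difficult part is ensuring strong connectivity of $H_X\cup H'$, which in the unconstrained semicomplete multipartite setting is handled by Bang-Jensen and Maddaloni via a merging procedure (Theorem~\ref{thm:SMDsupereuler}); adapting this procedure so that the merging preserves the prescribed restriction $H_X$ will likely require a detailed case analysis resembling Lemma~\ref{lem:merge}, and it is here that I expect the bulk of the technical work to lie.
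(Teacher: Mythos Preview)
The statement you are attempting is Conjecture~\ref{conj:avoidkalg}, which the paper explicitly leaves open; there is no proof in the paper to compare against. Your proposal is therefore an attack on an open problem rather than a re-derivation of something already established.

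Your reduction to a bounded ``problematic'' set $X$ is sound: the non-edge graph $N$ of $D'$ has at most $k$ edges, $|X|\le 2k$, and enumerating the $2^{O(k^2)}$ possible restrictions $H_X$ is legitimate for fixed $k$. One minor slip: after deleting \emph{all} arcs of $D'\langle X\rangle$, the whole of $X$ becomes independent in $\widetilde D$, not merely the connected components of $N$; so $\widetilde D$ is semicomplete multipartite with $X$ itself as one colour class. This does not damage the overall scheme.

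The genuine gap is exactly where you locate it, and it is essentially the entire content of the conjecture. For a fixed $H_X$ you must \emph{decide} whether some $H'\subseteq\widetilde D$ makes $H_X\cup H'$ spanning and eulerian. The circulation framework indeed handles the degree and imbalance constraints, but you offer no polynomial-time criterion for the connectivity of $H_X\cup H'$. Theorem~\ref{thm:SMDsupereuler} says that in a strong semicomplete multipartite digraph the existence of an eulerian factor already guarantees a spanning eulerian subdigraph; the analogue you would need is that the existence of a degree-feasible $H'$ (together with, say, strongness of $D'$) guarantees a degree-feasible $H'$ with $H_X\cup H'$ connected. There is no reason to expect this. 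The merging moves of Lemma~\ref{lem:merge} and of the proof behind Theorem~\ref{thm:SMDsupereuler} freely add and delete arcs between components, but once $H_X$ is frozen every arc of $D'$ with both ends in $X$ is either compulsory or forbidden in $H$, and a merge that would require flipping such an arc is unavailable. You give no argument that an alternative merge is always possible, nor any substitute connectivity test that runs in polynomial time. Until this boundary-prescribed strengthening of Theorem~\ref{thm:SMDsupereuler} is actually proved (or replaced by a different mechanism), the proposal remains a plausible plan rather than a proof.
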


The analogous conjecture for hamiltonian cycles was posed in \cite[Conjecture 7.4.14]{bang2009} and is still open for $k\geq 2$. For $k=1$ a polynomial-time algorithm follows from  \cite{bangJA13}.

\subsection{A classification of the set of unavoidable arcs}

In this subsection  we give a complete characterization of the pairs $(D,a)$ such that $D$ is a semicomplete digraph in which $a$ is an unavoidable arc.
We shall the following theorem.

\begin{theorem}\label{thm:charac-unavoid}
Let $D$ be a semicomplete digraph and let $uv \in A(D)$ be arbitrary and let $D'=D\setminus \{uv\}$.
If $D'$ is strong, then $D'$ contains a spanning eulerian subdigraph if and only if $V(D')$ cannot be 
partitioned into $R_1$, $R_2$ and $Y=\{u,v\}$ such that $Y$ is independent,
$\dXY{R_2}{Y}=0$, $\dXY{Y}{R_1}=0$ and $\dXY{R_2}{R_1}=1$.

\end{theorem}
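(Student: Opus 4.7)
My plan is to deduce Theorem~\ref{thm:charac-unavoid} from Theorem~\ref{thm:SMDsupereuler} (the Bang-Jensen--Maddaloni characterization of spanning eulerian subdigraphs in strong semicomplete multipartite digraphs) combined with Theorem~\ref{thm:EfactoravoidX}, which characterizes the existence of an eulerian factor in an arbitrary digraph. The key observation is that $D'=D\setminus\{uv\}$ is a semicomplete multipartite digraph whose only possible non-singleton part is $\{u,v\}$: if $vu\in A(D)$ then $D'$ is still semicomplete, and otherwise the unique non-edge of $D'$ is $\{u,v\}$.

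First I would dispose of the easy case $vu\in A(D)$: then $D'$ is a strong semicomplete digraph, hence has a hamiltonian cycle by Camion's theorem, giving a spanning eulerian subdigraph; simultaneously $\{u,v\}$ is not independent in $D'$ so no partition of the stated form can exist, and both sides of the equivalence hold trivially. The substantive case is $vu\notin A(D)$, in which $D'$ is a strong semicomplete multipartite digraph. Theorem~\ref{thm:SMDsupereuler} then says that $D'$ has a spanning eulerian subdigraph iff it has an eulerian factor, and Theorem~\ref{thm:EfactoravoidX} says the latter fails iff $V(D')$ admits a partition $(R_1,R_2,Y)$ with $Y$ independent, $\dXY{R_2}{Y}=\dXY{Y}{R_1}=0$ and $\dXY{R_2}{R_1}<|Y|$. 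So the whole task reduces to showing that any such obstructing partition in the present setting must have exactly $Y=\{u,v\}$ and $\dXY{R_2}{R_1}=1$.

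For this reduction I will use that an independent set in a semicomplete multipartite digraph lies inside a single part, so here $Y\subseteq\{u,v\}$. If $|Y|\le 1$, then $\dXY{R_2}{R_1}=0$, and combined with $\dXY{Y}{R_1}=0$ this would mean $R_1$ receives no arcs from $V(D')\setminus R_1$ in $D'$; strong connectivity of $D'$ then forces $R_1=\emptyset$, and a symmetric argument on in-arcs into $Y$ forces $R_2=\emptyset$, leaving $|V(D')|\le 1$, a contradiction. Hence $|Y|=2$, i.e.\ $Y=\{u,v\}$, and then $\dXY{R_2}{R_1}\in\{0,1\}$; the same strong-connectivity argument rules out $\dXY{R_2}{R_1}=0$, yielding $\dXY{R_2}{R_1}=1$.

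The converse direction is immediate: a partition of the exact form claimed in the theorem has $\dXY{R_2}{R_1}=1<2=|Y|$, so by Theorem~\ref{thm:EfactoravoidX} $D'$ has no eulerian factor, and hence by Theorem~\ref{thm:SMDsupereuler} no spanning eulerian subdigraph. I do not expect any real obstacle here; the only subtlety is the bookkeeping showing that strong connectivity of $D'$ eliminates degenerate partitions (the cases $|Y|\le 1$ and $\dXY{R_2}{R_1}=0$), which is where I would be most careful.
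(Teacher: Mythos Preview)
Your proposal is correct and follows essentially the same approach as the paper: both proofs observe that $D'$ is semicomplete multipartite, apply Theorem~\ref{thm:SMDsupereuler} to reduce to the existence of an eulerian factor, and then use Theorem~\ref{thm:EfactoravoidX} together with the strong connectivity of $D'$ and the fact that only one arc was removed to pin down $Y=\{u,v\}$ and $\dXY{R_2}{R_1}=1$. Your treatment is slightly more explicit (separating out the case $vu\in A(D)$ and the degenerate cases $|Y|\le 1$, $R_1=\emptyset$, $R_2=\emptyset$), but the argument is the same.
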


\begin{proof}
 Let $D'$ be defined as in the theorem.  If $D'$ can be partitioned into $R_1$, $R_2$ and $Y$ such that $Y$ is independent and
$\dXY{R_2}{Y}=0$ and $\dXY{Y}{R_1}=0$ and $\dXY{R_2}{R_1}<|Y|$, then we must have $Y=\{u,v\}$ since we only deleted one arc from a semicomplete digraph and $\dXY{R_2}{R_1}>0$ as $D'$ is strong. Now it follows from Theorem~\ref{thm:EfactoravoidX} that $D'$ contains no eulerian factor
 and therefore also no spanning eulerian subdigraph. So assume that $D'$ cannot be partitioned in this way, which
 by Theorem~\ref{thm:EfactoravoidX} implies that $D'$ contains an eulerian factor. $D'$ is clearly a semicomplete multipartite digraph so it follows from Theorem \ref{thm:SMDsupereuler} that $D'$ has a spanning eulerian subdigraph.

\end{proof}

\2

We first observe that the backward arcs with respect to a nice decomposition are unavoidable since they are cut-arcs.

 \begin{proposition}\label{backward-unavoid}
 Let $D$ be a strong semicomplete digraph  of order at least $4$  and let $(S_1, \dots , S_p)$ be a nice decomposition of $D$.
Every backward arc is unavoidable. 
 \end{proposition}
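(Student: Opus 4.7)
The plan is to argue that this is essentially immediate from the definitions. A nice decomposition is a $1$-decomposition in which the set of cut-arcs of $D$ coincides with the set of backward arcs, so every backward arc is a cut-arc. Therefore it suffices to verify the general fact already noted in the text: every cut-arc of a strong digraph is unavoidable.

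To justify that claim, I would recall that any eulerian digraph is strongly connected (a connected digraph in which $d^+(v)=d^-(v)$ for every vertex $v$ is automatically strong, since every closed walk can be decomposed into cycles). Hence any spanning eulerian subdigraph $E$ of $D$ is itself a spanning strong subdigraph of $D$. If a cut-arc $a$ were avoided by some spanning eulerian subdigraph $E$, then $E$ would be a spanning strong subdigraph of $D\setminus a$, contradicting the fact that $D\setminus a$ is not strong.

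Applying this to our situation, let $a=s_jt_j$ be a backward arc of $D$ in the given nice decomposition. By the definition of nice decomposition, $a$ is a cut-arc of $D$, so $D\setminus a$ is not strong. By the preceding paragraph, $a$ must belong to every spanning eulerian subdigraph of $D$, i.e.\ $a$ is unavoidable. This completes the proof.

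There is essentially no obstacle here: the proposition is a direct unpacking of definitions together with the elementary observation that eulerian digraphs are strong. The only thing worth being careful about is explicitly stating the ``eulerian implies strong'' step, which is not proved earlier in the paper but is standard.
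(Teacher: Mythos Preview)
Your proof is correct and follows exactly the same reasoning as the paper, which simply states the proposition as an observation (``backward arcs with respect to a nice decomposition are unavoidable since they are cut-arcs'') without further argument. Your additional justification that eulerian digraphs are strong, hence cut-arcs must lie in every spanning eulerian subdigraph, is precisely the unpacking the paper leaves implicit.
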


 If $D$ is a semicomplete digraph with vertex set $\{a,b,c,d\}$ such that $\{ab, bc, cd, ad, ca, db\}  \subseteq A(D) \subseteq \{ab, bc, cd, ad, ca, db, cb\}$, then
 the arc $ad$ is {\bf exceptional}. See Figure~\ref{fig:exceptional}.
  \begin{figure}[hbtp]
\begin{center}
\tikzstyle{vertexX}=[circle,draw, top color=gray!5, bottom color=gray!30, minimum size=16pt, scale=0.6, inner sep=0.5pt]
\tikzstyle{vertexY}=[circle,draw, top color=gray!5, bottom color=gray!30, minimum size=20pt, scale=0.7, inner sep=1.5pt]

\begin{tikzpicture}[scale=0.8]
\node (a1) at (0,0) [vertexX] {a};
\node (b1) at (2,0) [vertexX] {b};
 \node (c1) at (4,0) [vertexX] {c};
 \node (d1) at (6,0) [vertexX] {d};

\draw [->, line width=0.02cm] (a1) to (b1);
\draw [->, line width=0.02cm] (b1) to (c1);
\draw [->, line width=0.02cm] (c1) to (d1);
\draw [->, line width=0.02cm] (c1) to [out=-160, in=-20] (a1);
\draw [->, line width=0.02cm] (d1) to [out=-160, in=-20] (b1);
 \draw [->, line width=0.05cm, color=blue] (a1) to [out=20, in=160] (d1);

\node (a2) at (10,0) [vertexX] {a};
\node (b2) at (12,0) [vertexX] {b};
 \node (c2) at (14,0) [vertexX] {c};
 \node (d2) at (16,0) [vertexX] {d};

\draw [->, line width=0.02cm] (a2) to (b2);
\draw [->, line width=0.02cm] (b2) to (c2);
\draw [->, line width=0.02cm] (c2) to (d2);
\draw [->, line width=0.02cm] (c2) to [out=160, in=20] (b2);
\draw [->, line width=0.02cm] (c2) to [out=-160, in=-20] (a2);
\draw [->, line width=0.02cm] (d2) to [out=-160, in=-20] (b2);
 \draw [->, line width=0.05cm, color=blue] (a2) to [out=20, in=160] (d2); 
 
\end{tikzpicture}
\end{center}

  \caption{The two digraphs having an exceptional arc ($ad$ in thick blue).}\label{fig:exceptional}

  \end{figure}
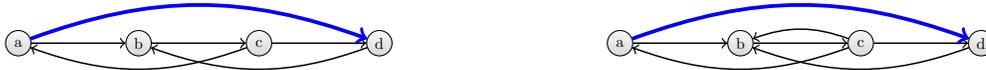

 Let $D$ be a semicomplete digraph  of order at least $4$ and let $(S_1, \dots , S_p)$ be a nice decomposition of $D$.
 A forward arc $uv$ is {\bf regular-compulsory} if there is an index $i$ such that $1<i<p-1$, $S_i=\{u\}$, $S_{i+1}=\{v\}$, and both $S_i$ and $S_{i+1}$ are ignored.
If $|S_i|=1$ for all $1\leq i\leq 3$, say $S_i=\{v_i\}$, and $v_2v_1\in A(D)$, $v_1v_2 \notin A(D)$, $v_3v_2 \notin A(D)$ and $N^-(v_3)=\{v_1, v_2\}$, then the arc $v_1v_3$ is {\bf left-compulsory}.
If $|S_i|=1$ for all $p-2\leq i\leq p$, say $S_i=\{v_i\}$, and $v_pv_{p-1}\in A(D)$, $v_{p-1}v_{p} \notin A(D)$, $v_{p-1}v_{p-2} \notin A(D)$, $N^+(v_{p-2})=\{v_{p-1}, v_p\}$, then $v_{p-2}v_{p}$ is {\bf right-compulsory}. See Figure~\ref{fig:compulsory}.

\begin{figure}[hbtp]
\begin{center}
\tikzstyle{vertexX}=[circle,draw, top color=gray!5, bottom color=gray!30, minimum size=16pt, scale=0.6, inner sep=0.5pt]
\tikzstyle{vertexY}=[circle,draw, top color=gray!5, bottom color=gray!30, minimum size=20pt, scale=0.7, inner sep=1.5pt]

\begin{tikzpicture}[scale=0.6]

 \draw (4,0.6) node {$S_1$};
 
  \draw (6,0.6) node {$S_2$};

  \draw (8,0.6) node {$S_3$};
 
  \draw (10,0.6) node {$S_4$};
 \draw [rounded corners] (9.3,1.2) rectangle (10.7,4.8);

 \draw (12,0.6) node {$S_5$};

 \draw (14,0.6) node {$S_6$};

 \draw (16,0.6) node {$S_7$};
 \draw [rounded corners] (15.3,1.2) rectangle (16.7,4.8);
 
  \draw (18,0.6) node {$S_8$};
\draw [rounded corners] (17.3,1.2) rectangle (18.7,4.8);
 
   \draw (20,0.6) node {$S_9$};
\draw [rounded corners, top color=gray!10, bottom color=gray!10] (19.3,1.2) rectangle (20.7,4.8);
 
 \draw (22,0.6) node {$S_{10}$};
 \draw [rounded corners] (21.3,1.2) rectangle (22.7,4.8);

\node (s1) at (22.0,4.2) [vertexX] {$s_1$};
\node (t1) at (16.0,4.2) [vertexX] {$t_1$};
 \node (s2) at (18.0,4.2) [vertexX] {$s_2$};
 \node (t2) at (10.0,4.2) [vertexX] {$t_2$};
 \node (s3) at (10.0,3.4) [vertexX] {$s_3$};
\node (v1) at (4.0,3) [vertexX] {$v_1$};
\node (v2) at (6.0,3) [vertexX] {$v_2$};
 \node (u) at (12.0,3) [vertexX] {$u$};
 \node (v) at (14.0,3) [vertexX] {$v$};
\node (v3) at (8.0,3) [vertexX] {$v_3$};

\draw [->, line width=0.02cm] (s1) to [out=150, in=30] (t1);
\draw [->, line width=0.02cm] (s2) to [out=150, in=30] (t2);
\draw [->, line width=0.02cm] (s3) to [out=160, in=20] (v2);
 \draw [->, line width=0.02cm] (v2) to (v1);
 \draw [->, line width=0.05cm, color=blue] (u) to (v);
 \draw [->, line width=0.05cm, color=blue] (v1) to [out=-20, in=-160] (v3);

\end{tikzpicture}
\end{center}
\caption{A nice decomposition of a strong semicomplete digraph with four backwards arcs (in thin black). The arc $uv$ is regular-compulsory. The arc $v_1v_3$ is left-compulsory.}\label{fig:compulsory}
\end{figure}
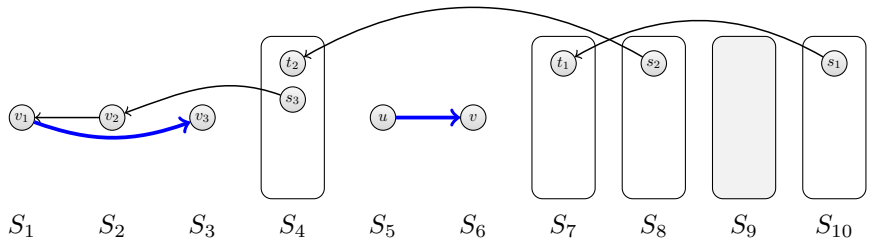

 \medskip

  \begin{theorem}\label{charac-unavoid}
 Let $D$ be a strong semicomplete digraph  of order at least $4$  and let $(S_1, \dots , S_p)$ be a nice decomposition of $D$.
 An arc is unavoidable if and only if it is either a cut-arc, regular-compulsory, left-compulsory, right-compulsory, or exceptional.
\end{theorem}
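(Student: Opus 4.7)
The proof has two implications, and I would address them in turn.

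For \emph{sufficiency}, each listed type is unavoidable by degree-counting in any candidate spanning eulerian subdigraph $E$. Cut-arcs are immediate. For a left-compulsory arc $v_1v_3$: Proposition~\ref{prop:nosame}(i) together with $|S_1|=1$ forces $N^-(v_1)=\{v_2\}$, so $d_E^+(v_1)=d_E^-(v_1)=1$; an analogous in-degree bound at $v_2$ (using $v_1v_2\notin A(D)$, $v_3v_2\notin A(D)$) forces $v_2$'s unique out-arc in $E$ to be the cut-arc $v_2v_1$, whence $v_2v_3\notin E$, and $N^-(v_3)=\{v_1,v_2\}$ yields $v_1v_3\in E$. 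Right-compulsory is symmetric. For a regular-compulsory arc $uv$ with $S_i=\{u\}$, $S_{i+1}=\{v\}$ both ignored, the single backward arc crossing either of the cuts $(\bigcup_{k\le i}S_k,\bigcup_{k>i}S_k)$ or $(\bigcup_{k\le i+1}S_k,\bigcup_{k>i+1}S_k)$ is the common cut-arc $s_jt_j$ straddling both sets; combining flow-balance in $E$ with the absence of backward arcs at $u,v$ (by the ignored condition) yields $d_E^+(u)=1$ with the out-arc forced into $S_{i+1}=\{v\}$, hence $uv\in E$. For the exceptional arc $ad$: the constraints $N^-(a)=N^-(d)=\{c\}$ and $d^+(b)=1$ pin any spanning eulerian subdigraph down to the $4$-cycle $adbca$, which contains $ad$.

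For \emph{necessity}, suppose $uv$ is unavoidable and not a cut-arc. Theorem~\ref{thm:charac-unavoid} yields a partition $V(D)=R_1\cup R_2\cup\{u,v\}$ with $vu\notin A(D)$, no arcs from $R_2$ to $\{u,v\}$ or from $\{u,v\}$ to $R_1$, and a unique $R_2\to R_1$ arc $st$, which is necessarily a cut-arc. Strong-connectivity of $D$ gives $R_1,R_2\ne\emptyset$. Fix a nice decomposition $(S_1,\dots,S_p)$. Every $v$-to-$u$ walk uses $st$, so $u$ and $v$ lie in distinct singleton components $S_{i_u}=\{u\}$, $S_{i_v}=\{v\}$ with $i_u<i_v$ ($uv$ is forward, not backward since it is not a cut-arc). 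Writing $L=\bigcup_{k<i_u}S_k$, $M=\bigcup_{i_u<k<i_v}S_k$, $N=\bigcup_{k>i_v}S_k$, Proposition~\ref{prop:nosame}(i) applied to the putative backward cut-arcs $w\to u$ (for $w\in R_1\cap(M\cup N)$) and $v\to w'$ (for $w'\in R_2\cap(L\cup M)$) yields $R_1\cap N=R_2\cap L=\emptyset$ (a vertex of $R_1\cap N$ would produce two cut-arcs with identical tail index, one to $u$ and one to $v$) and $|R_1\cap M|,|R_2\cap M|\le 1$, so $|M|\le 2$ and $i_v-i_u\le 3$.

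A case analysis on $|V|$ and $M$ then closes the proof. If $|V|=4$, then $M=\emptyset$ and $|R_1|=|R_2|=1$; enumerating the remaining adjacencies in the semicomplete $D$ matches the exceptional digraph with $uv=ad$. If $|V|\ge 5$ and $M=\emptyset$, then $i_v=i_u+1$ with $1<i_u<p-1$, and the cut-arc $st$ is the unique backward arc straddling both indices, so Proposition~\ref{prop:ordering}(i) certifies that $S_{i_u}$ and $S_{i_v}$ are both ignored, making $uv$ regular-compulsory. If $|V|\ge 5$ and $M\ne\emptyset$, each vertex of $M$ contributes an additional backward cut-arc incident to $u$ (if in $R_1$) or to $v$ (if in $R_2$); Proposition~\ref{prop:nosame}(i) applied to this extra arc together with $st$ forces either $L=\emptyset$ (so $i_u=1$, yielding the left-compulsory configuration with $v_1=u$, $v_2\in M\cap R_1$, $v_3=v$) or $N=\emptyset$ (symmetrically giving right-compulsory). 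The main obstacle is showing rigorously that every remaining configuration with $M\ne\emptyset$ and both $L,N\ne\emptyset$ (including the cases $|M|=2$ with two singletons or with a strong $S$-set of size $2$) produces either two cut-arcs with a common head or tail index, violating Proposition~\ref{prop:nosame}(i), or a pair of nested backward arcs, violating Proposition~\ref{prop:nosame}(ii), so that no such interior configuration survives and the classification is complete.
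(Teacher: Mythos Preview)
Your sufficiency argument is correct. The paper proves it differently, by exhibiting for each type the partition $(R_1,R_2,Y)$ of Theorem~\ref{thm:charac-unavoid} directly (e.g.\ $R_1=S_1\cup\cdots\cup S_{i-1}$, $R_2=S_{i+2}\cup\cdots\cup S_p$ for regular-compulsory), but your degree-counting route is equally valid and arguably more elementary.

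For necessity, both you and the paper start from Theorem~\ref{thm:charac-unavoid}, but the paper then takes a shortcut that avoids your positional case analysis entirely. From the partition it records that $N^+_{D'}(u)=N^+_{D'}(v)$, $N^-_{D'}(u)=N^-_{D'}(v)$, these two sets are disjoint, and only one arc goes from the first to the second. The working consequence is: \emph{if $D$ contains any $(u,v)$-path of length two, then $uv$ is avoidable}. Every step in the paper's necessity proof is an application of this single observation (e.g.\ $|S_u|>1$ immediately yields such a path or two arc-disjoint $(w,u)$-paths giving two $R_2\to R_1$ arcs; a backward arc $ru$ into $u$ with $i_u>1$ yields a length-two path via a vertex of lower index; etc.). This disposes of the flat case in one line and handles the forward case by a short split on whether $u$ has a backward in-arc and/or $v$ a backward out-arc, with no need to track $L,M,N$ or bound $|M|$.

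Your structural route is workable, but it is where the real gaps sit. First, the assertion ``$|V|=4\Rightarrow M=\emptyset$'' is false: with $u\in S_1$, $m\in S_2$, $v\in S_3$, $n\in S_4$ one has $M=\{m\}$ and the arc is simultaneously exceptional and left-compulsory. Second, when you conclude left-compulsory from $L=\emptyset$ you implicitly assume $R_2\cap M=\emptyset$; if some $m'\in R_2\cap M$ exists you have two intermediate singletons and must still derive a contradiction (it comes, but via the backward arc entering from $N$, not from Proposition~\ref{prop:nosame}(i) alone). Third, the case you flag as ``the main obstacle'' ($M\neq\emptyset$ with $L,N\neq\emptyset$) is exactly where your method becomes a multi-branch case analysis combining Proposition~\ref{prop:nosame}(i), (ii) and Proposition~\ref{prop:ordering}(i). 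It can be pushed through, but the length-two-path criterion collapses all of these cases at once; I would recommend switching to that device for the necessity direction.
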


\begin{proof}
 
 If an arc $ad$ is exceptional, then Theorem~\ref{thm:charac-unavoid} implies that it is unavoidable ($R_1=\{c\}$ and $R_2=\{b\}$).
If $v_1 v_3$ is left-compulsory, then, again by Theorem~\ref{thm:charac-unavoid}, $v_1v_3$ is unavoidable ($R_1 = \{v_2\}$ and $R_2 = V(D) \setminus \{v_1,v_2,v_3\}$).
Analogously, if $v_{p-2}v_p$ is right-compulsory then it is unavoidable.
If $u v$ is regular-compulsory, then again by Theorem~\ref{thm:charac-unavoid}, $uv$ is unavoidable  ($R_1 = S_1 \cup \cdots \cup S_{i-1}$ and $R_2 = S_{i+2} \cup S_p$). 
 
 \medskip
 
 Let us now prove the reciprocal: if $uv$ is not a cut-arc (backward arc) and not exceptional, left-compulsory, right-compulsory or regular-compulsory, then
 it is not unavoidable.

  Note that Theorem~\ref{thm:charac-unavoid} implies that if $D$ is a strong semicomplete digraph and $uv \in A(D)$ is not a cut-arc then the 
following holds, where $D'=D\setminus{}uv$.  The arc $uv$ is unavoidable if and only if $N_{D'}^+(u) = N_{D'}^+(v)$ and
$N_{D'}^-(u) = N_{D'}^-(v)$ and $N_{D'}^+(u)  \cap N_{D'}^-(u) = \emptyset$ and there is only one arc from
$N_{D'}^+(u)$ to $N_{D'}^-(u)$.  
In particular if there is a path of length two between $u$ and $v$ then $uv$ is not unavoidable (unless it is a cut-arc). We shall use this observation several times below.

First assume that $uv$ is an unavoidable forward arc where $u \in S_i$ and $v \in S_j$. If $|S_i|>1$, then let $w \in S_i$  be
an out-neighbour of $u$. If $wv \in A(D)$, then $uwv$ is a path of length $2$ so $uv$ is not unavoidable, a contradiction.  
So $vw \in A(D)$ and $vw$ is a backward arc. 
In $D$ there must be two arc-disjoint paths, say $P_1$ and $P_2$,
 from $w$ to $u$ as otherwise there would be a cut-arc separating $w$ from $u$ which,
as $S_i$ is strong, must belong to $S_i$, a contradiction. Therefore there must be at least two arcs from $N^+(u)$ (as $w \in N^+(u)$)
to $N^-(u)$, so $uv$ is not unavoidable, a contradiction.  So $|S_i|=1$ and analogously $|S_j|=1$. 

 Assume that there is a backward arc $ru$ into $u$, where $r \in S_k$. As $vru$ is not a path we note that $rv \in A(D)$ and
therefore $i<r<j$ (as $S_k$ cannot have two backward arcs out of it). Assume that $i>1$ and let $xy$ be a backward arc
from $S_i \cup \cdots \cup S_p$ to $S_1 \cup \cdots \cup S_{i-1}$. As backward arcs are not nested (Proposition \ref{prop:nosame}) we see that $x$ must belong to $S_i \cup \cdots \cup S_{k-1}$. If $x = u$ then $uyv$ is a path, a contradiction, so
$x \in S_{i+1} \cup \cdots \cup S_{k-1}$. This implies that $xv \in A(D)$ (as otherwise $ru$ wouldn't be a cut-arc) and $uxv$ is a path, 
a contradiction. Therefore $i=1$. Analogously if there is a backward arc out of $v$ then $j=p$.

Now assume that there is a backward arc $vx$ out of $v$ and a backward arc, $yu$ into $u$. Then $i=1$ and $j=p$.
Note that $x \not=y$ as otherwise $vxu$ is a path. If there is any vertex in $w \in S_2 \cup \cdots \cup S_{p-1} \setminus \{x,y\}$
then $uwv$ is a path, so $V(D) = \{u,v,x,y\}$ and it is easy to see that $uv$ is an exceptional arc.  

So now assume that $u$ has a backward arc, $yu$, into it and $v$ has no backward arc out of it. Then $i=1$ and 
$S_1=\{u\}$, $S_2=\{y\}$ and $S_3=\{v\}$ as otherwise we could find a path of length two from $u$ to $v$. It is now easy to see that
$uv$ is left-compulsory.  Analogously if there is a backward arc out of $v$ but no backward arc into $u$, then $uv$ is
right-compulsory.

Finally assume that there is no backward arc into $u$ and no backward arc out of $v$. In this case $j=i+1$ as otherwise it 
is easy to find a path of length two from $u$ to $v$. We now see that $uv$ must be regular-compulsory.

\2

The remaining case is that $uv$ is a flat arc and $u,v \in S_i$. 
Then there are two arc-disjoint paths from $v$ to $u$ in $D-uv$ as otherwise there 
would be a cut-arc in $S_i$. But this implies that there are at least two arcs from $N^+(v)$ to $N^-(v)$, 
implying that $uv$ is not unavoidable.
  \end{proof}

\section*{Acknowledgements}
This research was supported by the Danish research council under grant number DFF 7014-00037B and DISCO project, PICS, CNRS.


\end{document}

\jbj{
  \begin{theorem}\cite{bangJGT79}\label{thm:SMDsupereuler}
    A strong semicomplete multipartite digraph has a spanning eulerian subdigraph if and only if it is strong and has an eulerian factor.
  \end{theorem}

  The following direct consequence of Theorem \ref{thm:SMDsupereuler} implies that Conjecture \ref{Euleravoidkarcs} holds when the $k$ arcs we wish to avoid induce a semicomplete subdigraph of $D$.

  \begin{corollary}
    \label{cor:del-clique}
    Let $D$ be a  semicomplete digraph and let $F$ be a set of arcs of $D$ that  a semicomplete subdigraph of $D$. If $\lambda{}(D)>|F|$, then $D\setminus F$ has a spanning eulerian subdigraph.
    \end{corollary}